\newtheorem{theorem}{Theorem}
\newtheorem{result}{Result}
\newtheorem{assumption}[theorem]{Assumption}
\newtheorem{condition}[theorem]{Condition}
\newtheorem{adefinition}[theorem]{Definition}
\newtheorem{aexample}[theorem]{Example}
\newtheorem{lemma}[theorem]{Lemma}
\newtheorem{aproblem}[theorem]{Problem}
\newtheorem{acomment}[theorem]{Comment}
\newtheorem{aremark}[theorem]{Remark}
\newenvironment{remark}{\begin{aremark}\rm}{\end{aremark}}
\numberwithin{equation}{section} \numberwithin{theorem}{section}
\newenvironment{proof}[1][Proof]{\textbf{#1.} }{\ \rule{0.5em}{0.5em}}
\DeclareMathOperator{\Var}{{\bf Var}}
\DeclareMathOperator{\tr}{Tr}
\DeclareMathOperator{\dist}{dist}
\DeclareMathOperator{\supp}{supp}
\newcommand\e{\mathrm{e}}
\newcommand\E{\mathbf E}
\newcommand\Z{\mathbb Z}
\newcommand\C{\mathbb C}
\newcommand\N{\mathbb N}
\newcommand\eps{\varepsilon}
\newcommand{\trip}{\|\kern-.08em |}
\newcommand\beq{\begin{equation}}
\newcommand\eeq{\end{equation}}
\newcommand{\abs}[1]{\left\lvert #1 \right\rvert}
\newcommand{\set}[1]{\left\{ #1 \right\}}
\newcommand{\pa}[1]{\left( #1 \right)}
\newcommand\La{\Lambda}
\begin{document}

\title{Large Block Properties \\
of the Entanglement Entropy \\
of Free Disordered Fermions}
\author{A.Elgart \\
Department of Mathematics, Virginia Tech, \\Blacksburg, VA,
24061, USA\\
L.Pastur, M. Shcherbina\\
Mathematical Division, B.Verkin Institute \\ for Low Temperature
Physics and Engineering \\  Kharkiv, 61103, Ukraine}
\maketitle
\date

\begin{abstract}
We consider a macroscopic disordered system of free $d$-dimensional lattice fermions whose
one-body Hamiltonian is a Schr\"{o}dinger operator $H$ with ergodic
potential. We assume that the Fermi energy lies in the exponentially localized
part of the spectrum of $H$.
We prove that if $S_\Lambda$ is the entanglement entropy of a
lattice cube $\Lambda$ of side length $L$ of the system, then for any $d
\ge 1$ the expectation $\mathbf{ E}\{L^{-(d-1)}S_\Lambda\}$ has a finite limit as $L \to
\infty$ and we identify the  limit. Next, we prove that for $d=1$ the entanglement
entropy admits a well defined asymptotic form for all typical realizations (with
probability 1) as $ L \to \infty$. According to numerical results of [33]
the limit is not selfaveraging even for an i.i.d. potential. On the other
hand, we show that for $d \ge 2$ and an
i.i.d. random potential the variance of $L^{-(d-1)}S_\Lambda$
decays polynomially  as $L \to \infty$, i.e., the entanglement entropy is
selfaveraging.
\end{abstract}


\section{Introduction}

\label{s0}

Entanglement is a fundamental feature of quantum mechanics  manifested in
non-local, intrinsically quantum correlations between separated quantum
systems. Used first  by Einstein, Rosen and Podolsky  in 1935  to
demonstrate the incompleteness of quantum description, entanglement was  coined and explicitly defined by Schr\"{o}dinger shortly thereafter.  Nowadays  entanglement is  an
object of extensive studies ranging from general relativity, cosmology, and
foundation of quantum mechanics through quantum optics and quantum
statistical mechanics to quantum information and computation. Among the wide
variety of ideas, problems and results  concerning entanglement phenomena, there ia a considerable amount of those dealing with many body (macroscopic) systems, common in statistical mechanics and
condensed matter physics  (see the recent reviews \cite{Am-Co:08,Ca-Co:11,Ei-Co:11,La:15}). Consider a bipartite macroscopic quantum system $\mathfrak{S}$
\begin{equation}\label{SBE}
\mathfrak{S}=\mathfrak{B} \cup \mathfrak{E},
\end{equation}
consisting of a block $\mathfrak{B}$ and its "environment"
$\mathfrak{E}$. It is assumed that $\mathfrak{S}$ occupies a macroscopic domain
$\mathcal{D} \subset \mathbb{Z}^d$ of
characteristic size $N$,  $\mathfrak{B}$ occupies a subdomain
$\Lambda \subset \mathcal{D}$  of the characteristic size $L$,
and one is interested in the degree to which $\mathfrak{B}$ and $\mathfrak{E}$ are correlated in the asymptotic regime
\begin{equation}
1\ll L\ll N.  \label{bip}
\end{equation}
For a pure state $\rho$ of $\mathfrak{S}$,  a widely used measure of the corresponding correlations is
the von Neumann entropy, defined by
\begin{equation}
S_{\Lambda}=-\tr_{\Lambda}\,\rho _{\Lambda}\log _{2}\rho
_{\Lambda},  \label{ente}
\end{equation}%
where $\rho_\Lambda$ denotes the reduced density matrix associated with the block $\mathfrak{B}$.
One of the central problems  in the field is the determination of the  asymptotic behavior of the entanglement entropy
in the asymptotic regime \eqref{bip}. One usually  takes the macroscopic limit $N \rightarrow \infty
$ for $\mathfrak{S}$ first whenever it is possible, which reduces the problem  to finding the large $L$ asymptotics  of $S_{\Lambda}$ for a block of size $L$ of the infinite many body system.

It has been found in the recent decades that the large block asymptotics of the entanglement entropy \eqref{ente} may be unusual if $\rho$ is a ground state of the system, or, more generally, an eigenstate of the system. Namely, it was shown in several physics works that  the entanglement entropy can be asymptotically proportional to  the surface area $L^{d-1}$ of the block rather than  its
volume $L^{d}$  as $L \to \infty$. The latter (extensive) asymptotics is standard for thermal states in quantum
statistical mechanics  \cite{Ru:77}, while the former was found
first in cosmology and quantum field theory and later in other fields, and is
known as \textit{area law} \cite{Am-Co:08,Ca-Co:11,Ei-Co:11}. It
has also been found that the area law asymptotics is not always valid,
e.g., at quantum critical points of several one-dimensional translation
invariant quantum spin chains, for which the  entanglement entropy grows like $\log L$   rather than remaining bounded \cite{Ca-Co:11}.

More generally,  area law asymptotics $S_{\Lambda}\sim L^{d-1}$ for
the entanglement entropy are believed to be valid
for quantum systems with finite range interaction and a gap between the ground
state energy and the rest of the spectrum, while other asymptotics are possible for gapless systems. In particular, some systems that have a quantum phase transition may exhibit asymptotics of the form $S_\Lambda \sim L^{d-1}\log L$, \cite{Am-Co:08,La:15}. Determining whether an eigenstate of the system is spectrally isolated from the rest of the spectrum is  generally a daunting task that was undertaken mostly for certain one dimensional exactly solvable models. On the other
hand, there is a class of simpler models that can be either gapless or gapped and exhibit accordingly either type of aforementioned asymptotics, in any dimension.

Concretely, examples of quantum systems with this property are given  by 
quasifree fermions described by Hamiltonians that  are quadratic in the creation and
annihilation operators. Such systems arise in  condensed matter theory
and statistical mechanics (models describing electrons in metals, including superconductors, other models with mean field type approximations, exactly
solvable spin chains, etc.).

For these Hamiltonians with finite range and translation
invariant coefficients the large $L$ behavior of the entanglement entropy in a gapless case was initially studied in \cite{Gi-Kl:06,He-Co:09,Wo:06}, where either the upper and lower bounds of order $O(L^{d-1} \log L)$ for the entanglement entropy were obtained or the asymptotic formula of the same order of magnitude was proposed by using certain conjectures on the subleading term in the Szeg\H{o} theorem for T\"{o}plitz determinants with discontinuous symbols.
The precise asymptotic behavior for such systems was recently established rigorously in
\cite{Le-Co:13} by using rather sophisticated techniques of modern
operator theory \cite{So:10,So:13,So:15}.

All results mentioned above deal with  translation invariant systems. Following a widely
accepted paradigm in condensed matter physics, it is natural to consider a
disordered version of the free fermion model. To this end, one  replaces the translation
invariant coefficients of the fermionic quadratic Hamiltonian by random coefficients,
which are translation invariant in the mean and have
decaying statistical correlation, i.e.,\ ergodic. This is the  standard setting for the theory of disordered  systems \cite{LGP,Pa-Fi:78}.

The analysis of the many body quadratic Hamiltonian reduces to that of a certain one
body operator determined by the coefficients of the original Hamiltonian
\cite{Bo-Bo:96}. Thus, in the case
of random coefficients we obtain a problem in the  theory of one body
disordered systems, related to the phenomenon of Anderson localization.

Specifically, one can consider the case, where $\mathfrak{S}$ of \eqref{SBE} is the system of free fermions in their ground- (or just eigen-) state having a discrete Schr\"{o}dinger operator $H$
 with random (more generally
ergodic, see (\ref{DS}) and (\ref{erpot}) -- (\ref{verg})) potential as the  one body operator. It is known that the spectrum of $H$ is
non random and consists of  intervals $[E_{2j-1},
E_{2j}]$, for $ j=1,...,p$, referred to as bands. Moreover, in certain adjacent to band edges subintervals  $[E_{2j-1},E_{2j-1}^{\prime}]$ and $[E_{2j}^{\prime},E_{2j}]$, for $E_{2j-1}^{\prime} \le E_{2j}^{\prime}$ and some $j$'s,  the spectrum   (especially in the case of i.i.d. potentials)  is almost surely of pure point type and the corresponding eigenfunctions
are exponentially localized.
We will call these subintervals \emph{exponentially localized parts of the
spectrum} of $H$. The parts can be characterized by the exponential decay of the expectations of the off diagonal entries of various important spectral characteristics, e.g. the spectral projections of $H$ (see (\ref{b_P}) and (\ref{PEI})), fractional powers of the Green function of $H$ (see (\ref{dynloc}) -- (\ref{dynloc})), etc. \cite{Ai-Wa:15,Pa-Fi:92}.

It was shown rigorously in \cite{Pa-Sl:14} (see also related works \cite%
{Na-Co:15,St-Ab:15}) that if the Fermi energy $\mu$ lies in either the spectral
gap or the exponentially localized part of the
spectrum of $H$, then
the expectation $\mathbf{E}\{S_{\Lambda}\}$ of the
entanglement entropy $S_{\Lambda}$ of a lattice cube
$%
\Lambda $ of side length $L$ admits a two-sided bound of the form
\begin{equation}
C_{-}L^{d-1}\leq \mathbf{E}\{S_{\Lambda}\}\leq C_{+}L^{d-1},\;0<C_{-}\leq
C_{+}<\infty.  \label{scal}
\end{equation}%
The spectral gap case is fairly simple and follows from  general
principles of spectral theory, while the gapless case follows from the exponential decay of the expectation of the off diagonal matrix elements of the Fermi projection, one of fundamental results in the theory of localization.
For $d=1$ and $L\gg 1$,  the two-sided bound for the entanglement entropy for almost all realizations of disorder was also obtained in \cite{Pa-Sl:14}
and then was used to show numerically that the entanglement entropy of
one dimensional disordered lattice fermions is not selfaveraging, i.e., has non
vanishing random fluctuations even if $L\gg 1$.

In this paper we will assume that the Fermi energy $\mu$ lies in the
exponentially localized parts of the spectrum.
We first prove that in any dimension  there exists a "surface
macroscopic" limit \begin{equation}
\lim_{L\rightarrow \infty }L^{-(d-1)}\mathbf{E}\{S_{\Lambda}\}  \label{alo}
\end{equation}%
of the entanglement entropy per unit surface area of a cubic block
$%
\Lambda$ with a side length $L$. The limit  is  not zero and finite in view of  \eqref{scal}, see Result \ref{t:2} and Theorem
\ref{thm:splitinf} below.

In other
words, the entanglement entropy of disordered fermions satisfies
\emph{area law in the mean}.

We then show that for $d \ge 2$ the variance of $L^{-(d-1)}S_{\Lambda}$
vanishes polynomially fast in $L$
as $L \to \infty$, i.e.,\ that for $d \ge 2$ the
entanglement entropy of disordered lattice fermions is selfaveraging, see
Result \ref{t:var} and Theorem \ref{thm:varia}.

For $d=1$  we establish that $S_{\Lambda}$ has a well defined
asymptotic form as $L\rightarrow \infty $ for all typical realizations of
disorder (with probability $1)$, see Result \ref{t:1p1} and Theorem \ref{t:fp1} below.
 According to the numerical results of \cite{Pa-Sl:14}, the corresponding asymptotic expression  is a non trivial random variable, i.e., the entanglement entropy of disordered lattice fermions is not selfaveraging in the one dimensional case.

Note that the selfaveraging property, i.e., the disappearance of fluctuations of appropriately normalized extensive observables  in the macroscopic limit, is widely known in condensed matter theory and statistical mechanics of disordered systems \cite{LGP,Pa-Fi:78}. In entanglement studies the essentially analogous property is known as entanglement typicality (see e.g.\ the recent reviews \cite{Da-Co:14,La:15}). Entanglement typicality allows one to consider the entanglement characteristics, which are "typical", i.e., random with respect to a certain multivariate probability distribution, provided that the distribution is strongly peaked in a number of variables.

It is worth mentioning that in  a number of studied cases the multivariate probability distribution is chosen to be the normalized Haar measure of the multidimensional unitary group $U(N)$,
which is unfortunately not always easy to interpret physically. In particular, it is not simple to identify unambiguously the
physical dimension and size of the quantum system
in question.  Note that both quantities enter explicitly in the large
block asymptotics of the entanglement entropy (cf. \eqref{scal} and
\eqref{alo}).
On the other hand, the ground state  (or an eigenstate, more generally) of $N$
free disordered lattice fermions is just the Slater determinant of $N$ eigenfunctions
of a $d$-dimensional Schr\"{o}dinger operator with a random (or in greater generality ergodic) potential and $\Lambda$ is just a cubic block in $\mathbb{Z}^d$.
In this simple framework, we can explicitly study the various entanglement properties and characteristics of the free fermion system, establishing,  in particular,  that the entanglement entropy per unit surface is typical for $d \ge 2$ and is not typical for $d=1$.

The paper is organized as follows. In  Section 2, we outline the
setting and formulate our main results for the large block behavior of the
entanglement entropy of free disordered fermions whose one body operator
is a Schr\"{o}dinger operator with an ergodic potential. These
results are particular cases of assertions, valid for more
general quantities and a broader class of one body ergodic operators.  We formulate and prove these assertions in Section 3. The proofs rely  on a number of auxiliary facts, which are in turn proved in
Section 4. In Section 5 we present  our outlook and draw our conclusions.

Throughout the paper we will use the symbols $C,C_{1},c,c_{1},$ etc. for quantities
which may be different in
different expressions and whose value is not essential for the validity of
the corresponding formulas. For a set $\mathcal{C} \subset \mathbb{Z}^d$, we will denote by
$\mathcal{C}^c=\mathbb{Z}^d \setminus \mathcal{C} $ its complement, by
$|\mathcal{C}|$ its cardinality and by $\chi_{\mathcal{C}}$ its indicator function (that can be thought of as the   projection operator from $\ell^2(\mathbb{Z}^d)$ onto $\ell^2(\mathcal{C})$).

\section{Results}

In this section we present our main results on the large block behavior of the entanglement entropy of free disordered fermions whose one body operator
is a Schr\"{o}dinger operator with an ergodic potential (see Results \ref{t:expvan} -- \ref{t:var}). The results are corollaries of more general facts (see Theorems \ref{thm:apriori} -- \ref{thm:varia}), which are formulated and proved in Sections 3 and 4.

\label{s:1}

\subsection{Generalities}

Let $A$ be a bounded hermitian operator acting on $\ell^{2}(\mathbb{Z}^{d})$ and let $A(x,y)=\langle\delta_x,A\delta_y\rangle$  be its $(x,y)\in\Z^d\times\Z^d$ matrix elements. We consider a system of spinless lattice fermions confined to a finite
domain $\mathcal{D}\subset \mathbb{Z}^{d}$ and described by the
Hamiltonian
\begin{equation}
\mathcal{H_{\mathcal{D}}}=\sum_{x,y\in \mathcal{D}}A(x,y)c_{x}^{+}c_{y},
\label{HF}
\end{equation}%
quadratic in  the Fermi creation and
annihilation operators $c_{x}^{+},c_{x},\;x\in \mathcal{D}$. The prototypical example for $A$ that we will focus on in this paper is
\begin{equation}
A=H-\mu ,  \label{AHm}
\end{equation}%
where  $\mu $ is a parameter (the Fermi energy) to be chosen below and $H$ is a discrete Schr\"{o}dinger operator
\begin{equation}
H=-\Delta +V,  \label{DS}
\end{equation}%
acting on $\ell^{2}(\mathbb{Z}^{d})$. Here, $\Delta $ is the $d$-dimensional discrete
Laplacian%
\begin{equation}
(\Delta \psi )(x)=\sum_{|x-y|=1}\psi (y),\;x\in \mathbb{Z}^{d},  \label{dl}
\end{equation}%
and%
\begin{equation}
(V\psi )(x)=V(x)\psi (x),\;x\in \mathbb{Z}^{d}.  \label{pot}
\end{equation}%
is the potential.

Note that the definition \eqref{ente} of the entanglement entropy for bipartite fermionic systems (or, more generally, indistinguishable particles) does not exactly coincide with that for quantum systems of distinguishable quantum
entities (spins, qubits), which are often considered in  quantum information
theory and related studies of quantum spin chains \cite%
{Am-Co:08,Ca-Co:11,Ei-Co:11}. Indeed, according to \eqref{ente}, the entanglement
entropy is determined by the reduced density matrix $\rho _{\Lambda }$. In the distinguishable case, one usually represents the Hilbert state space of
the bipartite system    $\mathfrak{S}$ \eqref{SBE} as the tensor product $\mathcal{H}_{\mathfrak{S}}=%
\mathcal{H}_{\mathfrak{B}}\otimes \mathcal{H}_{\mathfrak{E}}$ of the Hilbert
state spaces of parties $\mathfrak{B}$ and $\mathfrak{E}$. This allows one to introduce the partial traces $\mathrm{Tr}_{\Lambda }$, $\mathrm{Tr}_{\mathcal{D}\setminus \Lambda }$,  $\mathrm{Tr}_{\mathcal{D}}:=\mathrm{Tr}_{\mathcal{D}\setminus \Lambda }\mathrm{%
Tr}_{\Lambda }$, and define $\rho _{\Lambda }=\mathrm{Tr}_{\mathcal{D}%
\setminus \Lambda }\; \rho _{\mathcal{D}}$. Clearly, we cannot proceed in the same fashion in the indistinguishable case. Instead we will use the
definition of the reduced density matrix which is common in quantum
statistical mechanics and identifies it as the corresponding quantum
correlation function. Namely, let $O_{\Lambda }$ be the (local) subalgebra
of the algebra $O_{\mathcal{D}}$ of observables of the whole bipartite
composite \eqref{SBE}, i.e.,\ $O_{\Lambda }$ is the set of all polynomials $%
\{\pi _{\Lambda }\}$ in the creation and annihilation Fermi operators
indexed by the points in the set $\Lambda$ occupied by the block $\mathfrak{B}$. We then define $\rho
_{\Lambda }$ via the relation $\mathrm{Tr\;}\pi _{\Lambda }\rho _{\mathcal{D}}=
\mathrm{Tr\;}\pi _{\Lambda }\rho _{\Lambda }$ required to hold for all $%
\pi _{\Lambda }\in O_{\Lambda }$. We refer the reader to the works \cite%
{Am-Co:08,Be-Co:14,Ei-Co:11,Ki-Co:14,Sh:04} for discussions of this definition and
recent  results. Let us note that in the commutative case
(i.e.,\ in the classical statistical mechanics and probability theory)  both
definitions coincide. In addition,  entanglement entropy for both definitions
possesses the important property $S_{\Lambda }=S_{D\setminus \Lambda }$. For
the first definition it is a direct consequence of the Schmidt
decomposition of hermitian matrices \cite{Am-Co:08}. For the second 
(algebraic) definition in the free fermion 
case considered in this paper, this property follows from
formulas \eqref{pic}, \eqref{sh0} and \eqref{Pires} below.

 Performing a fairly standard second quantization computation (see e.g.
\cite{St-Ab:15,Bo-Bo:96,Pe-In:09} for details and references),  one verifies that the entanglement entropy,  corresponding to the block $%
\Lambda $ of the free fermion system  that occupies a domain
$%
\mathcal{D} \subset \mathbb{Z}^d$,  is given by
\begin{equation}
S_{\Lambda}^{\mathcal{D}}=\mathrm{Tr\;}h(P_{\Lambda}^{\mathcal{D}}),
\label{SBD}
\end{equation}%
where
\begin{equation}
h(t)=-t\log _{2}t - (1-t)\log _{2}(1-t),\; t \in [0,1]  \label{h}
\end{equation}%
is a binary Shannon entropy,
\beq \label{pld}
P_{\Lambda}^{\mathcal{D}}=\chi_\La P^{\mathcal{D}}\chi_\La
\eeq
and $P^{\mathcal{D}}$  is the Fermi projection of the restriction $H_{\mathcal{D}}=\chi_{\mathcal{D}} H\chi_{\mathcal{D}}$ of the Schr\"{o}dinger operator $H$  given in \eqref{DS} to $\mathcal D$\footnote{It is important to recognize that $P^{\mathcal{D}}$ neither coincides with $\chi_{\mathcal{D}} P\chi_{\mathcal{D}}$ (where $P$ is given in \eqref{P}) nor is $P_{\Lambda}^{\mathcal{D}}$ is equal to $P_{\Lambda}$, since $H$ is not a  multiplication operator.}. Here and below we treat the indicator
$\chi_\La$ of $\Lambda$ as the orthogonal projection from $\ell^2(\Z^d)$ to $\ell^2(\La)$, thus $P_{\Lambda}^{\mathcal{D}}=\chi_\La P^{\mathcal{D}}\chi_\La$ is the restriction
of  $P^{\mathcal{D}}$ to $\Lambda$.  We recall that the Fermi projection $Q_\mu$ of the self-adjoint operator $K$ is its spectral projection-valued  measure $\mathcal{E}_{K}$, corresponding to the interval $(-\infty
,\mu ]$, i.e., $Q_\mu=\mathcal{E}_{K}((-\infty ,\mu ])=\chi_{(-\infty,\mu]}(K)$
and  $\mu$ is the Fermi energy.

We also note  that for a Hermitian operator $K$ on $\ell^2(\Z^d)$,  $f(K)$ is defined  by means of its spectral decomposition.

It is easy to show that $H_{\mathcal{D}}$ converges strongly to $H$ as $%
\mathcal{D} \nearrow \mathbb{Z}^d$, say in the van Hove sense \cite{Ru:77},
hence $P^{\mathcal{D}}$ converges strongly to the Fermi projection
\begin{equation}\label{P}
P=\mathcal{E}_{H}((-\infty
,\mu ])
\end{equation}
of $H$ provided that  $\mu$ is not its eigenvalue. Since $\Lambda$ is a finite set, $S_{\Lambda}^{\mathcal{D}}$ of \eqref%
{SBD} converges to
\begin{equation}  \label{SB}
S_{\Lambda}=\mathrm{Tr\,} h(P_{\Lambda}),
\end{equation}
where (cf. (\ref{pld}))
\begin{equation}  \label{PL}
P_{\Lambda}=\chi_\La P\chi_\La
\end{equation}
is the restriction of $P$ of \eqref{P} to $\La$.

We will assume in this paper that the potential \eqref{pot} is an ergodic
field in $\mathbb{Z}^{d}$. Recall that the field is defined by a measurable
function $v: \Omega \to \mathbb{R}$ on a probability space \ $(\Omega ,\mathcal{F},P)$ endowed with a measure preserving and ergodic group of transformations $\{T_{a}\}_{a\in
\mathbb{Z}^{d}}$ \cite{Ko-Co:81}:%
\beq
V(\omega)=\{V(x,\omega)\}_{x \in \mathbb{Z}^d}, \; V(x,\omega )=v(T_{x}\omega ), \; x \in \mathbb{Z}^d, \; \omega\in\Omega \label{erpot}
\eeq
or
\beq
V(x,T_{a}\omega )=V(x+a,w),\;v(\omega )=V(0,\omega ), \; a, x \in \mathbb{Z}^d, \; \omega \in \Omega.  \label{verg}
\eeq%
As a result, the whole operator defined by (\ref{DS})
and (\ref{erpot}) -- (\ref{verg})
\beq \label{DSe}
H(\omega)=-\Delta +V(\omega)=\{H(x,y,\omega )\}_{{x,y}\in \Z^{d}}
\eeq
is an ergodic operator (see \cite{Pa-Fi:92}), i.e., satisfies
the relation%
\begin{equation}
H(x,y,T_{a}\omega )=H(x+a,y+a,\omega ),\; a,x,y\in \mathbb{Z}^{d},
\;  \; \omega \in \Omega.
\label{eop}
\end{equation}%
A particular case of the ergodic  Schr\"{o}dinger operator whose potential is a collection of i.i.d. random variables is known as the Anderson model.
More generally, a random operator $A(\omega)=\{A(x,y,\omega)\}_{x,y \in \mathbb{Z}^d}$ in $\ell^2(%
\mathbb{Z}^d)$ is called ergodic if it satisfies \eqref{eop}.

It then  follows (see \cite{Pa-Fi:92}, Theorem 2.7) that the Fermi projection
\eqref{P} is an ergodic orthogonal projection $P(\omega)$, i.e., it is selfadjoint, $P^{2}(\omega)=P(\omega)$ and
\begin{align}\label{epr}
&P(\omega)=\{P(x,y,\omega )\}_{x,y\in \mathbb{Z}^{d}}, \; \omega \in \Omega.
\\& \hspace{-1cm}
P(x,y,T_{a}\omega)=P(x+a,y+a,\omega ), \; a,x,y\in \mathbb{Z}^{d}, \; \omega \in \Omega.
\notag
\end{align}%
In particular, for any collection $\{(x_{i},y_{i})\}_{j=1}^k$ of  points in $\mathbb{Z}^{d}\times \mathbb{Z}^{d}$ we have:
\begin{equation}
\mathbf{E}\Big\{\prod_{i=1}^{k}P(x_{i},y_{i})\Big\}=\mathbf{E}\Big\{%
\prod_{i=1}^{k}P(x_{i}+a,y_{i}+a)\Big\},\quad \forall a\in \mathbb{Z}^{d}.
\label{trans}
\end{equation}%
Here and below the symbol $\E\set{\ldots}$ denotes the expectation in the above probability space and we omit
the event variable $\omega \in \Omega$ in expectations.

We will also denote by $\{e_n\}_{n=1}^d$ the standard basis of $\Z^d$. It will be convenient to use the maximum norm $\|x\|_\infty=\max \left(\left|x_{1}\right|,\ldots ,\left|x_{d}\right|\right)$ for a vector $x=\pa{x_1,\ldots,x_d}\in\Z^d$. It induces the distance
\beq\label{eq:dist}
\dist(\mathcal C_1,\mathcal C_2)=\min_{x\in
\mathcal C_1,y\in \mathcal C_2}\|x-y\|_\infty
\eeq
and the boundary of $\mathcal C$, which is defined by
\beq\label{partialC}
\partial \mathcal C=\set{x\in \Z^d:\ \dist(x,\mathcal C^c)=1}.
\eeq
 We will confine ourselves to the case where
the block $\Lambda$ in \eqref{SBD} -- \eqref{SB} is a $d$-dimensional lattice cube
\begin{equation}
\Lambda_M =[-M,M]^{d}\cap \mathbb{Z}^{d},\;|\Lambda |=L^{d},\;L=2M+1,
\label{Lad}
\end{equation}
where $M$ is a positive integer. Note that $\La_M=\{x\in\Z^d:\ \|x\|_\infty\le M\}.$

\medskip
A basic ingredient in proofs of our results below is
the exponential decay of the  off diagonal matrix elements of the Fermi projection given in \eqref{P} and (\ref{epr}),
if the Fermi energy $\mu$ lies in the exponentially localized part of the spectrum of  $H(\omega)$. This is a central result in the spectral theory of Schr\"{o}dinger operators with ergodic potential. Concretely, it states that for such $\mu$  there exist $C_0< \infty$ and $\gamma>0$ such that
\begin{equation}
\mathbf{E}\{|P(x,y)|\}\leq C_0e^{-\gamma |x-y|}, \label{b_P}
\end{equation}%
for all $x,y \in \mathbb Z^d$. The bound is a manifestation of the  exponential localization for the
Schr\"{o}dinger operator in the neighborhood of $\mu$, i.e.,  the pure point  spectrum
with exponentially decaying  eigenfunctions.
\medskip

Let us list some of the well known cases in which the validity of \eqref{b_P} is established for the discrete ergodic
Schr\"{o}dinger operator $H(\omega)$ given by \eqref{DSe}:
\begin{enumerate}
\item[(a)] For i.i.d. random potentials, any $d\ge 1$ and any $\mu$ in the spectrum of $H(\omega)$
if the amplitude of the potential (see e.g. (\ref{VQ})) is large enough  and its
probability distribution $F$ is regular enough (see \eqref{hoc} for the precise
definition of the required regularity).
\item[(b)] For i.i.d. random potentials, any $d\ge 1$ and  $\mu$ sufficiently close
to the spectrum edges of $H(\omega)$ if the amplitude of the potential is fixed and
$F$ is regular enough. In particular, in this case the intervals of exponential
localization  could coexist with
intervals of extended states below or above the exponentially localized part of
spectrum, which contains $\mu$.
\item [(c)] For  i.i.d. random potentials, $d=1$, any $\mu$ in the spectrum
of $H(\omega)$, any $F$  that  is  not concentrated at a
 single  point and  any amplitude of the random potential.
\item [(d)] For an interesting and widely studied case  of a non-random ergodic
(quasiperiodic) potential, where  $ V(x,\omega)=2g \cos 2\pi(\alpha x + \omega), \; x \in \mathbb{Z}, \; \omega \in [0,1)$, with  Diophantine $\alpha$ and $g>1$ (the
  supercritical almost Mathieu operator with Diophantine frequency).
\end{enumerate}
In brief, the cases (a) -- (b) and (d) describe the exponential localization
either at high disorder or at extreme energies for a $d \ge 1$ dimensional
ergodic Schr\"{o}dinger operator. Let us mention that the bound \eqref{b_P} extends to a broader class of random operators with a more general than the discrete Laplacian  (\ref{dl}) "hopping" part and/or with correlated
random potentials. The bound is closely related to the analogous  bound \eqref{dynloc} for the fractional moments of the Green function of corresponding operator.
 It is also worth noting that (at least for i.i.d. potentials) the validity of \eqref{b_P} requires the exponential localization (e.g. \eqref{dynloc}) only in a neighborhood of the Fermi energy $\mu$  but not in the whole halfline $(-\infty,\mu]$. The corresponding
physical intuition is that at zero temperature only states close to the Fermi
energy determine the properties of the free Fermi gas and the corresponding
mathematical proof (at least for i.i.d. potentials) is based on the exponential decay \eqref{dynloc} of the Green function's fractional
moments of the operator in question. We refer the reader to the works
\cite{Ai-Gr:98,Ai-Co:01,Ai-Wa:15,Ge-Ta:12,Ji-Kr:13,Mi:96,St:11} for results and references on various aspects of the validity and applications of the
bound.

Although the operator $P(\omega)$ introduced in \eqref{P} is the Fermi projection of
the ergodic Schr\"{o}dinger operator $H(\omega) $ given by (\ref{DSe}),  a considerable
amount of our results can be formulated and proved independently of the origin of
$P(\omega)$. In particular, the results concerning the mean entanglement
entropy (see Results \ref{t:2} -- \ref{t:1p1}) are valid for {\it{any}} ergodic
orthogonal projection satisfying \eqref{epr} and \eqref{b_P}
(cf.\  Assumption \ref{assump}).   For instance, it can be the spectral projection
of $H$
\beq\label{PEI}
P (\omega)= \mathcal E_{H (\omega)}(I)
\eeq
where $I$ is a subset of the spectrum of $H(\omega)$ for which the exponential bound \eqref{b_P} holds.

Given an orthogonal projection $P = \set{P(x,y)}_{x,y\in\ell^2(\Z^d)}$ and the sets $\mathcal C_1\subset \Z^d$ and $\mathcal C_2\subset \Z^d$, consider the self-adjoint operator acting on $\ell^2(\mathcal C_1)$
\beq\label{picc}
\Pi_{\mathcal C_1,\mathcal C_2}:=\chi_{\mathcal C_1}P\chi_{\mathcal C_2}P\chi_{\mathcal C_1}
=\set{\Pi_{\mathcal C_1,\mathcal C_2}(x,y)}_{x,y\in\mathcal C_1};
\eeq
\beq\label{picc1}
\Pi_{\mathcal C_1,\mathcal C_2}(x,y)=\sum_{z\in\mathcal C_2}P(x,z)P(z,y),
\eeq
see Lemma \ref{l:picc} for its properties. We will often deal with a case  $\mathcal C_2={\mathcal C_1}^c$ (with $\mathcal C^c=\Z^d\setminus \mathcal C$), where we will use the shorthand notation
\beq\label{pic}
\Pi_{\mathcal C}=\Pi_{\mathcal C,{\mathcal C}^c}.
\eeq
We will also use the following change of variables for the function $h$ of \eqref{h}:%
\begin{equation}
h(t)=h_{0}(t(1-t)),\;t\in \lbrack 0,1].  \label{h0}
\end{equation}%
This relation defines $h_0: [0,1/4] \to \ [0,1]$ implicitly;  its explicit definition of and necessary
properties are given in Lemma \ref{l:h0}.

It follows then from
\eqref{pic}
that if
$P_{\Lambda}$ is defined in (\ref{PL}), then%
\beq\label{Pi1}
P_\La(\chi_\La-P_\La)=\Pi_\La
\eeq
and
\eqref{SB} and \eqref{h0} imply
\begin{equation}\label{sh0}
S_{\Lambda }=\tr h_{0}(\Pi _{\Lambda}).
\end{equation}%
We remark that the right hand side is well defined in view of Lemma \ref{l:h0} (ii).


\subsection{Results}
We start with a simple general observation asserting that  the
large block behavior of the entanglement entropy of the ergodic system is intrinsically
different from that of the thermodynamic entropy, which is extensive, i.e., asymptotically
proportional to the volume $L^{d}$ of the block $\Lambda_{M} $ defined in \eqref{Lad}
\cite{Pa-Fi:78,Ru:77}. In addition, the proof of the assertion shows the advantage of using
the  formula \eqref{sh0} rather than \eqref{SB}, since the former explicitly takes into
account the fact that the main
contribution to $S_{\Lambda_{M} }$ comes from a sufficiently thick layer adjacent
to   the surface of $\Lambda_{M} $ - a fact that is systematically used below.
This can be seen from the  decay of the matrix elements of the operator $\Pi _{\Lambda}$
(see \eqref{pic} with $\mathcal{C} = \Lambda$) away from the boundary of $\Lambda$.

Here we will use only the  slow decay of $P(x,y)$
required by the equality
\begin{equation}\label{sr}
\sum_{y\in \mathbb{Z}^{d}}|P(x,y)|^{2}=P(x,x)\leq 1,
\end{equation}
 which is valid for any orthogonal projection in $\ell^2\pa{\Z^d}$. In subsequent
assertions, we will use the exponential bound \eqref{b_P}, which will allow us
to establish a variety of asymptotic properties of $S_\Lambda$ as the size of
$\Lambda$ tends to infinity.

\begin{result}\label{t:expvan}
Let  $ S_{\Lambda }(\omega)$ be the entanglement entropy  \eqref{SB} of
disordered free lattice fermions whose one body Hamiltonian is a discrete
ergodic Schr\"{o}dinger operator $H(\omega)$ defined in  \eqref{DSe} and
\eqref{erpot} -- \eqref{verg} and let $P(\omega)$ be its
Fermi projection defined in \eqref{P} and \eqref{epr} -- \eqref{trans}.
Then, for $\Lambda=\Lambda_M:=[-M,M]^d \subset  \mathbb{Z}^d,\; L=2M+1$, we have:
\begin{equation}
\lim_{L\rightarrow \infty }L^{-d}\mathbf{E}\{S_{\Lambda_M }\}=0.  \label{enex}
\end{equation}
\end{result}
\begin{proof}
The assertion is a special case  of Theorem \ref{thm:apriori}  (with the choice $f = h_0$
there),  proven in the next section. This can be seen from \eqref{h}, \eqref{h0},
\eqref{sh0} and  Lemma \ref{l:h0} (v),
which implies that  $h_0$ satisfies Condition \ref{c:sob} for any $\alpha\in (0, 1)$.
\end{proof}                                                                                                        %

\medskip
To present in a compact form our results on the area law in the mean,
we will assume certain symmetry properties of  the ergodic potential given in (\ref{DS}), (\ref{pot}) and (\ref{erpot}) -- (\ref{verg})
(the general case is described in Remark \ref{remd}).

Assume that in addition to the  ergodic group $\{T_{a}\}_{a\in
\mathbb{Z}}$ introduced in \eqref{erpot} -- \eqref{epr}, the probability space
is endowed with the measure preserving transformation $R$ (reflection) such that
\beq\label{ref}
V(x,R\omega)=V(-x,\omega ),\quad x\in\Z^d,\quad \omega\in\Omega.
\eeq
For instance, this is the case for any random i.i.d. potential in any dimension  as well as
for quasiperiodic potentials  $V(x,\omega )=v(\alpha x+\omega )$, $x\in\Z$, $\omega\in [0,1)$, where $v:%
[0,1) \rightarrow \mathbb{R}$ is an even 1-periodic function, $%
\alpha $ is an irrational number and $\omega $ is uniformly distributed over
the one-dimensional torus $[0,1)$.

Assume also that there exists a collection of measure preserving
transformations $\{\Sigma _{\sigma}\}_{\sigma\in S_d} $ (permutations) of the probability
space that forms a representation of the symmetric group $S_{d}$  on $d$ symbols and such that%
\begin{equation}
V(x,\Sigma_\sigma \omega )=V(\sigma x,\omega ),\;x\in \mathbb{Z}^{d},\;\sigma \in
S_{d},\;\omega\in\Omega.  \label{per}
\end{equation}%
This property is valid in the case of i.i.d. potential 
in any dimension.

Since the $d$-dimensional discrete Laplacian commutes with the reflection $%
x\rightarrow -x$ and permutations of the components $x=(x_{1},...,x_{d})%
\rightarrow \sigma x=(x_{\sigma (1)},...,x_{\sigma (d)})$ of vectors in $%
\mathbb{Z}^{d}$, the Schr\"{o}dinger operator   (\ref{DSe}) and consequently its Fermi
projection (\ref{P}) also possesses these
properties (see Theorem 2.7 of \cite{Pa-Fi:92}):%
\beq\label{refp}
P(x,y,R\omega )=P(-x,-y,\omega ),\;x,y\in \mathbb{Z}^{d} 
, \;  \omega \in \Omega \eeq%
and%
\beq\label{perp}
P(x,y,\Sigma _{\sigma }\omega )=P(\sigma x,\sigma y,\omega ),\;x,y\in
\mathbb{Z}^{d}, \;  \omega \in \Omega .
\eeq

To formulate our second assertion, we introduce the following notation:
\beq\label{zpmd}
\Z_\pm=[0,\pm\infty)\cap \Z,\quad \Z_\pm^d=\Z_\pm\times\Z^{d-1}.
\eeq
We also remind the reader that $\{e_n\}_{n=1}^d$ stands for the standard basis of $\Z^d$.
\begin{result}\label{t:2}
Let  $ S_{\Lambda }(\omega)$ be the entanglement entropy  \eqref{SB} of
disordered free lattice fermions whose one body Hamiltonian is a discrete
ergodic Schr\"{o}dinger operator $H(\omega)$ defined in  \eqref{DSe} and
\eqref{erpot} -- \eqref{verg} and let $P(\omega)$ be its Fermi projection
defined in \eqref{P}, \eqref{epr}  -- \eqref{trans} and \eqref{refp} -- \eqref{perp}.
Assume that the Fermi energy $\mu$ lies in the exponentially localized
part of the spectrum of $H(\omega)$, i.e.,  the bound \eqref{b_P} holds.
Then, for $\Lambda=\Lambda_M:=[-M,M]^d \subset \mathbb{Z}^d, \; L=2M+1$
and $B(je_1,je_1)=\langle\delta_{je_1},B\delta_{je_1}\rangle$, we have:
 \begin{equation}
\lim_{L\rightarrow \infty }L^{-(d-1)}\mathbf{E}\{S_{\Lambda_M
}\}=2d\sum_{j\in\Z_+}\mathbf{E}\{(h(P_{\mathbb{Z}%
_{+}^{d}}))(je_1,je_1)\}<\infty,  \label{t2}
\end{equation}%
where $P_{\mathbb{ Z}^d_+}(\omega)=\chi_{\mathbb{ Z}^d_+}P(\omega)
\chi_{\mathbb{ Z}^d_+}$.

\end{result}
\begin{proof}
The assertion is a special case  of Theorem \ref{thm:splitinf}
(with the choice $f = h_0$ there),  proven in the next section.
This can be seen from \eqref{h}, \eqref{h0},  \eqref{sh0} and
Lemma \ref{l:h0} (v), which implies that $h_0$ satisfies
Condition \ref{c:sob} for any $\alpha\in (0, 1)$.

Note that the Fermi projection satisfies the bound (\ref{b_P}) under the conditions given
in items (a) --  (d) of the list below (\ref{b_P}) and the text below the list.
\end{proof}

\begin{remark}
\label{remd} We will present here the general form of Result \ref{t:2} where
we do not assume the symmetry properties \eqref{refp} -- \eqref{perp} of the underlying
ergodic projection. Denote%
\begin{equation*}
\mathbb{Z}_{s}^{d}(j)=\{x=(x_{1},...,x_{d})\in \mathbb{Z}^{d}:sx_{j}\geq
0\}, \; s=\pm.
\end{equation*}%
Then instead of  \eqref{t2} we have
\beq\label{t2g}
 \lim_{L\rightarrow \infty }L^{-(d-1)}\mathbf{E}\{S_{\Lambda_M
}\}=\sum_{j=1}^{d}\sum_{s=\pm }\sum_{x_{j}\in \mathbb{Z}^d_s(j)}
 \mathbf{E}\{(h(P_{\mathbb{Z}%
_{s}^{d}(j)}))(x_{j}e_j,x_je_j)\}.
\eeq%
\end{remark}
In particular,    in the one dimensional case, one has
\beq\label{lim1}
 \lim_{L\rightarrow \infty }\mathbf{E}\{ S_{\Lambda_{M}
}\}=\E\set{\tr h\pa{P_{\Z_-}}}+\E\set{\tr h\pa{P_{\Z_+}}},
\eeq
with $\mathbb{Z}_\pm$ defined in  \eqref{zpmd}.

\begin{remark}\label{r:trinv}
It is interesting to compare the above results with those in the translation
invariant case, where the operator $A$ in \eqref{HF} is a
convolution operator in $\ell^2(\mathbb{Z}^d)$:
\begin{equation}\label{conv}
A=\{A(x-y)\}_{x,y \in \mathbb{Z}^d}, \; \;\sum_{x \in \mathbb{Z}^d}
|A(x)| < \infty.
\end{equation}
In this case,  for $d=1$,
the Fermi projection (\ref{P}) is
\begin{equation}\label{pconv}
P(x,y)=\frac{\sin p(\mu) (x-y)}{\pi\pa{x-y}},
\end{equation}
where $p(\mu) \in [0, \pi]$ is the Fermi momentum, determined by  the
Fourier transform (symbol) of $\{A(x)\}_{ x\in \mathbb{Z}^d}$ (cf. \eqref{P}
and \eqref{PEI}). It follows
then from   \eqref{SB}, Lemma \ref{l:h0} (iv) and (\ref{pconv}) that for $d=1$ we have
\begin{align}\label{log1}
 S_{\Lambda_{M}} \ge 4 \sum_{|x|\le M, |y| > M} |P(x,y)|^{2} =
 \frac{4}{\pi^2}\log L +O(1), \; L \to \infty.
\end{align}
A similar argument  for $d > 1$ yields
\begin{align}\label{logd}
 S_{\Lambda_{M}}\ge C_d L^{d-1} \log L +O(L^{d-1}), \; L \to \infty.
\end{align}

The bounds \eqref{log1} -- \eqref{logd} provide a simple
manifestation of  logarithmic corrections to the area law in
translation invariant macroscopic systems
\cite{Ca-Co:11,Ei-Co:11,Gi-Kl:06,He-Co:09,Le-Co:13,So:13,Wo:06}.
Moreover, these bounds emphasize the difference between the translation
invariant and disordered cases.
\end{remark}

In the one dimensional case, it is possible not only to prove the
existence of the limit of the mean  entanglement entropy, i.e., to find
the leading term of the asymptotics of the mean entanglement entropy
as $L\rightarrow\infty$, but also to find the leading term for all
typical realizations, i.e., with probability $1$. This can be viewed
as the one dimensional version of the area law for typical realizations.
\begin{result}
\label{t:1p1}
Let  $ S_{\Lambda }(\omega)$ be the entanglement entropy  \eqref{SB} of
disordered free lattice fermions in dimension 1 whose one body Hamiltonian is a
discrete ergodic Schr\"{o}dinger operator $H(\omega)$ defined in  \eqref{DSe}
and \eqref{erpot} -- \eqref{verg} and let
$P(\omega)$ be its Fermi projection defined in \eqref{P} and
\eqref{epr} -- \eqref{trans}.  Assume that
the Fermi energy $\mu$ lies in the exponentially localized part of the
spectrum of $H(\omega)$, i.e., the bound \eqref{b_P} holds. Then, for
$\La=\La_M:=[-M,M] \subset \mathbb{Z}, \; L=2M+1$,  we have,
with probability 1:

\begin{equation}
S_{\Lambda }(\omega)=S_{+}(T_{M}\omega )+S_{-}(T_{-M}\omega
)+o(1),\;L:=(2M+1)\rightarrow \infty,   \label{sp1}
\end{equation}%
where
\begin{equation}
S_{\pm }(\omega)=\mathrm{Tr\ }h(P_{\mathbb{Z}_{\mp }}(\omega)),
\quad P_{\Z_\pm}(\omega)=\chi_{\Z_\pm}P(\omega)\chi_{\Z_\pm}, \label{Spm}
\end{equation}%
$\Z_\pm$  is defined in \eqref{zpmd}  and the shift ergodic transformations
$T_{\pm M}$ are defined in \eqref{erpot} -- \eqref{epr}.

The random variables  (\ref{Spm}) are finite and not  zero if the Fermi energy
$\mu$ in (\ref{P}) -- (\ref{SB}) lies strictly inside the spectrum (equivalently,
if $P$ is neither the zero operator nor the identity operator).

\end{result}
\begin{proof}
The assertions on the existence of a well-defined (i.e., finite with
probability $1$) 
asymptotics \eqref{sp1} -- \eqref{Spm} 
follow from those of Theorem \ref{t:fp1}  (with the choice $f=h_0$ there).
This can be seen from \eqref{h}, \eqref{h0},  \eqref{sh0}  and  Lemma \ref{l:h0} (v)
which implies that $h_0$ satisfies Condition \ref{c:sob} for any $\alpha\in (0, 1)$.

Let us prove that the random variables \eqref{Spm}
are not zero with probability $1$. Indeed, assume that $S_+ (\omega)= 0$ with
probability $1$ (the case $S_- (\omega)= 0$   can be considered analogously).
It follows from \eqref{Spm},
\eqref{h0}, \eqref{sh0} and Lemma \ref{l:h0} (ii) that $\Pi_{\Z_+} (\omega)= 0$,
with probability 1. Now, taking into account \eqref{picc} -- \eqref{picc1} and the
fact that $P(\omega)$ is hermitian and ergodic, we obtain that $P(x,y,\omega) = 0$
for $x\neq y$ with probability $1$, i.e., that the projection  $P(\omega)$ is diagonal:
$P (x, y,\omega) = P (x, x,\omega)\delta_{xy}$. Since $P(\omega)$ commutes with the
Schr\"{o}dinger operator \eqref{DSe}, we have $P(x,x,\omega) = P(x+1,x+1,\omega)$,
$\forall x  \in\Z$, i.e., $P(\omega) = p(\omega){\bf 1}_{\Z}$, where
$p(\omega) \in \set{0,1}$ and $p(\omega)=p(T_a\omega), \; \forall
a \in \mathbb{Z}$. Since the group $\{T_a\}_{a \in \mathbb{Z}}$ is ergodic,
i.e., has no invariant subsets in $\Omega$ except $\emptyset$ and $\Omega$,
$p$ is independent of $\omega$. Thus, $P(\omega)$ is either the zero operator
or the identity operator, contrary to our assumption that $P$ is a non-trivial
projection.

Note that the Fermi projection satisfies the bound (\ref{b_P}) under the conditions given
in items (a) --  (d) of the list below (\ref{b_P}) and the text below the list.
\end{proof}
%

\medskip

Note that the most studied class of  operators for which \eqref{b_P} holds
consists of Schr\"{o}dinger operators with i.i.d. potential or more generally,
potentials with sufficiently fast decay of statistical correlations,
see  \cite{Ai-Wa:15,Ge-Ta:12,St:11} and the items (a) -- (c) of the list after
formula (\ref{b_P}). However, the bound \eqref{b_P} also holds for one dimensional
Schr\"{o}dinger operators with quasiperiodic potentials (see, e.g., \cite{Ji-Kr:13}
and the item (d) of the list after  (\ref{b_P})), which have, so to speak, a minimal
amount of randomness. This shows that the hypotheses of Results \ref{t:2} and \ref{t:1p1}
above can be satisfied even for systems where statistical
correlations of the associated ergodic potential do not exhibit fast decay.
Our next assertion on  the power law decay (in $L$) of the variance  of the entanglement
entropy per unit area for $d \ge 2$, however,  does rely on independence of random
potentials in the corresponding discrete Schr\"{o}dinger operator (Anderson model).
\begin{result}\label{t:var}
Let  $ S_{\Lambda }(\omega)$ be the entanglement entropy  \eqref{SB} of
disordered free lattice fermions whose one body Hamiltonian is a discrete ergodic Schr\"odinger
operator $H(\omega)$ defined in  \eqref{DSe} and \eqref{erpot} -- \eqref{verg} and
let $P(\omega)$ be its Fermi projection defined \eqref{P} and \eqref{epr} -- \eqref{trans}.
Assume  that the potential  in $H(\omega)$ is a collection of i.i.d. random
variables such that their common probability distribution $F$ is uniformly
H\"{o}lder continuous:
\beq \label{hoc}
F ((v - \eps, v + \eps)) \le C |\eps|^\tau, \; \forall v \in \supp F, \; \eps > 0,\;
\tau \in (0,1],
\eeq
where $C$ is independent of $v$. Assume also that
the Fermi energy $\mu$ lies in the exponentially localized part of the
spectrum of $H(\omega)$, i.e., the bound \eqref{b_P} holds. Then,
for $\La=\La_M:=[-M,M]^d \subset \mathbb{Z}^d, \; L=2M+1$,
we have:
\begin{align}\label{eq:varmain}
&\Var\left\{L^{-(d-1)}\,S_{\La_M}\right\}:=
\E\set{\pa{L^{-(d-1)}\,S_{\La_M}}^2}-\pa{\E\set{ L^{-(d-1)}\,S_{\La_M}}}^2
\\& \hspace{4.5cm} \le C\pa{\log L}^{4d/(d+1)}L^{-2(d-1)/(d+1)},\; \; L \to \infty.
\notag
\end{align}
\end{result}
\begin{proof}
The assertion is a corollary  of Theorem \ref{thm:varia}  with the choice $f = h_0$ there
and Lemma \ref{l:iidpr}. This can be seen from \eqref{h}, \eqref{h0},  \eqref{sh0}
and  Lemma \ref{l:h0} (v) which implies that $h_0$ satisfies Condition \ref{c:sob}
for any $\alpha\in (0, 1)$.\end{proof}

Note that the Fermi projection satisfies the bound (\ref{b_P}) under the conditions given
in items (a) --  (c) of the list below (\ref{b_P}) and the text below the list.

\section{Proofs}
In this section we prove several assertions  that  are more general versions
of Results \ref{t:expvan} - \ref{t:var} of  the previous section. They are valid for a certain class of functions that includes the function $h$ described in \eqref{h} and for a
class of ergodic operators that includes  a Schr\"{o}dinger operator with an ergodic potential. Namely, the assertions of this section focus on the
large block behavior of the quantity
\beq\label{eq:FM}
F_\La(\omega)=\tr f\pa{\Pi_\La (\omega)}
\eeq
where $\Pi_\La ((\omega))$ is defined in \eqref{pic}, \eqref{Pi1} and Assumption
\ref{assump} below and $f$ satisfies
\begin{condition}\label{c:sob}
The function $f$ is supported on $[0,1/4]$ and $f\in C^2(0,1/4]$. Moreover, there exists $\alpha\in(0,1]$  such that
\[
\max_{k\in\{0,1,2\}}\sup_{x\neq0}\abs{f^{(k)}(x)}\abs{x}^{k-\alpha}<\infty.
\]
\end{condition}
Note that the right hand side in \eqref{eq:FM} is well defined in view
Lemma \ref{l:picc} (ii).

For a function $f$ satisfying Condition \ref{c:sob}, the bound
\beq\label{f}
\abs{f(x)}\le C\abs{x}^{\alpha}
\eeq
holds uniformly for all $x\in[0,1/4]$.

 We remark that the function $h_0$ of \eqref{h0}  belongs to this class for any $\alpha\in (0, 1)$, see Lemma \ref{l:h0}
(v).  Thus, the entanglement entropy \eqref{sh0} is a particular case of \eqref{eq:FM} -- \eqref{f}.

We will also consider  a more general class of ergodic orthogonal projections.
Namely, we will not assume below that the
orthogonal projection in \eqref{pic} and \eqref{sh0} is  the Fermi  projection \eqref{P} (or  the spectral projection \eqref{PEI})
of a Schr\"{o}dinger operator \eqref{DSe} with an ergodic potential.  Instead we will require the following properties of projections.
\begin{assumption}\label{assump}
The orthogonal projection $P(\omega)$
 is ergodic (i.e., satisfies \eqref{epr} -- \eqref{trans}) and
\beq\label{eq:assump}
\E\{\abs{P(x,y)}\}\le C\e^{-\gamma\abs{x-y}},\quad \forall \ x,y\in\Z^d,
\;\gamma>0, \; C < \infty.
\eeq
\end{assumption}
This assumption is satisfied for the Fermi projection (\ref{P}) of the
Schr\"{o}dinger
operator (\ref{DSe}) with ergodic potential (\ref{erpot}) -- (\ref{verg}) in situations described in the items (a) -- (d) of the list below \eqref{b_P}.

\medskip
The i.i.d. randomness requirement needed for Result \ref{t:var} is now replaced by
\begin{assumption}\label{assump1}
Suppose that $P(\omega)$ is an ergodic orthogonal projection satisfying Assumption \ref{assump} and, in addition, that for  any finite $\La\subset\Z^d$
there exists a random orthogonal projection $\widehat P_\La(\omega)=\{\widehat P_\La(x,y,\omega)\}_{x,y
\in \La}$ in $\ell^2(\La)$ with the following properties:
\begin{enumerate}
\item[(i)] Proximity to $P(\omega)$ away from the boundary of $\Lambda$:
\beq\label{eq:rSch}
\E\Big\{\abs{P(x,y)-\widehat P_\La(x,y)}\Big\}\le C\abs{\partial\La}\e^{-\widetilde \gamma R}, \; \widetilde{\gamma}>0, \; C <\infty
\eeq
for any $(x,y)\in\La\times\La$ that satisfies
\beq\label{ddl}
\dist\pa{\{x\},\partial\La}+\dist\pa{\{y\},\partial\La}\ge R,
\eeq see \eqref{eq:dist} -- \eqref{partialC} for the notation;

\item[(ii)] Statistical independence: For $\La\subset\Z^d$ and $\mathcal K,\mathcal L \subset \La$ consider a rigid motion (i.e., a composition
of  translation and rotation in $\Z^d$ ) $g$ that takes $\La$ to $\La _g$, $\mathcal K$ to $\mathcal K_g$, and $
\mathcal L$ to  $\mathcal L_g$. Then the  random variables
\beq\label{xih}
\xi(\omega)=\tr f\pa{ \chi_{\mathcal K} \widehat P_\La (\omega)\chi_{\mathcal L}\widehat P_\La (\omega)\chi_{\mathcal K}},\quad \xi_g(\omega)=\tr f\pa{ \chi_{\mathcal K_g} \widehat P_{\La_g}(\omega) \chi_{\mathcal L_g}\widehat P_{\La_g} (\omega)\chi_{\mathcal K_g}}
\eeq
are independent and identically distributed, as long as $\La\cap\La_g=\emptyset$.
\end{enumerate}
\end{assumption}
 In Lemma \ref{l:iidpr} we show that the conditions of Assumption \ref{assump1} are met for the discrete Schr\"{o}dinger operator (\ref{DS}) with an i.i.d.
potential (Anderson model). In this case $\widehat P_\La$ is  the Fermi projection for the restriction $H_\Lambda(\omega)$ of $H(\omega)$ to $\Lambda$.

We remark that for all but one  results in this section, we will use only Assumption \ref{assump}. 
\begin{theorem}\label{thm:apriori}
Let $F_{\La}(\omega)$ be defined by \eqref{eq:FM} and (\ref{picc}) -- (\ref{pic}),
where  $f$ satisfies Condition \ref{c:sob} and $P(\omega)$ is an ergodic projection (see  \eqref{epr} -- \eqref{trans}). Then, for $\Lambda=\Lambda_M:=[-M,M]^d \subset \mathbb{Z}^d,\; L=2M+1$, we have:
\[
\lim_{L\rightarrow\infty}L^{-d}\,\E\,\set{\abs{F_{\La_M}}}=0.
\]
\end{theorem}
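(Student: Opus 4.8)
The plan is to dominate $|F_{\La_M}|$ pathwise by a fractional power of the trace of $\Pi_{\La_M}$ and then to show that this trace is of smaller order than the volume $L^d$, using only the weak summability \eqref{sr} (note that the exponential bound \eqref{b_P} is \emph{not} assumed in this theorem). Since $f$ is supported on $[0,1/4]$ and $\Pi_\La$ is a self-adjoint operator on $\ell^2(\La)$ whose spectrum lies in $[0,1/4]$ (Lemma \ref{l:picc}), the operator $f(\Pi_\La)$ is well defined; writing its eigenvalues as $\{\lambda_i\}$, the bound \eqref{f} gives
\[
|F_\La| = \Big|\sum_i f(\lambda_i)\Big| \le \sum_i |f(\lambda_i)| \le C \sum_i \lambda_i^\alpha = C\,\tr\big(\Pi_\La^\alpha\big).
\]
Thus it suffices to prove $L^{-d}\,\E\{\tr(\Pi_{\La_M}^\alpha)\}\to 0$.

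The next step exploits concavity of $t\mapsto t^\alpha$ for $\alpha\in(0,1]$. Since $\Pi_\La$ acts on $\ell^2(\La)$ it has at most $N:=|\La|=L^d$ nonzero eigenvalues, so Jensen's inequality for the normalized counting measure on $\{\lambda_i\}$ yields $\tfrac1N\sum_i\lambda_i^\alpha \le \big(\tfrac1N\sum_i\lambda_i\big)^\alpha$, i.e. $\tr(\Pi_\La^\alpha)\le N^{1-\alpha}(\tr\Pi_\La)^\alpha = L^{d(1-\alpha)}(\tr\Pi_\La)^\alpha$. Taking expectations and applying Jensen once more, now in the form $\E\{X^\alpha\}\le(\E\{X\})^\alpha$ (again by concavity), I obtain
\[
L^{-d}\,\E\{|F_\La|\}\le C\,L^{-d}\,L^{d(1-\alpha)}\big(\E\{\tr\Pi_\La\}\big)^\alpha
= C\big(L^{-d}\,\E\{\tr\Pi_\La\}\big)^\alpha .
\]
Hence the whole statement reduces to showing $\eta(L):=L^{-d}\,\E\{\tr\Pi_{\La_M}\}\to 0$.

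To control $\E\{\tr\Pi_\La\}$ I use the explicit kernel \eqref{picc1}: as $P$ is Hermitian, $\tr\Pi_\La = \sum_{x\in\La}\sum_{z\in\La^c}|P(x,z)|^2$. By translation invariance in the mean \eqref{trans} one has $\E\{|P(x,z)|^2\}=g(z-x)$ with $g(w):=\E\{|P(0,w)|^2\}$, and \eqref{sr} gives $\sum_{w\in\Z^d}g(w)=\E\{P(0,0)\}\le 1$; in particular the tail $\tau(r):=\sum_{\|w\|_\infty\ge r}g(w)$ tends to $0$ as $r\to\infty$. For $x\in\La$ one has $x+w\in\La^c$ only when $\|w\|_\infty\ge\dist(x,\La^c)$, so $\sum_{z\in\La^c}g(z-x)\le\tau\big(\dist(x,\La^c)\big)$. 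Splitting $\La$ into the boundary layer $\{x:\dist(x,\La^c)\le r_0\}$, which contains $O(r_0L^{d-1})$ points, and its complement, which contributes at most $\tau(r_0)$ per point, yields $\E\{\tr\Pi_\La\}\le C r_0 L^{d-1}+\tau(r_0)L^d$. Therefore $\eta(L)\le C r_0/L+\tau(r_0)$; letting $L\to\infty$ and then $r_0\to\infty$ shows $\eta(L)\to0$, and the theorem follows.

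I expect the only genuinely delicate point to be this last paragraph: extracting $o(L^d)$ behaviour from the mere summability \eqref{sr}, with no rate, forces the two-scale decomposition into a boundary shell and a bulk (rather than a single summation bound), together with the interchange of expectation and summation justified by absolute convergence. Everything else is soft, namely the pointwise domination by $\tr(\Pi_\La^\alpha)$ via \eqref{f} and the two applications of concavity (the power-mean inequality over the eigenvalues and Jensen's inequality over $\omega$).
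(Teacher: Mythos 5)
Your proof is correct, and it reaches the same final counting as the paper's (a boundary shell of thickness $r_0$ contributing $O(r_0 L^{d-1})$ plus a bulk contributing $o(1)$ per site via the tail of $\sum_y \E\{|P(x,y)|^2\}$), but the way you handle the fractional power $\alpha$ is genuinely different. The paper works site by site on the diagonal: it proves $\E\{|f(\Pi_{\La_M})(x,x)|\}\le C\big(\sum_{y\in\La_M^c}\E\{|P(x,y)|^2\}\big)^\alpha$ using the matrix Jensen inequality (Lemma \ref{l:pieq}) followed by H\"older's inequality for expectations, and then splits the sum $\sum_x(\cdot)^\alpha$ into shell and bulk; this local estimate is exactly the mechanism that, combined later with the exponential bound \eqref{b_P}, becomes Lemma \ref{lem:keytech}, the workhorse of the rest of the paper. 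You instead pull the power out globally: $|F_\La|\le C\tr(\Pi_\La^\alpha)$ pathwise, then the power-mean inequality over the $L^d$ eigenvalues gives $\tr(\Pi_\La^\alpha)\le L^{d(1-\alpha)}(\tr\Pi_\La)^\alpha$, and Jensen over $\omega$ reduces the whole theorem to the single power-free statement $L^{-d}\,\E\{\tr\Pi_{\La_M}\}\to 0$, which you then verify by the shell/bulk split applied to the explicit quadratic kernel $\sum_{x\in\La}\sum_{z\in\La^c}\E\{|P(x,z)|^2\}$. Your route is more elementary (no operator Jensen inequality, no fractional powers inside the spatial sum, and a cleaner reduction), and the factor $L^{d(1-\alpha)}$ is exactly compensated by the normalization; the trade-off is that averaging over all eigenvalues discards spatial resolution, so unlike the paper's per-site bound your argument does not generalize to the localized decay estimates (exponential in $\dist(x,\partial\La)$) that the subsequent theorems require. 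All the individual steps check out: the use of \eqref{trans} to write $\E\{|P(x,z)|^2\}=g(z-x)$ is legitimate since $|P(x,z)|^2=P(x,z)P(z,x)$ by Hermiticity, the interchanges of sum and expectation are justified by nonnegativity, and Lemma \ref{l:picc}~(i)--(ii) indeed places the spectrum of $\Pi_\La$ in $[0,1/4]$.
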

\begin{proof}
Fix $x\in\Z^d$ and set
\beq\label{gR}
u(R)=\E \,\Big\{ \sum_{y\in\Z^d:\; |x-y|\ge R}\abs{P(x,y)}^2\Big\}.
\eeq
Clearly, $u$ is  monotone decreasing and is independent of $x$ by ergodicity of $P(\omega)$, see (\ref{trans}). Moreover, we have from \eqref{sr}:
\beq\label{eq:gbeh}
u(R)\le 1,\quad \lim_{R\rightarrow \infty}u(R)=0.
\eeq
By  \eqref{eq:decf} (with $m = 1$)
\[\abs{f\pa{\Pi_{\La_M}}(x,x,\omega)}\le C\Big\{\sum_{y\in \La_M^c}\abs{P(x,y,\omega)}^{2}\Big\}^\alpha,\]
so using the H\"{o}lder inequality for expectations  we get
\beq\label{efmp}
\E\set{\abs{f\pa{\Pi_{\La_M}}(x,x)}} \le C \Big(\sum_{y\in \La_M^c}\E \{\abs{P(x,y)}^{2}\}\Big)^\alpha.
\eeq
This and \eqref{eq:FM} imply
\begin{align*}
\E\set{\abs{F_{\La_M}}}\le \sum_{x\in \La_M}\E\set{\abs{f\pa{\Pi_{\La_M}}(x,x)}}
 \le C  \sum_{x\in \La_M}\Big(\sum_{y \in \La_{M^c}} \E\{\abs{P(x,y)}^{2}\}
\Big)^\alpha.
\end{align*}
Now, we choose a positive integer $\ell<M$ whose value we will set later and split the sum over $x\in\La_M$ on the right hand side into $\Sigma_1+\Sigma_2$, where
$\Sigma_1$ is the sum over  $x\in \La_{M-\ell}$  and
$\Sigma_2$ is the sum over  $x\in\La_M\setminus \La_{M-\ell}$.

If $x\in \La_{M-\ell}$, we have
\[\sum_{y\in \La_M^c}\E\set{\abs{P(x,y)}^{2}}\le
\sum_{y\in\Z^d: \; \abs{x-y}\le\ell}\E\set{\abs{P(x,y)}^{2}},\]
so it follows from \eqref{gR} that $
\Sigma_1\le CL^du^\alpha(\ell)$, for $\quad L=2M+1$.
On the other hand, in view of \eqref{sr}, the right hand side of \eqref{efmp} is bounded by $C$, so
$
\Sigma_2\le C\ell L^{d-1}.
$
Combining the two last bounds, we obtain
\[L^{-d}\,\E\,\set{\abs{F_{\La_M}}}\le C(u^\alpha(\ell)+\ell/L).\]
The choice  $\ell \to \infty, \; \ell=o(L)$  as $L\rightarrow\infty$ gives the desired result, thanks to \eqref{eq:gbeh}.
\end{proof}

\medskip
Next, we prove that the limit of the ratio of the "generalized entanglement entropy" \eqref{eq:FM} to the surface area $L^{d-1}$ of $\La_M$ exists and is finite. To avoid cumbersome formulas, we will
again confine ourselves to the case of ergodic projections satisfying \eqref{refp} -- \eqref{perp}.

We will need the following collection of subsets  of $\mathbb{Z}^d$ for $d
\ge 2$.
Let
\beq\label{fac}
\mathcal F_M^{(n)}=\begin{cases} \set{x\in\La_M:\ x\cdot e_n=M}, &1\le n\le d, \\
\set{x\in\La_M:\ x\cdot e_{|n|}=-M} & -d\le n\le -1 \end{cases}
\eeq
be the faces of the cube $\La_M$ defined in  \eqref{Lad}. Fix  $\ell\in\N$ and consider the truncation $\widehat {\mathcal F}_M^{(n)}$ of $\mathcal F_M^{(n)}$, defined by removing points that are close to the edges of $\mathcal{F}_M^{(n)}$:
\beq\label{fach}
\widehat {\mathcal F}_M^{(n)}=\left\{x\in {\mathcal F}_M^{(n)}:\min _{k\neq n}\dist\pa{x,{\mathcal F}_M^{(k)}}\ge 3\ell\right\}.
\eeq
Furthermore, we define the surface layers ${\mathcal B}_M^{(n)}$ in $\La_M$ to be the $\ell$-neighborhoods of $\widehat {\mathcal F}_M^{(n)}$, for $n=\pm
1,...,\pm d$:
\beq\label{eq:bufferset}
{\mathcal B}_M^{(n)}=\left\{x\in \La_M:\ \dist\pa{x,\widehat {\mathcal F}_M^{(n)}}< \ell\right\}, \; n=\pm 1, \pm2,..,\pm d.
\eeq
By construction, we have
\beq\label{eq:buffer}
\dist\pa{\mathcal B_M^{(n)},\mathcal B_M^{(k)}}\ge\ell \mbox{ for } n\neq k.
\eeq
Note also that the sets $\mathcal B_M^{(n)}$  are  rectangular lattice  prisms generated by rigid motion (i.e., by a collection of lattice translations and rotations) of the lattice prism
\beq\label{eq:last}
\mathcal{L}_{\ell,M}= \pa{[0,\ell)\times[-M+2\ell,M-2\ell]^{d-1}}\cap \Z^d.
\eeq
For $d=1$, we will use the following sets instead of those of above:
$\mathcal F_M^{(\pm 1)}=\widehat{\mathcal{F}}_M^{(\pm 1)}=\{\pm M\}$ and
\beq\label{cb1}
\mathcal{B}_M^{(\pm 1)}=[\pm M, \pm (M- \ell)].
\eeq
Notice that $\mathcal{B}_M^{(\pm 1)}$ are the translations of the lattice interval (cf.\ \eqref{eq:last})
\beq\label{eq:last1}
\mathcal{L}_{\ell}=[0,\ell]\cap \Z^d.
\eeq
We will also need the lattice halfspaces   $\mathbb{Z}^d_{+,n}$, for $ n=\pm 1,...,\pm d$, which
are rigid motions of the lattice halfspace  $\mathbb{Z}^d_{+}$ of  \eqref{zpmd} such that $\La_M \subset \mathbb{Z}^d_{+,n} $
and $\mathcal{F}^{(n)} \subset \partial\mathbb{Z}^d_{+,n}$. In particular, if we set
\beq
\mathbb{Z}^d_{\pm M}=(\mathbb{Z}_{\mp} \pm\{M\})\times \mathbb{Z}^{d-1},
\eeq
then we have
\beq\label{LM}
\mathbb{Z}^d_{+,\pm1}=\mathbb{Z}^d_{\pm M}.
\eeq
In preparation for our next assertion, for any $M\in\N$, let us set
\beq\label{l12}
\ell=\begin{cases} [M/4]+1 & \mbox{if }d=1 \\
[c\ln M] & \mbox{if }d>1 \end{cases},
\eeq
where $[x]$ denotes the  integer part of $x$.
We also define, for any $\mathcal{C}\subset\Z^d$,
\beq\label{Phi}
\Phi_\mathcal{C}(\omega)=\tr \chi_\mathcal{C} f \pa{\Pi_{\mathbb{Z}^d_{+}}(\omega)}\chi_\mathcal{C}.
\eeq
\begin{theorem}\label{thm:splitinf}

Let $F_{\La}(\omega)$ be defined by \eqref{eq:FM}  and (\ref{picc}) -- (\ref{pic}),
where  $f$ satisfies Condition \ref{c:sob}, $P(\omega)$ satisfies  \eqref{refp} -- \eqref{perp} and Assumption \ref{assump}. We have,  for  $\Lambda=\Lambda_M := [-M,M]^d \subset
\mathbb{Z}^d, \; L=2M+1$
\begin{enumerate}
\item[(i)] $d=1$:
\begin{align}\label{eq:pico}
\lim_{L\rightarrow\infty}\E\,\set{F_{\La_M}}=2\lim_{L\rightarrow\infty}\E\,\set{\Phi_{\mathcal{L}_{\ell}}}\\
=2\,\E\,\set{\tr f\pa{\Pi_{\Z_{+}}}}<\infty, \notag
\end{align}
where  $\Phi_{\mathcal{L}_{\ell}}(\omega)$ is defined by (\ref{Phi}) with $\mathcal{C}=\mathcal{L}_{\ell}$, $\mathcal{L}_{\ell}$ is defined in
$\eqref{eq:last1}$  and  $\ell$ is defined in  \eqref{l12};
\item[(ii)] $d \ge 2$:
\begin{align}\label{eq:pico1}
\lim_{L\rightarrow\infty}L^{-(d-1)}\,\E\,\set{F_{\La_M}}=2d\lim_{L\rightarrow\infty}L^{-(d-1)}\,\E\,\set{\Phi_{\mathcal{L}_{\ell,M}}}\\ =2d\sum_{j\in\Z_+}\E\,\set{f\pa{\Pi_{\Z^d_{+}}}(je_1,je_1)}<\infty, \notag
\end{align}
where  $\Phi_{\mathcal{L}_{\ell,M}}(\omega)$ is defined by (\ref{Phi}) with $\mathcal{C}=\mathcal{L}_{\ell,M}$, $\mathcal{L}_{\ell,M}$  is defined in $\eqref{eq:last}$,  $e_1$ is the first vector of the canonical basis $\set{e_n}^d_{n=1}$ of $\mathbb{Z}^d$ and $\ell$ is defined in  \eqref{l12} with $c$  sufficiently large but $M$-independent.
\end{enumerate}
\end{theorem}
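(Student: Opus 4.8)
The plan is to show that $F_{\La_M}=\sum_{x\in\La_M}f(\Pi_{\La_M})(x,x)$ concentrates on a thin boundary layer, to replace the cube by a half-space on each of the $2d$ faces, and then to collapse the faces by symmetry. The basic input is the diagonal decay estimate already used in Theorem~\ref{thm:apriori}: combining \eqref{eq:decf} (with $m=1$), H\"older's inequality, the bound $\abs{P(x,y)}^2\le\abs{P(x,y)}$, and Assumption~\ref{assump} yields, for any $\mathcal C\subset\Z^d$,
\beq
\E\set{\abs{f(\Pi_{\mathcal C})(x,x)}}\le C\pa{\sum_{y\in\mathcal C^c}\E\set{\abs{P(x,y)}^2}}^\alpha\le C\e^{-\gamma'\dist\pa{x,\mathcal C^c}},\quad \gamma'=\alpha\gamma/2.
\eeq
Thus the diagonal entries are exponentially small once $x$ is far from $\partial\La_M$, so only a boundary layer of thickness $\sim\ell$ matters.

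First I would split $\La_M$ into the interior $\La_{M-\ell}$, the $2d$ surface layers $\mathcal B_M^{(n)}$ of \eqref{eq:bufferset}, and the residual edge/corner region near the mutual intersections of the faces. By the estimate above the interior contributes at most $CL^d\e^{-\gamma'\ell}$; with $\ell=[c\ln M]$ and $c>1/\gamma'$ fixed (but $M$-independent) this is $o(L^{d-1})$, which is precisely the role of the large constant $c$ in \eqref{l12}. The edge/corner region has cardinality $O(\ell^2 L^{d-2})=o(L^{d-1})$ and each diagonal entry is $O(1)$, so it too is negligible. For $d=1$ there are no edges, and the weaker choice $\ell=[M/4]+1\to\infty$ already sends the interior contribution to $0$ without any surface normalization.

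The crux is the half-space replacement on each layer. For $x\in\mathcal B_M^{(n)}$ I would compare $f(\Pi_{\La_M})(x,x)$ with $f(\Pi_{\Z^d_{+,n}})(x,x)$, where $\Z^d_{+,n}\supset\La_M$ is the half-space through the face $\mathcal F^{(n)}_M$. The truncation in \eqref{fach} keeps $x$ at distance $\ge 2\ell$ from every other face, so $\La_M^c$ and $(\Z^d_{+,n})^c$ differ only at points lying at distance $\ge 2\ell$ from $x$; hence, by Assumption~\ref{assump}, $\Pi_{\La_M}$ and $\Pi_{\Z^d_{+,n}}$ coincide near $x$ up to an operator discrepancy whose matrix elements are $O(\e^{-\gamma\ell})$ in expectation. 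The main obstacle is to pass from this localized operator difference to a bound on $\abs{f(\Pi_{\La_M})(x,x)-f(\Pi_{\Z^d_{+,n}})(x,x)}$: since $f$ is only H\"older-$\alpha$ at the origin it is not Lipschitz, so one cannot simply differentiate. I expect to resolve this using the same integral (resolvent) representation that underlies \eqref{eq:decf}, which converts a local perturbation of $\Pi$, together with the off-diagonal decay of $P$, into an $O(\e^{-c_1\ell})$ bound on the difference of diagonal entries. Summed over the $\le CL^{d-1}\ell$ points of the layer this error is $o(L^{d-1})$.

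Finally I would invoke the symmetries. Each $\Z^d_{+,n}$ is carried onto the standard half-space $\Z^d_+$ by a rigid motion assembled from the reflection $R$ of \eqref{refp}, a permutation $\Sigma_\sigma$ of \eqref{perp}, and a translation of \eqref{trans}, all law-preserving for $P$; under it $\mathcal B_M^{(n)}$ maps onto the prism $\mathcal L_{\ell,M}$ of \eqref{eq:last}. Hence every face contributes $\E\set{\Phi_{\mathcal L_{\ell,M}}}$, producing the factors $2$ and $2d$ and the first equalities in \eqref{eq:pico}--\eqref{eq:pico1}. For the second equalities, translation invariance of $\E\set{f(\Pi_{\Z^d_+})(x,x)}$ in the $d-1$ directions parallel to the face gives $\E\set{\Phi_{\mathcal L_{\ell,M}}}=\pa{L-4\ell}^{d-1}\sum_{j=0}^{\ell-1}\E\set{f(\Pi_{\Z^d_+})(je_1,je_1)}$; dividing by $L^{d-1}$, letting $L\to\infty$ with $\ell=o(L)$, and using the bound $\E\set{f(\Pi_{\Z^d_+})(je_1,je_1)}\le C\e^{-\gamma'(j+1)}$ to guarantee convergence of the $j$-series, yields both the claimed value and its finiteness.
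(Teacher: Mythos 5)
Your overall architecture coincides with the paper's: a boundary-layer decomposition of $\La_M$ with interior and edge/corner errors of size $O(L^d\e^{-c\ell})$ and $O(\ell^2L^{d-2})$, a half-space replacement on each of the $2d$ layers, the collapse of the faces via \eqref{refp}--\eqref{perp} and translations, and finally the identity \eqref{p2.03}-type computation plus the exponential decay of $\E\set{f\pa{\Pi_{\Z^d_+}}(je_1,je_1)}$ to produce the second equalities. All of those steps are sound and match Lemma \ref{lem:buffer} and the concluding part of the paper's proof.

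The genuine gap is exactly at the step you flag as the crux, and your proposed fix does not work. You want to bound $\abs{f\pa{\Pi_{\La_M}}(x,x)-f\pa{\Pi_{\Z^d_{+,n}}}(x,x)}$ by "the same integral (resolvent) representation that underlies \eqref{eq:decf}". But \eqref{eq:decf} involves no resolvent representation at all: it follows from the pointwise bound $\abs{f(t)}\le Ct^\alpha$, the spectral theorem, and the Jensen-type inequality of Lemma \ref{l:pieq}, i.e.\ $\abs{f(\Pi)(x,x)}\le C\pa{\Pi(x,x)}^\alpha$. That machinery estimates a \emph{single} diagonal entry of $f(\Pi)$ by the smallness of $\Pi(x,x)$; it gives no control of the \emph{difference} of $f$ evaluated at two nearby operators, and no contour-integral formula is available since $f$ satisfying Condition \ref{c:sob} is neither analytic nor Lipschitz at $0$ (precisely the point of $h_0$). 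The paper's resolution is of a different nature: it compresses everything to the layer, writes $\chi_{\mathcal B_M^{(n)}}\pa{\Pi_{\La_M}-\Pi_{\Z^d_{+,n}}}\chi_{\mathcal B_M^{(n)}}=\Pi_{\mathcal B_M^{(n)},\La_M^c\setminus(\Z^d_{+,n})^c}$ (Lemma \ref{l:picc}), controls this in the $\alpha$-Schatten quasi-norm by Lemma \ref{lem:keytech} (ii)--(iii), and then converts operator smallness into trace smallness via Lemma \ref{lem:trineq}, i.e.\ $\abs{\tr f(A)-\tr f(B)}\le C\left\|A-B\right\|_\alpha^\alpha$, which rests on Sobolev's Theorem 2.4 of \cite{So:15} for H\"older-type functions of operators in Schatten ideals. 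This external input is the real engine of the proof (hence the acknowledgement to Sobolev in the paper), and nothing in your sketch substitutes for it. A second, related omission: even after the operator comparison, one must reconcile $\tr \chi_{\mathcal B_M^{(n)}}f\pa{\Pi}\chi_{\mathcal B_M^{(n)}}$ with $\tr f\pa{\chi_{\mathcal B_M^{(n)}}\Pi\chi_{\mathcal B_M^{(n)}}}$, since compression does not commute with applying $f$; the paper spends Lemmas \ref{lem:indisting} and \ref{lem:compinf} (again via \eqref{eq:f(A+B)}, Schatten estimates and Lemma \ref{lem:trineq}) on exactly this, and your pointwise formulation hides but does not avoid the same difficulty.
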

\begin{proof}
We note first that the properties \eqref{refp} -- \eqref{perp} are assumed
just to make
the formulation of the theorem and its proof more transparent. In fact, the results of
the theorem as well as its proof can be extended to the general case of projections,
which satisfy only Assumption \ref{assump} but not \eqref{refp} -- \eqref{perp}. In this case the assertion of
the theorem is analogous to that in Remark \ref{remd}, with  $h$ and $S
_{\La_M}$ there replaced by $f$ and $F_{\La_M}$. The  proof for this extension essentially coincides with  the one given below, though it is more tedious.

We start with the proof of the second equality in \eqref{eq:pico1} (the proof of the second equality in \eqref{eq:pico} is analogous).
Let $T_{a}$ be the measure preserving shift transformation (see \eqref{erpot} -- \eqref{epr}) by a vector $a\in \mathbb{Z}^{d}$. As usual, we will denote by $A(x,y)$ the matrix elements for the operator $A$ on $\ell^2(\Z^d)$, and will write $A(x,y;\omega)$ whenever we want to stress the dependence of $A$ on the random configuration $\omega\in\Omega$. If  $a$ is orthogonal to $e_1$ we have $T_{a}\Z_+^d =\Z_+^d$, so for any pair $(x,y)\in \Z^d_+\times\Z^d_+$ we have
\begin{equation}
\Pi _{\mathbb{Z}_{+}^{d}}(x,y;T_{a}\omega )=\Pi _{\mathbb{Z}_{+}^{d}}(x+a,y+a ;\omega )  \label{Pid1}
\end{equation}%
for such $a$, with probability $1$.
It follows then from an extended
version of Theorem 2.7 of \cite{Pa-Fi:92} that  the operator $f(\Pi _{\mathbb{Z}_{+}^{d}}(\omega))$ has the same property. In particular, since $T_{a}$ is a measure
preserving transformation of the event space, we obtain that $\mathbf{E}%
\{f(\Pi _{\mathbb{Z}_{+}^{d}})(x,y\}$ does not depend on $x-\langle x,e_1\rangle e_1$ and so,
\begin{equation}
L^{-(d-1)}\E\,\set{\Phi_{\mathcal{L}_{\ell}}}=L^{-(d-1)}\pa{2M-4\ell+1}^{d-1}\sum_{0\leq j\leq \ell}\mathbf{E}\{f(\Pi _{\mathbb{Z}_{+}^{d}})(je_1,je_1)\},  \label{p2.03}
\end{equation}%
in view of \eqref{eq:last}. To verify the existence of the finite limit as $M\rightarrow \infty $ on the right hand side, we use  Lemma \ref{lem:keytech} (i) with
\beq\label{zpd}
\mathcal{C}_1=\mathbb{Z}^d_{+}=\{x \in \mathbb{Z}^d:\ \langle x,e_1\rangle \geq 0 \},\;
\mathcal{C}_2=(\mathbb{Z}^d_{+})^c=\mathbb{N}^d_{-}:=\{x \in \mathbb{Z}^d\ \langle x,e_1\rangle \leq -1 \}
\eeq
to get the bound
\beq \label{efin}
\mathbf{E}\{|f(\Pi _{\mathbb{Z}_{+}^{d}})(je_1,je_1)|\}\leq Ce^{-\alpha
\gamma j},\quad j\in\Z_+.
\eeq
The bound and (\ref{l12}) yield the second  equality in \eqref{eq:pico1}. Besides, (\ref{efin}) implies the finiteness of the limits in \eqref{eq:pico} and \eqref{eq:pico1}.

To prove  the first equality in \eqref{eq:pico} and \eqref{eq:pico1} we note that by  Lemma \ref{lem:keytech}  (see  Assumption \ref{assump} and \eqref{picc} for the notation used), contributions to
\[
\E \{F_{\La_M}\} = \E\set{\tr f \pa{\Pi_{\La_M}}}=\sum_{x \in \La_M} \E\set{f \pa{\Pi_{\La_M}}(x,x)}
\]
due to points $x \in \La_M$ that lie away from its boundary
decay exponentially in $\dist\pa{x,\partial\La_M}$.
In dimensions higher than one, i.e., for (\ref{eq:pico}), a closer inspection shows that we may neglect contribution associated with points near the boundary of $\La$, as long as their number does not exceed $o\pa{L^{d-1}},  \; L \to \infty$. Indeed, the contributions of this order wash out once we take $\lim_{L\rightarrow\infty} L^{-(d-1)} \E \{F_{\La_M}\}$. Thus, it is not surprising that in the limit $L\rightarrow\infty$ the resulting expression converges (up to a factor $2d$ originating from the number of faces in $\La_M$) to that generated by $\Z^d_+$ rather than  $\La_M$, since locally the boundary of $\La_M$ looks indistinguishable from a hyperplane $\partial \Z^d_+$.

To implement this observation in the proof of the first equality in \eqref{eq:pico} and \eqref{eq:pico1} we use  Lemmas \ref{lem:buffer} -- \ref{lem:compinf}  to get the bound
\beq\label{eq:finalspl}
\E\,\Big\{\Big|\tr f\pa{\Pi_{\La_M}}-\sum_{n=-d }^d\tr \chi_{\mathcal B_M^{(n)}}f\pa{\Pi_{\Z^d_{+,n}}}\chi_{\mathcal B_M^{(n)}}\Big| ^2 \Big\}\le C\,R_d(M),
\eeq
where
$\mathcal {B}_M^{(n)}$ and $\Z^d_{+,n}$ are defined in \eqref{fac} --
\eqref{LM}
and
\beq\label{RM}
R_d(M)=\big(M^d e^{-\alpha \gamma \ell/2}+(d-1)\ell^2M^{d-2}\big)^{2}.
\eeq
In particular, for $d=1$, we substitute $\ell$ of \eqref{l12} into \eqref{cb1} to obtain
\beq\label{edec}
\E\,\Big\{\Big|\tr f\pa{\Pi_{\La_M}}-\sum_{n=\pm 1}\tr \chi_{\mathcal B_M^{(n)}}f\pa{\Pi_{\Z_{+,n}}}\chi_{\mathcal B_M^{(n)}}\Big|^2\Big\}\le C\,\e^{-c M},\quad c>0.
\eeq
By ergodicity and \eqref{refp},
\[
\E\,\Big\{\tr \chi_{\mathcal{B}_M^{(n)}}f\pa{\Pi_{\Z_{+,n}}}\chi_{\mathcal B_M^{(n)}}\Big\}
\]
does not depend on $n=\pm 1$ and coincides with $\E\set{\Phi_{[0,[M/4]]}}$.
This and the second equality in \eqref{eq:pico}
yield the first equality in \eqref{eq:pico}.

To obtain the existence of the first limit in \eqref{eq:pico1},
we use \eqref{eq:finalspl}, choosing  $\ell$ in \eqref{eq:bufferset} as in \eqref{l12} to  balance out the two terms in \eqref{RM}. This leads to the second equality in \eqref{eq:pico1}, by the same argument as in the proof of  \eqref{eq:pico}.
\end{proof}

\medskip
The exponential decay in \eqref{edec} will play an important role in the proof of our next assertion for $d=1$.
\begin{theorem}\label{t:fp1}
Let $F_{\La}(\omega)$ be defined by \eqref{eq:FM}  and (\ref{picc}) -- (\ref{pic}),
where  $d=1$,  $f$ satisfies Condition \ref{c:sob}, $P(\omega)$ satisfies Assumption \ref{assump} and  $\Lambda=\Lambda_M := [-M,M]\subset
\mathbb{Z}$. Set
 \beq\label{Fpm}
F_\pm(\omega)=\tr f\pa{P_{\Z_\mp}(\omega)},
\eeq
where $P_{\Z_\pm}(\omega)$ are defined by \eqref{PL} with $\La=\Z_{\pm}$ and $\Z_\pm$  given by \eqref{zpmd}.
Then  $F_\pm(\omega)$ is finite with probability 1 and   we have, with the same
probability,

\beq\label{fp1}
F_{\La_{M}}(\omega)=F_+\pa{T_{+M}\omega}+F_-\pa{T_{-M}\omega}+o(1), \; L=2M+1\rightarrow\infty,
\eeq
where $T_{\pm M}$ are the ergodic shift transformations (see  \eqref{erpot} --  \eqref{epr}).

\end{theorem}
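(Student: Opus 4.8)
The plan is to leverage the fact that in dimension one the splitting produced in the proof of Theorem \ref{thm:splitinf} carries an \emph{exponentially small} error in $M$, cf.\ \eqref{edec}, and to upgrade the resulting $L^1$ and $L^2$ estimates to almost sure convergence via Chebyshev's inequality and the Borel--Cantelli lemma. First I would dispose of the finiteness claim. Writing $F_+(\omega)=\tr f\pa{\Pi_{\Z_-}(\omega)}=\sum_{j\le0}f\pa{\Pi_{\Z_-}}(j,j;\omega)$ (cf.\ \eqref{Fpm}, \eqref{Pi1}; symmetrically for $F_-$), the bound \eqref{efin}, which is a consequence of Lemma \ref{lem:keytech} (i), gives $\E\set{\abs{f\pa{\Pi_{\Z_-}}(j,j)}}\le C\e^{-\alpha\gamma\abs{j}}$, so that $\E\set{\sum_{j\le0}\abs{f\pa{\Pi_{\Z_-}}(j,j)}}<\infty$. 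Hence the series defining $F_\pm$ converges absolutely with probability $1$, and $F_\pm$ is finite a.s.

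Next I would set up the geometric splitting. For $d=1$ the sets of \eqref{fac} -- \eqref{LM} collapse to $\mathcal B_M^{(\pm1)}=[\pm M,\pm(M-\ell)]$ (see \eqref{cb1}) and the halflines $\Z_{+,\pm1}=\Z^d_{\pm M}$ of \eqref{LM}, with $\ell=[M/4]+1$ by \eqref{l12}. The key algebraic step is the shift covariance of $\Pi$: from \eqref{trans} one gets $\Pi_{\mathcal C+a}(x+a,y+a;\omega)=\Pi_{\mathcal C}(x,y;T_a\omega)$, and, by the extended version of Theorem 2.7 of \cite{Pa-Fi:92} already invoked in the proof of Theorem \ref{thm:splitinf}, the same covariance holds for $f(\Pi)$. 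Applying this with $\mathcal C=\Z_\mp$ and $a=\pm M$ identifies each boundary term appearing in \eqref{edec} with a shifted \emph{truncated} halfline trace,
\beq
\tr \chi_{\mathcal B_M^{(\pm1)}}f\pa{\Pi_{\Z_{+,\pm1}}}(\omega)\chi_{\mathcal B_M^{(\pm1)}}=G_\pm^{(\ell)}(T_{\pm M}\omega),
\eeq
\beq
G_+^{(\ell)}(\omega):=\sum_{-\ell\le j\le0}f\pa{\Pi_{\Z_-}}(j,j;\omega),\qquad G_-^{(\ell)}(\omega):=\sum_{0\le j\le\ell}f\pa{\Pi_{\Z_+}}(j,j;\omega).
\eeq

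I would then decompose the total error. Set $\tau_\pm^{(\ell)}:=F_\pm-G_\pm^{(\ell)}$ (the discarded halfline tail), and let $\Delta_M$ denote the quantity inside the norm in \eqref{edec}, so that $\Delta_M=F_{\La_M}-G_+^{(\ell)}(T_M\omega)-G_-^{(\ell)}(T_{-M}\omega)$. Combining this with the previous identification yields
\beq
F_{\La_M}(\omega)-F_+(T_M\omega)-F_-(T_{-M}\omega)=\Delta_M(\omega)-\tau_+^{(\ell)}(T_M\omega)-\tau_-^{(\ell)}(T_{-M}\omega).
\eeq
By \eqref{edec} we have $\E\set{\abs{\Delta_M}^2}\le C\e^{-cM}$, while by shift invariance of $\E$ and the decay \eqref{efin}, $\E\set{\abs{\tau_\pm^{(\ell)}(T_{\pm M}\omega)}}=\E\set{\abs{\tau_\pm^{(\ell)}}}\le C\e^{-\alpha\gamma\ell}\le C\e^{-cM}$ since $\ell=[M/4]+1$. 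Chebyshev's and Markov's inequalities then give, for each fixed $\eps>0$, $\;P\pa{\abs{\Delta_M}>\eps}+\sum_{\pm}P\pa{\abs{\tau_\pm^{(\ell)}(T_{\pm M}\omega)}>\eps}\le C_\eps\,\e^{-cM}$, which is summable in $M$; the Borel--Cantelli lemma (applied along $\eps=1/k$ and intersecting the resulting full-measure events) forces all three terms to $0$ a.s., establishing \eqref{fp1}.

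I expect the main difficulty to be conceptual rather than computational: the entire argument hinges on being able to take the boundary layer $\ell$ of \emph{macroscopic} thickness, $\ell\sim M$, which is precisely what renders every error term exponentially — hence summably — small in $M$ and thus amenable to Borel--Cantelli. This is special to $d=1$. In $d\ge2$ the geometry of the $2d$ faces of $\La_M$ forces $\ell\sim\log M$ in \eqref{l12}, so the analogue of \eqref{edec} decays only polynomially in $L$ and this route to almost sure convergence is unavailable — in line with the self-averaging/typicality dichotomy between $d=1$ and $d\ge2$ that the paper establishes elsewhere.
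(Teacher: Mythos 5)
Your proof is correct and follows essentially the same route as the paper's: both rest on the $d=1$ exponential bound \eqref{edec}, the identification of the boundary-layer traces with shifted halfline quantities via the covariance \eqref{epr}, the tail estimate from Lemma \ref{lem:keytech} (i) (i.e., \eqref{efin}), and a Chebyshev/Markov plus Borel--Cantelli upgrade to almost sure convergence. The only difference is organizational --- you combine the truncation error and the halfline tails into a single identity before applying Borel--Cantelli, whereas the paper applies it sequentially to \eqref{ffmfm} and then to $\Delta_M$ --- which is immaterial.
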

\begin{proof}
The starting point is  the bound \eqref{edec} obtained for $d = 1$, with $\ell$ given in \eqref{l12} and $\mathcal B^{(n)}_M$, for $n=\pm1$, given by \eqref{cb1}.
Note that, in view of \eqref{LM}, we have
\[
\Z_{+,1} = (-\infty,M]\cap\Z, \quad \Z_{+,-1} = [-M,\infty)\cap\Z
\]
in this case. Denoting
\beq \label{FMpm}
F_M^-(\omega)=\tr \chi_{\mathcal B_M^{(1)}}f\pa{\Pi_{\Z_{+,1}}(\omega)}\chi_{\mathcal B_M^{(1)}},\quad F_{M}^+ (\omega)=\tr \chi_{\mathcal B_M^{(-1
)}}f\pa{\Pi_{\Z_{+,-1}}(\omega)}\chi_{\mathcal B_M^{(-1)}},
\eeq
we can rewrite \eqref{edec}  as
%
%
\beq \label{ffmfm}
\E\set{\abs{F_{\La_M}-\pa{F_{M}^+ +F_M^-}}^2}\le C e^{-cM}, \quad C <\infty,\quad  c>0.
\eeq
On the other hand, the expectation of all terms on the left of \eqref{ffmfm} are uniformly bounded in $M$  according to  \eqref{efmp}, hence
the terms are finite with probability 1. This and the Borel-Cantelli lemma yield the asymptotic relation
\beq\label{eq:FMpm}
F_{\La_{M}} (\omega)=F_{M}^+(\omega) +F_M^- (\omega)+ o(1),\quad  M \rightarrow\infty,
\eeq
which is valid with probability $1$.

Next, we have
\begin{align}\label{fmm}
&\hspace{-1cm}F_M^-(\omega)=
\sum_{x=M-[M/4]}^{M}f\pa{\Pi_{\Z_{+,1}}}(x,x,(\omega)) \\& =\sum_{x=-\infty}^Mf\pa{\Pi_{\Z_{+,1}}}(x,x,(\omega))-\Delta_M
= \tr f\pa{\Pi_{\Z_{+,1}}(\omega)}-\Delta_M (\omega), \notag
\end{align}
where
\[\Delta_M(\omega)=\sum_{x=-\infty}^{M-[M/4]-1}f\pa{\Pi_{\Z_{+,1}}}(x,x,\omega).\]
According to Lemma \ref{lem:keytech} (i),  $\E\set{\abs{\Delta_M}}\le Ce^{-cM}$ with $C<\infty$, $c>0$.  This and the Borel-Cantelli lemma  yield the relation
\beq\label{dm}
\Delta_M(\omega)=o(1),\quad M\rightarrow\infty,
\eeq
which is again valid with probability $1$.

Note now that according to \eqref{pic} and \eqref{Pi1},
$
\Pi_{\Z_{+,1}}(\omega)=\chi_{\Z_{+,1}}P(\omega)\chi_{(\Z_{+,1})^c}P(\omega)\chi_{\Z_{+,1}}.
$
This and \eqref{epr} yield
\[
\Pi_{\Z_{+,1}}(\omega)=\Pi_{\Z_+}(T_{M}\omega), \; \omega \in \Omega,
\]
 where $\Z_- $ is defined in \eqref{zpmd}. Combining \eqref{fmm} -- \eqref{dm} and their counterparts for the second term in \eqref{eq:FMpm}, we obtain
(\ref{Fpm}) --  \eqref{fp1}.
\end{proof}

\begin{theorem}\label{thm:varia}
Let $F_{\La}(\omega)$ be defined by \eqref{eq:FM}  and (\ref{picc}) -- (\ref{pic}),
where  $f$ satisfies Condition \ref{c:sob}, $P(\omega)$  satisfies Assumption \ref{assump}
and Assumption \ref{assump1}. Then,
for  $d\ge2$  and $\La_M=[-M,M]^d \; \subset \mathbb{Z}^d,
\; L=2M+1$, we have:
\begin{align}\label{eq:varmain'}
\Var\set{L^{-(d-1)}F_{\La_M}}:=\E\set{\pa{L^{-(d-1)}F_{\La_M}}^2}
-\pa{\E\set{L^{-(d-1)}F_{\La_M}}}^2\\ \le C\pa{\log M}^{4d/(d+1)}M^{-2(d-1)/(d+1)}.
\notag
\end{align}
\end{theorem}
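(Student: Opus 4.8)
The plan is to show that $L^{-(d-1)}F_{\La_M}$ is, up to a controllable $L^2$ error, a normalized sum of independent identically distributed contributions coming from disjoint pieces of the surface layer of $\La_M$, and then to bound its variance by the sum of the variances of those pieces. The starting point is the surface splitting already established in \eqref{eq:finalspl} (in the form valid under Assumption \ref{assump}): it replaces $\tr f(\Pi_{\La_M})$ by the sum over the $2d$ faces of the boundary terms $\tr\chi_{\mathcal B_M^{(n)}}f(\Pi_{\Z^d_{+,n}})\chi_{\mathcal B_M^{(n)}}$, with an $L^2$ error controlled by $R_d(M)$ of \eqref{RM}. Using the elementary inequality $\Var\{X\}\le 2\Var\{Y\}+2\,\E\{|X-Y|^2\}$, this reduces the problem to estimating the variance of the sum of the half-space surface terms, since the normalized splitting error $L^{-2(d-1)}R_d(M)$ will be negligible once the surface-layer depth satisfies $\ell\gtrsim\log M$.

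Next I would localize each half-space contribution. The operator $\Pi_{\Z^d_{+,n}}$ is nonlocal, so its surface term is not yet a sum of independent pieces. I would partition each surface layer $\mathcal B_M^{(n)}$ into disjoint cells of linear size $b$ separated by buffer corridors of width $\sim\ell$, and on each cell replace $P$ by the local projection $\widehat P_{\La_g}$ of Assumption \ref{assump1}, where $\La_g$ is the box obtained from the cell by adjoining an $\ell$-collar. Assumption \ref{assump1}(i) bounds the expected matrix-element discrepancy $|P-\widehat P_{\La_g}|$ by $C|\partial\La_g|\,\e^{-\widetilde\gamma\ell}$ for the cell points, which lie at distance $\ge\ell$ from $\partial\La_g$; propagating this through the functional calculus for $f$ — using the Hölder/Lipschitz control on $f$ furnished by Condition \ref{c:sob} and \eqref{f}, together with the exponential off-diagonal decay of Assumption \ref{assump} — yields an $L^2$ replacement error that is polynomially small as soon as $\ell\ge C\log M$. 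Crucially, the collar makes the boxes $\La_g$ pairwise disjoint, so by Assumption \ref{assump1}(ii) the resulting cell contributions $\set{\xi_g}$ are i.i.d.

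With the decomposition $F_{\La_M}=\sum_g\xi_g+E$ in hand, where $E$ collects the splitting error, the localization error, and the buffer-corridor contributions, I would bound $\Var\{L^{-(d-1)}F_{\La_M}\}\le 2L^{-2(d-1)}\sum_g\Var\{\xi_g\}+2L^{-2(d-1)}\E\{E^2\}$. For the main term, each $\xi_g$ is a surface quantity bounded by $Cb^{d-1}$ — the exponential decay of the diagonal of $f(\Pi)$ into the bulk makes the depth sum converge, leaving $\sim b^{d-1}$ effective surface sites — so $\Var\{\xi_g\}\le Cb^{2(d-1)}$; since there are $N\sim(L/b)^{d-1}$ cells, $L^{-2(d-1)}\sum_g\Var\{\xi_g\}\le C(b/L)^{d-1}$. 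The buffer corridors occupy a fraction $\sim\ell/b$ of the surface, so their contribution to $E$ is $O(L^{d-1}\ell/b)$, which after normalization is $\sim(\ell/b)^2$. Choosing $\ell\sim\log M$ to kill the exponential errors and then optimizing $b$ to balance the i.i.d. term $(b/L)^{d-1}$ against the buffer term gives $b\sim M^{(d-1)/(d+1)}$ up to logarithmic factors, which reproduces the power $M^{-2(d-1)/(d+1)}$ and the logarithmic correction of the claimed bound \eqref{eq:varmain'}.

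The hard part will be step two: converting the expectation bound on the individual matrix elements $|P-\widehat P_{\La_g}|$ in Assumption \ref{assump1}(i) into a second-moment estimate for the difference of the traces $\tr f(\Pi)$ and $\tr f(\widehat\Pi)$, since $f$ is applied nonlinearly and the trace couples many sites. This requires telescoping/functional-calculus estimates for $f(\Pi)-f(\widehat\Pi)$ in terms of $\Pi-\widehat\Pi$, hence of $P-\widehat P_{\La_g}$, with the exponential decay keeping the site sums convergent; simultaneously one must check that the buffer geometry genuinely decouples the cells so that the independence in Assumption \ref{assump1}(ii) applies. The remaining delicate bookkeeping is the scale optimization itself, where the interplay between the buffer width $\ell\sim\log M$ and the cell size $b$ is what produces the precise exponent $-2(d-1)/(d+1)$ together with the logarithmic factor appearing in \eqref{eq:varmain'}.
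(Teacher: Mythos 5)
Your proposal follows essentially the same route as the paper's proof of Theorem \ref{thm:varia}: the surface splitting \eqref{eq:finalspl} combined with $\Var\{X\}\le 2\Var\{Y\}+2\E\{|X-Y|^2\}$, a partition of the surface layer into cells separated by corridors of width $\sim\log M$, replacement of $P$ by the local projections of Assumption \ref{assump1} (controlled through Schatten-quasinorm/functional-calculus estimates of the type in Lemmas \ref{lem:trineq} and \ref{lem:indistinga}) to produce i.i.d.\ cell contributions, and finally the variance identity for i.i.d.\ sums with an optimization of the cell size, which yields the same exponent $-2(d-1)/(d+1)$. The differences are only cosmetic: the paper tiles a doubled (interior--exterior) layer by adjacent cells $\mathcal L_k$ and shrinks them by $\ell/4$ to create corridors rather than adding collars to separated cells, and it reduces to a single face via \eqref{refp}--\eqref{perp} instead of treating the cells of all $2d$ faces as one i.i.d.\ family.
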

\begin{proof}
The idea of the proof is to bound the variance  on the right of (\ref{eq:varmain'}) 
by that of the sum $\sum_{j=1}^m\eta_{j}$
of certain i.i.d. random variables $\{\eta_j \}_{j=1}^m$ and use the relation
\beq\label{eq:var2}
\Var\left\{\sum_{j=1}^m\eta_j\right\}=m\Var\{\eta_1\}.
\eeq
The result will then  follow from an appropriate choice for  $m=m(M)$.

To this end we will systematically use the bound
\beq\label{eq:var1}
\Var\{\xi_1\}\le 2\Var\{\xi_2\}+2\E\set{\abs{\xi_1-\xi_2}^2},
\eeq
which is valid for a pair of random variables $(\xi_1,\xi_2)$.

Let $\Z^d_{+,n}$ be a rigid lattice motion of $\Z^d_{+}$ such that $\La_M\subset \Z^d_{+,n}$ and $\mathcal F^{(n)}_M\subset \partial Z^d_{+,n}$. We will again use sets ${\mathcal B}_M^{(n)}$ defined in \eqref{eq:bufferset}, with $\ell$ as in \eqref{l12} and  $c>0$, which is  large enough but independent of $M$. Using \eqref{eq:var1} with $\xi_1= F_{\La_M}$ of \eqref{eq:FM} and
\[\xi_2=\sum_{n=-d}^{d}\tr \chi_{\mathcal B_M^{(n)}}f\pa{\Pi_{\Z^d_{+,n}}}\chi_{\mathcal B_M^{(n)}},\]
 we get that
 \beq\label{eq:pred}
\Var\{F_{\La_M}\}\le2\Var\left\{\xi_2\right\}+C (\log M)^{4 }M^{2(d-2)},
\eeq
 in view of \eqref{eq:finalspl} and \eqref{RM} with $d\ge 2$.
Thus, the inequality%
\begin{equation}
\mathbf{E}\Big\{\Big( \sum _{j}\eta _{j} \Big)\Big\}
^{2}\Big\}\leq \Big( \sum_{j}
\Big(\mathbf{E}\Big\{ \eta _{j}^{2}\Big\}\Big)^{1/2} \Big) ^{2},  \label{CS}
\end{equation}%
which implies
\begin{equation*}
\mathbf{Var}\Big\{\left( \sum _{j}\eta _{j}\right) ^{2}\Big\}\leq \Big( \sum
 _{j}\Big(\mathbf{E}\left\{ \eta _{j}^{2}\right\}\Big)^{1/2} \Big) ^{2},
\end{equation*}%
and properties \eqref{refp} -- \eqref{perp} yield%
\begin{equation}\label{vard}
\Var\{F_{M}\}\leq 4d^{2}\Var\left\{ \xi _{3}\right\} +C(\log M)^{4}M^{2(d-2)},
\end{equation}%
where%
\begin{equation}\label{xi3}
\xi _{3}=\tr\chi _{\mathcal{B}_{M}^{(1)}}f\left( \Pi _{\mathbb{Z}%
_{+,1}^{d}}\right) \chi _{\mathcal{B}_{M}^{(1)}}.
\end{equation}%
We introduce now the external surface layer $\mathcal{C}_{M}^{(1)}$, which
is the reflection of $\mathcal{B}_{M}^{(1)}$ with respect to the face $%
\mathcal{F}_{M}^{(1)}$ of \eqref{fac} without the points of the face, thus $%
\mathcal{B}_{M}^{(1)}\mathcal{\ }$belongs to our basic (closed) lattice cube
$\Lambda _{M}=[-M,M]^{d}$ and $\mathcal{C}_{M}^{(1)}$ belongs to its
exterior. We denote%
\begin{equation}
\mathcal{L}_{M}^{(1)}=\mathcal{B}_{M}^{(1)}\cup \mathcal{C}_{M}^{(1)}.
\end{equation}%
This lattice set is a rigid lattice motion  of  (cf.\ \eqref{eq:last})%
\begin{equation}\label{prism}
\pa{\lbrack -\ell ,\ell ]\times \lbrack -M+2\ell ,M-2\ell ]^{d-1}}\cap\
Z^d.
\end{equation}
Choose%
\begin{equation}
\ell =[c\log M],\;m\geq M^{1/d},  \label{elm}
\end{equation}%
where $c>0$ is large enough but independent of $M$ (see \eqref{l12}). Thinking of $\mathcal{L}_{M}^{(1)}$ as a set in $\mathbb{R}^{d}$ for just a moment, we partition it into $m^{d-1}$  congruent rectangular prisms $\{\widehat{%
\mathcal{L}}_{k}\}_{k=1}^{m^{d-1}}$, which are  rigid motions of (cf.\ \eqref{eq:last})
\begin{equation}
\lbrack -\ell ,\ell ]\times \lbrack -\tfrac{M-2\ell }{m},\tfrac{M-2\ell }{m}%
]^{d-1}\subset \mathbb{R}^{d}.  \label{calk}
\end{equation}%
Next, we introduce the lattice sets  \[\mathcal{L}_{k}=\widehat{\mathcal{L}}_{k}\cap \mathbb{Z}%
^{d},\;k=1,...,m^{d-1}.\] Adjusting the value of $m$ by $1$  (if necessary) we can make sure that these sets are congruent and
\begin{equation}\label{llie}
\mathcal{L}_{k^{\prime }}\cap \mathcal{L}_{k^{^{\prime \prime }}}=\mathcal{%
\varnothing },\;k^{\prime }\neq k^{\prime \prime }.
\end{equation}%
Let also%
\begin{equation}\label{lpp}
\mathcal{L}_{k}^{\prime }=\mathcal{B}_{M}^{(1)}\cap \mathcal{L}_{k},\;%
\mathcal{L}_{k}^{\prime \prime }=\mathcal{C}_{M}^{(1)}\cap \mathcal{L}_{k}
\end{equation}
be the parts of $\mathcal{L}_{k}$ belonging to our basic cube $\Lambda
_{M}=[-M,M]^{d}$ and its exterior respectively.

We will also need the lattice sets%
\begin{equation}\label{Kk}
\mathcal{K}_{k}=\left\{ x\in \mathcal{L}_{k}:\ \dist\left( x,\partial
\mathcal{L}_{k}\right) \geq \ell /4\right\} ,\;k=1,...,m^{d-1}
\end{equation}%
(the boundary $\partial \mathcal{C%
}$ for $\mathcal{C}\subset \mathbb{Z}^{d}$ is defined in \eqref{partialC}) and their parts%
\begin{equation}\label{Kkp}
\mathcal{K}_{k}^{\prime }=\mathcal{B}_{M}^{(1)}\cap \mathcal{K}_{k},\;%
\mathcal{K}_{k}^{\prime \prime }=\mathcal{C}_{M}^{(1)}\cap \mathcal{K}_{k},
\end{equation}%
belonging to $\Lambda _{M}=[-M,M]^{d}$ and its exterior respectively. Note that these sets are all separated by the "corridors" of width $\ell /2$ and that they are rigid motions of each other, for different values of $k$.

The volume of all the corridors
between the $\mathcal{K}_{k}^{\prime }$'s is given by
\begin{equation*}
\left\vert \mathcal{B}_{M}^{(1)}\setminus \cup _{k=1}^{m^{d-1}}\mathcal{K}%
_{k}^{\prime }\right\vert \leq Cm^{d-1}\ell ^{2}(M/m)^{d-2}=Cm\ell
^{2}M^{d-2}.
\end{equation*}%
Now, applying Lemma \ref{lem:keytech} (i) and \eqref{CS}, we get
\begin{align}
& \mathbf{E}\,\Big\{\Big|\tr\chi _{\mathcal{B}_{M}^{(1)}}f\left( \Pi _{%
\mathbb{Z}_{+,1}^{d}}\right) \chi _{\mathcal{B}_{M}^{(1)}}-\sum_{k}\tr\chi _{%
\mathcal{K}_{k}^{\prime }}f\left( \Pi _{\mathbb{Z}_{+,1}}\right) \chi _{%
\mathcal{K}_{k}^{\prime }}\Big|^{2}\Big\}  \label{eq:spl} \\
& \hspace{3cm} \leq Cm^{2}(\log M)^{4}M^{2(d-2)},  \notag
\end{align}%
in view of \eqref{elm}.
Hence, if we denote
\begin{equation*}
\xi _{4}=\sum_{k=1}^{m^{d-1}}\tr\chi _{\mathcal{K}_{k}^{\prime }}f\left( \Pi _{\mathbb{Z}%
_{+,1}^{d}}\right) \chi _{\mathcal{K}_{k}^{\prime }},
\end{equation*}%
and use \eqref{eq:var1}, \eqref{eq:pred} and \eqref{xi3}, we deduce:
\begin{equation}
\mathbf{E}\{|\xi _{3}-\xi _{4}|^{2}\}\leq 8d^{2}\Var\{\xi _{4}\}+Cm^{2}(\log
M)^{4}M^{2(d-2)}.  \label{eq:pred'}
\end{equation}%
Setting now
\begin{equation*}
\xi _{5}=\sum_{k=1}^{m^{d-1}}\tr f\left( \chi _{\mathcal{K}_{k}^{\prime }}\Pi _{\mathbb{Z%
}_{+,1}^{d}}\chi _{\mathcal{K}_{k}^{\prime }}\right) =\sum_{k=1}^{m^{d-1}}\tr f\left(
\Pi _{\mathcal{K}_{k}^{\prime },(\mathbb{Z}_{+,1}^{d})^{c}}\right)
\end{equation*}%
(see \eqref{picc} for the r.h.s. of the equality) and applying Lemma \ref{lem:compinf} to  $%
\mathcal{K}_{k}^{\prime }$ of \eqref{Kk} -- \eqref{Kkp} instead of $\mathcal{B}_{M}^{(n)}
$ of \eqref{eq:bufferset} and $M/m$ instead $M$, we obtain
\begin{equation}
\mathbf{E}\left\{ \left\vert \xi _{4}-\xi _{5}\right\vert ^{2}\right\} \leq
C\,m^{d-1}R_{d}(M/m)\leq C\,m^{2}(\log M)^{4}M^{2(d-2)}.  \label{eq:34pa}
\end{equation}%
Next, if (see \eqref{lpp} --  \eqref{Kkp})%
\begin{equation}\label{xi6}
\xi _{6}=\sum_{k=1}^{m^{d-1}}\tr f\left( \Pi _{\mathcal{K}_{k}^{\prime },\mathcal{L}%
_{k}^{\prime \prime }}\right) ,
\end{equation}
then we may apply \eqref{CS}, Lemma \ref{lem:trineq} and Lemma
\ref{l:picc} (iv) to get
\begin{align*}
\mathbf{E}\Big\{ \left\vert \xi _{5}-\xi _{6}\right\vert ^{2}\Big\}
&\leq \Big( \sum_{k=1}^{m^{d-1}}\Big(\mathbf{E}\Big\{ \Big\vert \tr f\Big( \Pi _{%
\mathcal{K}_{k}^{\prime },(\mathbb{Z}_{+,1}^{d})^{c}}\Big) -\tr f \Big(
\Pi _{\mathcal{K}_{k}^{\prime },\mathcal{L}_{k}^{\prime \prime }}\Big)
\Big\vert ^{2}\Big\}\Big)^{1/2} \Big) ^{2} \\
&\leq \Big( \sum_{k=1}^{m^{d-1}}\Big(\mathbf{E}\Big\{ \Big\Vert \Pi _{%
\mathcal{K}_{k}^{\prime },(\mathbb{Z}_{+,1}^{d})^{c}\setminus \mathcal{L}%
_{k}^{\prime \prime }} \Big\Vert _{\alpha }^{2\alpha }\Big\} \Big)^{1/2}
\Big) ^{2}. \notag
\end{align*}%
By construction, we have $\mathrm{dist}(\mathcal{K}%
_{k}^{\prime },(\mathbb{Z}_{+,1}^{d})^{c}\setminus \mathcal{L}_{k}^{\prime
\prime })\geq \ell $, so  according to Lemma \ref{lem:keytech}(ii) (with $\mathcal{C}_{1}=\mathcal{K}
_{k}^{\prime }$ and $\mathcal{C}_{2}=(\mathbb{Z}_{+,1}^{d})^{c}\setminus
\mathcal{L}_{k}^{\prime \prime }$) and \eqref{elm}, the r.h.s. is bounded by
\begin{equation*}
C^{1/2}|\partial \mathcal{K}_{k}^{\prime }|\mathrm{e}^{-\alpha \gamma
l/4}m^{d-1}\leq C_{1}((M/m)^{d-1}+\ell (M/m)^{d-2})\mathrm{e}^{-\alpha
\gamma l/4}=O(1),\;M\rightarrow \infty .
\end{equation*}%
We obtain that%
\begin{equation}\label{xi56}
\mathbf{E}\left\{ \left\vert \xi _{5}-\xi _{6}\right\vert ^{2}\right\} \leq
C_{2}M^{2(d-1)}\mathrm{e}^{-\alpha \gamma \ell /4}=O(1),\;M\rightarrow
\infty.
\end{equation}%
Finally, consider%
\begin{equation*}
\xi _{7}=\sum_{k}\tr f\left( \widehat{\Pi }_{\mathcal{K}_{k}^{\prime },%
\mathcal{L}_{k}^{\prime \prime }}\right) ,
\end{equation*}%
where (see \eqref{picc})
\begin{equation}
\widehat{\Pi }_{\mathcal{K}_{k}^{\prime },\;\mathcal{L}_{k}^{\prime \prime
}}=\chi _{\mathcal{K}_{k}^{\prime }}\widehat{P}_{\mathcal{L}_{k}}\chi _{%
\mathcal{L}_{k}^{\prime \prime }}\widehat{P}_{\mathcal{L}_{k}}\chi _{%
\mathcal{K}_{k}^{\prime }},  \label{pihat}
\end{equation}%
with $\widehat{P}_{\Lambda }$ defined in Assumption \ref{assump1},
i.e., $\widehat{\Pi }_{\mathcal{K}_{k}^{\prime },\mathcal{L}_{k}^{\prime
\prime }}$ is an analog of $\Pi _{\mathcal{K}_{k}^{\prime },\mathcal{L}%
_{k}^{\prime \prime }}$  with $P$ replaced by
$\widehat{P}_{\mathcal{L}_{k}}$.

Since $\dist\left( \mathcal{K}_{k}^{^{\prime }},\partial \mathcal{L}%
_{k}\right) \geq \ell /4$ (see \eqref{Kk} -- \eqref{Kkp}), we can apply Lemma \ref%
{lem:indistinga} with $a=\ell /4, \ \mathcal{Q}=\mathcal{L}_k$ and $\mathcal{Q}_a=\mathcal{K}_k$ to obtain

\begin{equation*}
\mathbf{E}\left\{ \left\vert \xi _{6}-\xi _{7}\right\vert ^{2}\right\} \leq
C\,\left\vert \partial \mathcal{L}_{k}\right\vert ^{5}\mathrm{e}^{-\alpha
\tilde{\gamma}\ell }\leq C\,\ell ^{5}m^{-5(d-1)}M^{5(d-1)}\mathrm{e}%
^{-\alpha \tilde{\gamma}\ell }.  \label{xi45}
\end{equation*}%
 In view of \eqref{elm}, we conclude that
\begin{equation}\label{xi67}
\mathbf{E}\left\{ \left\vert \xi _{6}-\xi _{7}\right\vert ^{2}\right\}
=O(1),\;M\rightarrow \infty .
\end{equation}%
Finally, we note that the random variables $\big\{\tr f\big(\widehat{\Pi }_{\mathcal{K}_{k}^{\prime },\mathcal{L}_{k}^{\prime \prime }}\big)\big\}%
_{k=1}^{m^{d-1}}$ are i.i.d. by Assumption \ref{assump1} and \eqref{llie},
so we can apply \eqref{eq:var2} for this collection of random variables. In addition, it follows from straightforward modifications to  Lemma \ref%
{corol:rough} that
\begin{equation*}
\Var\Big\{ \tr f\Big( \widehat{\Pi}_{\mathcal{K}_{k}^{\prime },\mathcal{L}_{k}^{\prime \prime }}\Big) \Big\}
\leq C\,\left\vert {\mathcal{K}}_{k}\right\vert ^{2}\leq
C_{1}\,(\log M)^{2}(M/m)^{2(d-1)}.
\end{equation*}%
This, bounds \eqref{vard}, \eqref{eq:pred'},  \eqref{eq:34pa},  \eqref{xi56},
and  \eqref{xi67} as well as the repeated use of \eqref{eq:var1} lead to the inequality
\begin{equation*}
\Var\{F_{\Lambda _{M}}\}\leq C(\log M)^{2}M^{2(d-1)}\left(
m^{1-d}+m^{2}(\log M)^{2}M^{-2}\right) .  \label{eq:pred3}
\end{equation*}%
Now the assertion of theorem follows by choosing $m=(\log
M)^{-2/(d+1)}M^{2/(d+1)}$ (cf.\ \eqref{elm}) and taking into account the
normalization factor $L^{-(d-1)}$ in \eqref{eq:varmain'}.
\end{proof}

\section{Auxiliary results}
We will start with several elementary assertions.

The  first one is a bound on $\E\abs{P(x,y)}^2$:
\beq\label{eq:Psqbn}
\E\{\abs{P(x,y)}^2\}\le \E\{\abs{P(x,y)}\}\le C\e^{-\gamma\abs{x-y}},
\eeq
where we used  \eqref{sr} and  Assumption \ref{assump}.
\begin{lemma}\label{l:h0} Let  $h:[0,1]\rightarrow \lbrack 0,1]$  be defined in  \eqref{h} and let
\beq\label{eq:h0}
h_0(t)=h\pa{\frac{1-\sqrt{1-4t}}{2}},\quad t\in[0,\tfrac{1}{4}].
\eeq
Then
\begin{enumerate}[(i)]
\item  $h_0$ is the function defined implicitly in
 \eqref{h0};
  \item $h_{0}(0)=0$;
\item $h_{0}$ is nonnegative, monotone increasing and concave on $[0,\tfrac{1}{4}]$;
\item  $4t\leq h_{0}(t)$  for $t\in[0,\tfrac{1}{4}]$;
\item For any  $\alpha\in(0,1)$, the function $h_{0}$ satisfies
\beq\label{eq:h_0der}
\max_{k\in\{0,1,2\}}\sup_{\in[0,1/4]}\abs{h_{0}^{(k)}(t)}\abs{t}^{k-\alpha}<\infty.
\eeq
\end{enumerate}
\end{lemma}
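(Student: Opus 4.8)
The plan is to reduce every assertion to properties of the binary entropy $h$ via the change of variable $s = s(t) := \frac{1-\sqrt{1-4t}}{2}$, which maps $[0,1/4]$ bijectively onto $[0,1/2]$ and satisfies $t = s(1-s)$ (as one checks directly, $s(1-s) = \tfrac14\big(1-(1-4t)\big) = t$). With this substitution $h_0(t) = h(s(t))$. Part (i) is then immediate from the symmetry $h(\tau) = h(1-\tau)$: for any $\tau \in [0,1]$ one of $\tau, 1-\tau$ lies in $[0,1/2]$ and equals $s(\tau(1-\tau))$, so $h(\tau) = h_0(\tau(1-\tau))$, i.e.\ $h_0$ is exactly the function defined implicitly in \eqref{h0}. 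Part (ii) follows from $s(0) = 0$ and $h(0) = 0$.

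For the remaining parts I would differentiate through the substitution. Since $dt/ds = 1-2s > 0$ on $[0,1/2)$, the chain rule gives
\[
h_0'(t) = \frac{h'(s)}{1-2s} = \frac{\log_2\frac{1-s}{s}}{1-2s}, \qquad h_0''(t) = \frac{h''(s)(1-2s) + 2h'(s)}{(1-2s)^3},
\]
with $h'(s) = \log_2\frac{1-s}{s}$ and $h''(s) = -\frac{1}{\ln 2\,\, s(1-s)}$. Nonnegativity of $h_0$ in (iii) is inherited from $h \ge 0$; monotonicity follows because $\log_2\frac{1-s}{s} \ge 0$ and $1-2s > 0$ for $s \in (0,1/2)$, so $h_0' \ge 0$. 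The one genuinely computational point is concavity: the sign of $h_0''$ equals that of its numerator, which a short manipulation rewrites as $-g(s)/\ln 2$ with
\[
g(s) = \frac{1-2s}{s(1-s)} - 2\ln\frac{1-s}{s}.
\]
The key step is to show $g \ge 0$ on $(0,1/2]$, which I would establish from $g(1/2) = 0$ together with the identity
\[
g'(s) = -\frac{(2s-1)^2}{\big(s(1-s)\big)^2} \le 0,
\]
so that $g$ decreases to its value $0$ at $s = 1/2$. Hence $h_0'' \le 0$ and $h_0$ is concave (the apparent singularity of the formula for $h_0''$ at $s=1/2$ is removable, with a finite limiting value obtained from a short Taylor expansion in $\eps = \tfrac12 - s$).

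Part (iv) is then a free consequence of concavity: $h_0$ and the linear function $4t$ agree at the endpoints $t=0$, where $h_0(0)=0$, and $t=1/4$, where $h_0(1/4) = h(1/2) = 1$, so $4t$ is precisely the chord of the concave function $h_0$ over $[0,1/4]$ and therefore lies below it. For part (v) I would analyze the behavior near $t=0$, which is where all the singularities sit (on any compact subinterval of $(0,1/4]$ the bounds are trivial by continuity, and $t=1/4$ is regular by the expansion above). From $\sqrt{1-4t} = 1 - 2t + O(t^2)$ one gets $s(t) = t + O(t^2)$, hence $s \sim t$ as $t \to 0^+$, and consequently $h_0(t) = h(s) = O\big(t\log(1/t)\big)$, $h_0'(t) = O\big(\log(1/t)\big)$, and $h_0''(t) = O(1/t)$. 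Multiplying by $t^{-\alpha}$, $t^{1-\alpha}$, $t^{2-\alpha}$ respectively turns each of these into a quantity bounded by $t^{1-\alpha}\log(1/t)$ (resp.\ $t^{1-\alpha}$), which tends to $0$ as $t \to 0^+$ for every $\alpha \in (0,1)$; since each $h_0^{(k)}(t)\,\abs{t}^{k-\alpha}$ is continuous on $(0,1/4]$ with a finite (zero) limit at the origin, the suprema in \eqref{eq:h_0der} are finite. The main obstacles are the concavity inequality $g \ge 0$ in (iii), which hinges on the clean factorization of $g'$, and the singularity bookkeeping in (v); everything else is routine once the substitution $t = s(1-s)$ is in place.
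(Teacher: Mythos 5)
Your proof is correct, and its skeleton --- the substitution $t=s(1-s)$, parts (i), (ii) by direct check, and the chord argument for (iv) --- is the same as the paper's. The genuine differences are in (iii) and (v). For concavity, the paper only asserts that concavity of $h$ transfers to $h_0$ because $h_0$ is a ``re-parametrization'' of $h$; taken literally this is not a proof, since a concave increasing function composed with a convex increasing change of variable (and $s(t)=\tfrac{1-\sqrt{1-4t}}{2}$ \emph{is} convex) need not be concave. Your computation --- reducing the sign of $h_0''$ to $g(s)\ge 0$ and establishing it from $g(1/2)=0$ together with $g'(s)=-(2s-1)^2/\bigl(s(1-s)\bigr)^2\le 0$ --- supplies exactly the verification the paper elides. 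For (v) and for regularity at $t=1/4$, the paper instead uses the series
\begin{equation*}
h_0(t)=1-\frac{1}{\log 2}\sum_{j=1}^{\infty}\frac{(1-4t)^{j}}{2j(2j-1)},
\end{equation*}
which gives analyticity of $h_0$ at $t=1/4$ immediately (and, incidentally, would also yield concavity in one line, since each $(1-4t)^j$ is convex in $t$), and then bounds $h_0$, $h_0'$, $h_0''$ explicitly on $(0,0.1]$; your route via $s(t)=t+O(t^2)$, the estimates $h_0=O(t\log(1/t))$, $h_0'=O(\log(1/t))$, $h_0''=O(1/t)$, and the removable singularity of the chain-rule formula at $s=1/2$ reaches the same conclusion without invoking the series identity. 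Both treatments are sound: the paper's series buys instant analyticity and a one-line concavity argument, while yours is more elementary and, on point (iii), actually more complete than the paper's own text.
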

\begin{proof}
(i) and (ii) can be checked  directly.

(iii). It is straightforward to check (by taking two derivatives) that  $h$ is nonnegative and concave on $[0,1]$ and is monotone increasing on $[0,1/2]$. The assertion follows from the fact that $h_0$ is the re-parametrization of $h$, according to \eqref{eq:h0}.

(iv). Since $h_0(0)=0$ and $h_{0}(1/4)=1$,  the graph $y=h_0(t)$ and the line $y=4t$ intersect at $(0,0)$ and $(1/4,1)$. By  (i), $h_0$ is concave, which implies that the segment of the line $y=4t,\,t\in[0,1/4]$ lies below the graph of $h_0$.

(v). It follows from (\ref{h}) and (\ref{eq:h0}) that
\beq\label{eq:h0exp}
h_0(t)=1-\frac{1}{\log 2}\sum_{j=1}^{\infty }\frac{\pa{1-4t}^{j}}{2j(2j-1)},
\; t \in [0,1/4).
 \eeq
which means that $h_0$ is the analytic function in the disc $\set{z\in\C:\ \abs{z-1/4}<1/4}$. Hence, it suffices to consider the supremum over a smaller interval, say $[0,0.1]$ instead of $[0,1/4]$. We obtain, for  $t \in (0,0.1]$,
\beq
h_0(t)\le -4t\log_2t;\ \abs{h_0'(t)}\le \frac{2}{\log 2}-\frac{\log\pa{4t}}{\log2};\ \abs{h_0''(t)}\le \frac{1}{t\log 2}
\eeq
and the result follows.
\end{proof}

%
%
%
%
%
%
%

\medskip
We will also use the matrix valued Jensen inequality.
\begin{lemma}\label{l:pieq} Let $M=\{M_{jk} \}_{j,k=1}^n$ be an $n\times n$ hermitian matrix and $f:\mathbb{R}%
\rightarrow \mathbb{R}$ be a concave function. We have:
\begin{equation}
\pa{f(M)}_{jj}\leq f\pa{M_{jj}},\quad j=1,\ldots,n.  \label{pee}
\end{equation}
\end{lemma}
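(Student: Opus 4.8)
The plan is to reduce the matrix inequality to the classical scalar Jensen inequality by exploiting the spectral decomposition of $M$, together with the observation that the diagonal entry $(f(M))_{jj}$ is a genuine convex combination of the values of $f$ on the eigenvalues of $M$. The function $f(M)$ is well defined by the spectral calculus, since $M$ is hermitian (real spectrum) and $f$ is defined on all of $\mathbb{R}$.

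Fix $j\in\{1,\ldots,n\}$. Since $M$ is hermitian, I would write its spectral decomposition $M=\sum_{k=1}^n\lambda_k\,u_ku_k^*$, where the $\lambda_k$ are the (real) eigenvalues and $\{u_k\}$ is an orthonormal basis of eigenvectors. Then $f(M)=\sum_k f(\lambda_k)\,u_ku_k^*$, so writing $e_j$ for the $j$-th standard basis vector the diagonal entry becomes
\[
(f(M))_{jj}=\langle e_j, f(M)\,e_j\rangle=\sum_{k=1}^n f(\lambda_k)\,p_k,\qquad p_k:=|\langle e_j,u_k\rangle|^2.
\]

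Next I would observe that the weights $\{p_k\}$ form a probability distribution: each $p_k\ge 0$, and $\sum_k p_k=\sum_k|\langle e_j,u_k\rangle|^2=\|e_j\|^2=1$ because $\{u_k\}$ is an orthonormal basis. The concavity of $f$ then allows me to apply the scalar Jensen inequality to the average $\sum_k f(\lambda_k)\,p_k$, giving
\[
(f(M))_{jj}\le f\Big(\sum_{k=1}^n\lambda_k\,p_k\Big).
\]
Finally, the argument on the right is exactly $\sum_k\lambda_k|\langle e_j,u_k\rangle|^2=\langle e_j,M\,e_j\rangle=M_{jj}$, which yields $(f(M))_{jj}\le f(M_{jj})$, as claimed.

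This argument is entirely elementary, so there is no genuine obstacle; the only points requiring (minimal) care are to verify that the spectral weights $p_k$ sum to one, so that Jensen's inequality is applied against a legitimate probability measure, and to recognize the resulting weighted mean of the eigenvalues as the diagonal entry $M_{jj}$. I would emphasize that it is precisely the concavity of $f$ (rather than convexity) that produces the stated direction of the inequality, which is the form needed when the lemma is later invoked with the concave function $h_0$ established in Lemma \ref{l:h0} (iii).
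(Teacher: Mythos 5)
Your proof is correct and is essentially the paper's own argument: the paper invokes the spectral theorem to write $\pa{f(M)}_{jj}=\int f(\lambda)\,\mu_j(d\lambda)$ for a probability measure $\mu_j$ and then applies Jensen's inequality, which is exactly your discrete spectral decomposition with weights $p_k=|\langle e_j,u_k\rangle|^2$ written in measure-theoretic language. Nothing is missing; your explicit verification that the $p_k$ sum to one is just the unpacking of the paper's statement that $\mu_j$ has total mass $1$.
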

\begin{proof}
\bigskip According to the spectral theorem for hermitian matrices%
\[
\pa{f(M)}_{jj}=\int_{-\infty }^{\infty }f(\lambda )\mu _{j}(d\lambda ),
\]%
where $\mu _{j}$ is non-negative measure of total mass 1. Hence, by the
Jensen inequality and  the spectral theorem the r.h.s. is bounded from
above by
\begin{equation*}
f\pa{ \int_{-\infty }^{\infty }\lambda \mu _{j}(d\lambda )} =f\pa{M_{jj}} .
\end{equation*} \end{proof}
\begin{lemma}\label{l:picc}
Let $\mathcal C_1$ and $\mathcal C_2$ be non intersecting subsets of $\Z^d$ and $\Pi_{\mathcal C_1,\mathcal C_2}$ be defined by \eqref{picc}, where $P$ is an orthogonal projection in $\ell^2(\Z^d)$. We have
\begin{enumerate}
\item[(i)] $\Pi_{\mathcal C_1,\mathcal C_2}\ge 0$;
\item[(ii)] $\left\|\Pi_{\mathcal C_1,\mathcal C_2}\right\|\le 1$ and if $\mathcal C_1\subset \mathcal C_2^c:=\Z^d\setminus \mathcal C_2$, then $\left\|\Pi_{\mathcal C_1,\mathcal C_2}\right\|\le 1/4$;
\item[(iii)] If $\mathcal C'_1\subset \mathcal C''_1$, then $\Pi_{\mathcal C'_1,\mathcal C_2}$ is the restriction of $\Pi_{\mathcal C''_1,\mathcal C_2}$ to $\ell^2\pa{\mathcal C'_1}$;
\item[(iv)] If $\mathcal C_2=\cup_{j=1}^p\mathcal C_{2j}$ and $\set{\mathcal C_{2j}}_{j=1}^p$ are disjoint, then $\Pi_{\mathcal C_1,\mathcal C_2}=\sum_{j=1}^p\Pi_{\mathcal C_1,\mathcal C_{2j}}$.
\end{enumerate}
\end{lemma}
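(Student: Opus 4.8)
The plan is to read off all four assertions directly from the defining formula $\Pi_{\mathcal C_1,\mathcal C_2}=\chi_{\mathcal C_1}P\chi_{\mathcal C_2}P\chi_{\mathcal C_1}$ in \eqref{picc}, using only that $P,\chi_{\mathcal C_1},\chi_{\mathcal C_2}$ are orthogonal projections (hence self-adjoint, idempotent, of norm $\le1$). Three of the four parts are then purely formal; the single substantive point is the sharp bound $\|\Pi_{\mathcal C_1,\mathcal C_2}\|\le1/4$ in (ii), which I would treat last.

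For (i) I would use $\chi_{\mathcal C_2}=\chi_{\mathcal C_2}^{*}\chi_{\mathcal C_2}$ to factor $\Pi_{\mathcal C_1,\mathcal C_2}=A^{*}A$ with $A:=\chi_{\mathcal C_2}P\chi_{\mathcal C_1}$, whence $\Pi_{\mathcal C_1,\mathcal C_2}\ge0$. The same factorization gives the crude norm bound in (ii): $\|\Pi_{\mathcal C_1,\mathcal C_2}\|=\|A\|^{2}\le\|\chi_{\mathcal C_2}\|^{2}\|P\|^{2}\|\chi_{\mathcal C_1}\|^{2}\le1$. For (iii), the inclusion $\mathcal C'_1\subset\mathcal C''_1$ gives $\chi_{\mathcal C'_1}=\chi_{\mathcal C'_1}\chi_{\mathcal C''_1}$; inserting this into \eqref{picc} yields $\Pi_{\mathcal C'_1,\mathcal C_2}=\chi_{\mathcal C'_1}\Pi_{\mathcal C''_1,\mathcal C_2}\chi_{\mathcal C'_1}$, which is precisely the restriction of $\Pi_{\mathcal C''_1,\mathcal C_2}$ to $\ell^{2}(\mathcal C'_1)$; equivalently, by \eqref{picc1} the matrix elements $\sum_{z\in\mathcal C_2}P(x,z)P(z,y)$ for $x,y\in\mathcal C'_1$ are manifestly independent of the ambient set $\mathcal C_1$. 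For (iv), disjointness of the $\mathcal C_{2j}$ with union $\mathcal C_2$ gives $\chi_{\mathcal C_2}=\sum_{j=1}^{p}\chi_{\mathcal C_{2j}}$, and linearity of the map $X\mapsto\chi_{\mathcal C_1}PXP\chi_{\mathcal C_1}$ immediately produces $\Pi_{\mathcal C_1,\mathcal C_2}=\sum_{j=1}^{p}\Pi_{\mathcal C_1,\mathcal C_{2j}}$.

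The hard part is the factor $1/4$ in (ii), and the main obstacle there is recognizing the correct comparison operator. The key observation is that the hypothesis $\mathcal C_1\subset\mathcal C_2^{c}$ is equivalent to $\mathcal C_2\subset\mathcal C_1^{c}$, which gives the projection inequality $0\le\chi_{\mathcal C_2}\le 1-\chi_{\mathcal C_1}$. Conjugating this by $P\chi_{\mathcal C_1}$ and using $P^{2}=P$ and $\chi_{\mathcal C_1}^{2}=\chi_{\mathcal C_1}$, I would obtain
\[
0\le\Pi_{\mathcal C_1,\mathcal C_2}\le\chi_{\mathcal C_1}P\pa{1-\chi_{\mathcal C_1}}P\chi_{\mathcal C_1}=P_{\mathcal C_1}-P_{\mathcal C_1}^{2},\qquad P_{\mathcal C_1}:=\chi_{\mathcal C_1}P\chi_{\mathcal C_1}.
\]
Since $P_{\mathcal C_1}$ is a compression of the projection $P$, it satisfies $0\le P_{\mathcal C_1}\le1$ on $\ell^{2}(\mathcal C_1)$, so the spectral theorem together with the scalar inequality $t(1-t)\le1/4$ on $[0,1]$ gives $P_{\mathcal C_1}-P_{\mathcal C_1}^{2}\le\tfrac14$. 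Combined with the lower bound $\Pi_{\mathcal C_1,\mathcal C_2}\ge0$ from (i), this confines the spectrum of $\Pi_{\mathcal C_1,\mathcal C_2}$ to $[0,\tfrac14]$, hence $\|\Pi_{\mathcal C_1,\mathcal C_2}\|\le1/4$. The only place any care is needed is this operator-monotonicity step; everything else reduces to idempotency and linearity.
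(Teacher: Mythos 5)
Your proof is correct in all four parts. Note, however, that the paper states Lemma \ref{l:picc} \emph{without proof}, treating it as elementary, so there is no official argument to compare against --- only the facts the paper records elsewhere; measured against those, your argument is exactly in the paper's spirit. The factorization $\Pi_{\mathcal C_1,\mathcal C_2}=A^*A$ in your part (i) is, up to relabeling, the paper's identity \eqref{Pires} (used there in the proof of Lemma \ref{lem:keytech}(ii)), and your comparison operator $P_{\mathcal C_1}-P_{\mathcal C_1}^2$ in part (ii) is the paper's identity \eqref{Pi1}, of which your inequality is the generalization from $\mathcal C_2=\mathcal C_1^c$ to $\mathcal C_2\subset\mathcal C_1^c$. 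Two remarks. First, the step you flag as the only delicate one ("operator-monotonicity") requires no monotone-function theory at all: it is just the fact that $B\le C$ implies $D^*BD\le D^*CD$, since $\langle\psi,D^*(C-B)D\psi\rangle=\langle D\psi,(C-B)D\psi\rangle\ge 0$; so that step is as formal as the others. Second, an alternative route to the same inequality, using only your own parts (i) and (iv), is
\[
\Pi_{\mathcal C_1,\mathcal C_2}\le \Pi_{\mathcal C_1,\mathcal C_2}+\Pi_{\mathcal C_1,\,\mathcal C_1^c\setminus\mathcal C_2}
=\Pi_{\mathcal C_1,\mathcal C_1^c}=P_{\mathcal C_1}\pa{\chi_{\mathcal C_1}-P_{\mathcal C_1}},
\]
after which the scalar bound $t(1-t)\le 1/4$ on $[0,1]$ finishes as you say; this is equivalent to your conjugation argument (the difference $\chi_{\mathcal C_1^c}-\chi_{\mathcal C_2}=\chi_{\mathcal C_1^c\setminus\mathcal C_2}$ is what you conjugate), but it makes the underlying structure --- positivity plus additivity in the second set --- more visible.
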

 Another simple yet useful observation is that if $A$ and $B$ is a pair of Hermitian operators on
 $\ell^2(\Z^d)$, and $f:\mathbb{R} \rightarrow \mathbb{C}$ is such that $f(0)=0$, then
\beq\label{eq:f(A+B)}
f(A)B=Bf(A)=0 \;\; \mbox{and} \;\; f(A+B)=f(A)+f(B), \mbox{ whenever } AB=BA=0.
\eeq
Both relations can be seen from the fact that there exists a basis $\{v_n\}$ on $\ell^2(\Z^d)$ that consists of eigenvectors for $A$ and $B$ with the property that either $Av_n=\lambda_n v_n\neq0$ and $Bv_n=0$, or  $Bv_n=\mu_n v_n\neq0$ and $Av_n=0$, or $Av_n=Bv_n=0$.

A natural tool for the trace estimates of an operator $A$ is its $\alpha$-Schatten norm
 \beq\label{tral}
\|A\|_\alpha=\pa{\|\abs{A}^\alpha\|_1}^{1/\alpha},
\eeq
where $\abs{A}=\pa{A^*A}^{1/2}$ and $\|\cdot\|_1$ is the trace norm, $\|A\|_1=\tr\abs{A}$.

Note that for $ \alpha\in (0,1)$ it is actually a quasi-norm, because it satisfies the modified triangle inequality:
$
\|A+B\|_\alpha \le C_{\alpha} \|A\|_\alpha+\|B\|_\alpha \quad C_\alpha=2^{1/\alpha-1}.
$
On the other hand,  the inequalities
\beq\label{eq:quasi}
\|A+B\|_\alpha^\alpha\le \|A\|_\alpha^\alpha+\|B\|_\alpha^\alpha,
\quad\alpha\in(0,1)
\eeq
and
\beq\label{nora}
\|AB \| _\alpha \le \|A\|\|B\|_\alpha; \quad \|AB \| _\alpha \le \|B\|\|A\|_\alpha
\eeq
hold in this case, and will be systematically used below.

\medskip
The next assertion plays an important role in our analysis.
\begin{lemma}\label{lem:trineq}
Assume that $A$ and $B$ are self-adjoint operators in the $\alpha$-Schatten class with $\alpha \in (0,1]$ such that $0\le A,B\le1/4$ and  that $f$ satisfies Condition \ref{c:sob}. Then $f(A)$ and $f(B)$ are trace class and
\beq\label{ftr}
\abs{\tr f(A)-\tr f(B)}\le C\left\|\abs{A-B}^\alpha\right\|_1 =C\left\|A-B\right\|_\alpha^\alpha.
\eeq
\end{lemma}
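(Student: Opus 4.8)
The plan is to reduce the trace difference to a single-variable estimate on the function $f$, then invoke Condition \ref{c:sob}. The key structural fact is that $f$ is supported on $[0,1/4]$ and, by \eqref{f}, satisfies $\abs{f(x)}\le C\abs{x}^\alpha$ on its support. Since $0\le A,B\le 1/4$, both $f(A)$ and $f(B)$ are well defined via the spectral theorem, and the $\alpha$-Schatten bounds $\|f(A)\|_1\le \|A\|_\alpha^\alpha<\infty$ (and similarly for $B$) show they are trace class; this handles the trace-class claim and makes $\tr f(A)$ and $\tr f(B)$ finite.

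The heart of the argument is the operator inequality $\abs{f(A)-f(B)}\le$ (something controlled by $\abs{A-B}^\alpha$) at the level of traces. \textbf{First I would} establish the scalar H\"older-type bound $\abs{f(s)-f(t)}\le C\abs{s-t}^\alpha$ for $s,t\in[0,1/4]$, which follows from Condition \ref{c:sob}: the derivative bound $\abs{f'(x)}\le C\abs{x}^{\alpha-1}$ gives, upon integrating, exactly this $\alpha$-H\"older modulus of continuity (the singularity at $0$ is integrable since $\alpha>0$). The natural tool to lift this scalar estimate to operators is the theory of operator-Lipschitz / operator-H\"older functions: for $\alpha\in(0,1)$ one has the Birman--Solomyak / Aleksandrov--Peller type bound
\beq
\left\|f(A)-f(B)\right\|_1\le C\,\|f\|_{\alpha\text{-H\"older}}\,\left\|\abs{A-B}^\alpha\right\|_1,
\eeq
valid for $\alpha$-H\"older $f$, from which \eqref{ftr} follows immediately since $\abs{\tr f(A)-\tr f(B)}\le \|f(A)-f(B)\|_1$ and $\left\|\abs{A-B}^\alpha\right\|_1=\|A-B\|_\alpha^\alpha$ by \eqref{tral}.

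\textbf{The hard part will be} the endpoint case $\alpha=1$ and, more importantly, controlling the trace difference without assuming $f$ is globally smooth — here the singular behavior of $f$ and its derivatives at the origin (only $\abs{f''(x)}\le C\abs{x}^{\alpha-2}$) must be absorbed. I expect the cleanest route avoids the full operator-H\"older machinery by instead writing $\tr f(A)-\tr f(B)$ through a telescoping/interpolation identity and exploiting that only the \emph{trace} of the difference is needed, not the trace-norm of the operator difference. \textbf{Concretely I would} use the integral (Dunford) representation $f(A)-f(B)=\frac{1}{2\pi i}\oint f(z)\big[(z-A)^{-1}-(z-B)^{-1}\big]\,dz$ with a contour adapted to $[0,1/4]$, together with the resolvent identity $(z-A)^{-1}-(z-B)^{-1}=(z-A)^{-1}(A-B)(z-B)^{-1}$, and then bound the resulting trace using \eqref{nora} and the $\alpha$-quasi-norm inequality \eqref{eq:quasi}; the weight $\abs{z}^{\alpha-1}$ coming from the contour integral of $f$ near $0$ is precisely what converts the single power of $\abs{A-B}$ into $\abs{A-B}^\alpha$ after an interpolation between the trivial bound and the resolvent bound. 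The main technical obstacle is thus managing the non-integrable-looking derivative growth of $f$ at $0$, which is resolved by the fact that $\alpha>0$ keeps the relevant contour integrals convergent, so that all constants $C$ depend only on $\alpha$ and the quantity in Condition \ref{c:sob}.
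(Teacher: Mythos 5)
Your trace--class argument and your first route are sound and, in fact, coincide with the paper's proof: the paper bounds $\abs{\tr f(A)-\tr f(B)}\le \|f(A)-f(B)\|_1$ and then cites Theorem 2.4 of \cite{So:15}, which is exactly an operator--H\"older theorem tailored to Condition \ref{c:sob} (note that your scalar estimate $\abs{f(s)-f(t)}\le C\abs{s-t}^\alpha$ does follow from $\abs{f'(x)}\le C\abs{x}^{\alpha-1}$, and Condition \ref{c:sob} is precisely the weighted-derivative form of $\alpha$-H\"older regularity that Sobolev's result requires). Had you stopped after invoking that type of theorem --- in the correct quasi-Banach range, i.e.\ with $A-B$ in the Schatten class $S_\alpha$, $\alpha\le 1$, and the conclusion in $S_1$ --- your proof would be essentially the paper's.

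The genuine gap is in the ``cleaner route'' you propose as a substitute. First, the Dunford representation $f(A)-f(B)=\frac{1}{2\pi i}\oint f(z)\left[(z-A)^{-1}-(z-B)^{-1}\right]dz$ requires $f$ to be analytic in a neighborhood of the spectrum, and functions in this class are not: Condition \ref{c:sob} only gives $f\in C^2(0,1/4]$ with derivatives blowing up at the origin (for the entropy function $h_0$ one has $\abs{h_0'(t)}\sim \abs{\log t}$ as $t\to 0^+$, see \eqref{eq:h0exp}), and any admissible contour must pass through $t=0$, exactly where $f$ is singular. One can replace the Cauchy integral by a quasi-analytic (Helffer--Sj\"ostrand) extension valid for $C^2$ functions, but then the resolvent identity produces exactly one factor of $A-B$, hence bounds in terms of $\|A-B\|_1$, not $\left\|\abs{A-B}^\alpha\right\|_1$. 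Second, the step you describe as ``interpolation between the trivial bound and the resolvent bound'' is not a routine conversion: for $\alpha<1$ the class $S_\alpha$ carries only a quasi-norm, complex interpolation between an operator-norm bound and a trace-norm bound does not yield control by $\|A-B\|_\alpha^\alpha$, and the passage from a scalar H\"older modulus to a Schatten-class estimate of this shape is precisely the content of the Birman--Solomyak/Aleksandrov--Peller/Sobolev theorems (recall that even ordinary Lipschitz functions fail to be operator Lipschitz, so no soft lifting of scalar moduli of continuity can work). In short, the hard step of your plan is exactly the theorem you set out to avoid, so the argument as proposed does not close; the honest proof is the two-line reduction plus the citation, as in the paper.
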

\begin{proof}
It follows from the definition of $\|\cdot\|_1$ that
\[\abs{\tr f(A)-\tr f(B)} = \abs{ \tr \pa{f(A) - f(B)}} \le \|f(A) - f(B)\|_1.\]
We will use now a particular case of Theorem 2.4 of \cite{So:15}, according to which if $f$ satisfies
Condition \ref{c:sob} and $\alpha \in (0,1]$, then for
\[\|f(A) - f(B)\|_1 \le C\left\|\abs{A-B}^\alpha\right\|_1,\]
provided $0\le A,B\le 1/4$.\footnote{In \cite{So:15}, $f$ has to be globally defined. The restriction $0\le A,B\le 1/4$ allows us to consider $f$ defined on the interval $[0,1/4]$.}
Combining the above two bounds, we obtain \eqref{ftr}. Plugging in \eqref{ftr} $B = 0$  ($A = 0$), we find that $f(A)$ (respectively $f(B)$) are trace class.
\end{proof}

\medskip
We will also need the following bound \beq\label{eq:sumsbnd}
\sum_{y\in \mathcal{C}^c}\e^{-\gamma\abs{x-y}}\le C\, \e^{-\tfrac{\gamma\dist(x,\partial \mathcal{C})}{2}}, \;\; x \in \mathcal{C}\subset\Z^d,
\eeq
where $\partial \mathcal{C}$ is the boundary of $\mathcal{C}$ (see \eqref{partialC}). Indeed, for $\mathcal{C}_x=\{y\in \mathcal{C}:\ \abs{x-y}\le \dist(x,\partial \mathcal{C})\}$, we have
\[\abs{x-y}\ge \dist(x,\partial \mathcal{C})+\dist(y, \mathcal{C}_x), \;\forall
y\in\mathcal{C}^c.\] This implies
\begin{align*}
\sum_{y\in\mathcal{C}^c}\e^{-\gamma\abs{x-y}}\le \e^{-\gamma\dist(x,\partial \mathcal{C})} \sum_{y\in\Z^d}\e^{-\gamma\dist(y, \mathcal{C}_x)} \le \e^{-\gamma\dist(x,\partial \mathcal{C})} \sum_{y\in\Z^d}\sum_{z\in \mathcal{C}_x}\e^{-\gamma\abs{y-z}}\\ =\e^{-\gamma\dist(x,\partial \mathcal{C})} \sum_{z\in \mathcal{C}_x}\sum_{y\in\Z^d}\e^{-\gamma\abs{y-z}}\le C\abs{\mathcal{C}_x} \e^{-\gamma\dist(x,\partial \mathcal{C})},
\end{align*}
and \eqref{eq:sumsbnd} follows from $\abs{\mathcal{C}_x}=\pa{2\dist(x,\partial \mathcal{C})+1}^d$.

A similar argument yields for any $\mathcal{C}\subset \Z^d$
\beq\label{eq:sumsbnd1}
\sum_{x\in\Z^d} \e^{- \gamma\dist\pa{x,  \mathcal{C}}} \le C\,\abs{\mathcal{C}}.
\eeq

We are now ready to establish important technical estimates of the paper:
\begin{lemma}\label{lem:keytech}
Let $\mathcal C_1$ and $\mathcal C_2$ be non-intersecting subsets of $\Z^d$. Suppose that Assumption \ref{assump} holds and $f$ satisfies Condition \ref{c:sob}. Then we have:
\begin{enumerate}
\item[(i)]
for $m= 1,2,...$
\beq\label{keytech}
\E\,\set{\abs{f\pa{\Pi_{\mathcal C_1,\mathcal C_2}}(x,x)}^m}\le  C_{m}\,\e^{-\tfrac{\alpha\gamma\dist(x,\partial \mathcal C_2)}{2}}, \; x \in \mathcal{C}_1,
\eeq
where $C_m$ depends only on $m$;
\item[(ii)]
\beq\label{cor:Schatnor}
\E\,\set{\left\|\Pi_{\mathcal C_1,\mathcal C_2}\right\|^{2\alpha}_\alpha}\le C\, \min\pa{\abs{\partial \mathcal C_1}^2,\abs{\partial \mathcal C_2}^2} \e^{-\tfrac{\alpha\gamma\dist\pa{\mathcal C_1,\mathcal C_2}}{2}};
\eeq
\item[(iii)] assume that $\mathcal C_1,\mathcal C_2$ are separated by an affine hyperplane $\mathcal H=\{x\in\Z^d:\ x\cdot e_n=k\}$ for some $n$ and $k$ and that the set $\mathcal C_1$ is contained in the infinite cylinder  $\mathcal D$ (that extends in $e_n$ direction), and that $\mathcal C_1\cap\mathcal H\neq\emptyset$. Let $\mathcal G$ be the cross-section of $\mathcal D$ with $\mathcal H$, i.e., $\mathcal G=\mathcal D\cap \mathcal H$. Then
\beq\label{lem:Shatnor1}
\E\,\set{\left\|\Pi_{\mathcal C_1,\mathcal C_2}\right\|_\alpha^{2\alpha}}\le C\, \abs{\mathcal G}^2\,\e^{-\tfrac{\alpha\gamma\dist\pa{\mathcal C_1,\mathcal C_2}}{2}}.
\eeq
\end{enumerate}
\end{lemma}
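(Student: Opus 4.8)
Lemma \ref{lem:keytech} requires three exponential-decay estimates, and my plan is to derive them in the order (i), (ii), (iii), using each as a building block for the next.

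First I would prove (i). The key structural input is a pointwise bound on the diagonal matrix elements $f(\Pi_{\mathcal C_1,\mathcal C_2})(x,x)$ in terms of the entries of $P$. Since $0\le\Pi_{\mathcal C_1,\mathcal C_2}\le 1/4$ by Lemma \ref{l:picc}(ii) and $f$ satisfies Condition \ref{c:sob}, the bound \eqref{f} gives $|f(t)|\le C|t|^\alpha$; combined with the operator monotonicity/concavity machinery (this is presumably the bound \eqref{eq:decf} referenced in the proof of Theorem \ref{thm:apriori}) one gets
\[
\abs{f\pa{\Pi_{\mathcal C_1,\mathcal C_2}}(x,x)}\le C\,\abs{\Pi_{\mathcal C_1,\mathcal C_2}(x,x)}^\alpha
=C\Big(\sum_{z\in\mathcal C_2}\abs{P(x,z)}^2\Big)^\alpha.
\]
Raising to the $m$-th power and taking expectations, I would use Jensen (or H\"older) to pull the expectation inside the concave power $t\mapsto t^{\alpha m}$ when $\alpha m\le 1$, and for $\alpha m>1$ interpolate using $\|\Pi\|\le 1/4$ to reduce the exponent; either way the problem reduces to bounding $\sum_{z\in\mathcal C_2}\E\{\abs{P(x,z)}^2\}$. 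By \eqref{eq:Psqbn} this sum is dominated by $\sum_{z\in\mathcal C_2}C\e^{-\gamma\abs{x-z}}$, and since every $z\in\mathcal C_2\subset(\mathcal C_2^c)^c$ has $\abs{x-z}\ge\dist(x,\partial\mathcal C_2)$, the geometric-series estimate \eqref{eq:sumsbnd} yields $\sum_{z\in\mathcal C_2}\E\{\abs{P(x,z)}^2\}\le C\e^{-\gamma\dist(x,\partial\mathcal C_2)/2}$. Taking the $\alpha$-th power produces the factor $\e^{-\alpha\gamma\dist(x,\partial\mathcal C_2)/2}$ in \eqref{keytech}.

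Next I would prove (ii). The plan is to bound the Schatten quasi-norm by a trace-class norm after splitting $\mathcal C_1$ into single points. Writing $\Pi_{\mathcal C_1,\mathcal C_2}=\sum_{x\in\mathcal C_1}\chi_{\{x\}}P\chi_{\mathcal C_2}P\chi_{\mathcal C_1}$ is not quite an orthogonal decomposition, so instead I would use the factorization $\Pi_{\mathcal C_1,\mathcal C_2}=A^*A$ with $A=\chi_{\mathcal C_2}P\chi_{\mathcal C_1}$, so that $\|\Pi_{\mathcal C_1,\mathcal C_2}\|_\alpha^{2\alpha}=\|A^*A\|_\alpha^\alpha=\|A\|_{2\alpha}^{2\alpha}$, reducing everything to an $\alpha$-Schatten estimate on the off-diagonal block $A$. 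For $2\alpha\le 1$ the inequality $\|A\|_{2\alpha}^{2\alpha}\le\sum_{x\in\mathcal C_1,y\in\mathcal C_2}\abs{P(y,x)}^{2\alpha}$ (entrywise domination of Schatten quasi-norms below exponent one) gives, after taking expectations and using \eqref{eq:Psqbn} together with the double geometric sum \eqref{eq:sumsbnd}--\eqref{eq:sumsbnd1}, a bound of the form $C\abs{\partial\mathcal C_1}^2\e^{-\alpha\gamma\dist(\mathcal C_1,\mathcal C_2)/2}$; by the symmetric roles of $\mathcal C_1,\mathcal C_2$ (using $A^*$ instead) I would get the $\abs{\partial\mathcal C_2}^2$ version, hence the minimum. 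The main technical point here is controlling the number of boundary-adjacent lattice sites, i.e.\ converting the double sum over $\mathcal C_1\times\mathcal C_2$ into a boundary-cardinality factor times the distance-decay factor; this is exactly where \eqref{eq:sumsbnd} and \eqref{eq:sumsbnd1} do the work.

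Finally, part (iii) is a refinement of (ii) in the geometrically special situation where $\mathcal C_1$ sits inside an infinite cylinder $\mathcal D$ separated from $\mathcal C_2$ by the hyperplane $\mathcal H$. The point is that the crude boundary bound $\abs{\partial\mathcal C_1}^2$ can be replaced by the cross-sectional area $\abs{\mathcal G}^2$, because decay in the $e_n$ direction is already captured by $\e^{-\alpha\gamma\dist(\mathcal C_1,\mathcal C_2)/2}$ and the only ``free'' directions contributing to the prefactor are the transverse ones spanning $\mathcal G$. Concretely, I would run the same $A=\chi_{\mathcal C_2}P\chi_{\mathcal C_1}$ argument as in (ii), but organize the double sum over $\mathcal C_1\times\mathcal C_2$ by summing first over the $e_n$-coordinate (which is absorbed into a convergent geometric series measured from $\mathcal H$) and then over the transverse coordinates in $\mathcal G$; the transverse sum contributes $\abs{\mathcal G}$ per factor, giving $\abs{\mathcal G}^2$ overall. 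I expect the main obstacle across the whole lemma to be part (i)'s passage from the operator $f(\Pi_{\mathcal C_1,\mathcal C_2})$ to its diagonal entries for a merely $\alpha$-H\"older (not Lipschitz) function $f$ with a possible singularity at the origin, since this requires the smoothing estimate \eqref{eq:decf} rather than a naive Jensen bound; once that pointwise bound is in hand, parts (ii) and (iii) are organized applications of the geometric sums \eqref{eq:sumsbnd}--\eqref{eq:sumsbnd1} and the Schatten factorization.
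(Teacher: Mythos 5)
Your part (i) is correct and is essentially the paper's own argument: the pointwise bound $\abs{f\pa{\Pi_{\mathcal C_1,\mathcal C_2}}(x,x)}\le C\pa{\Pi_{\mathcal C_1,\mathcal C_2}(x,x)}^\alpha$ via \eqref{f} and the matrix Jensen inequality (Lemma \ref{l:pieq}), reduction of the $m$-th power to the first power using $0\le\Pi_{\mathcal C_1,\mathcal C_2}\le1$, and then Jensen/H\"older for expectations together with \eqref{eq:Psqbn} and \eqref{eq:sumsbnd}. The problem is in parts (ii) and (iii). Under the paper's definition \eqref{tral} of the quasi-norm, $\left\|\Pi_{\mathcal C_1,\mathcal C_2}\right\|_\alpha^{2\alpha}=\pa{\tr \Pi_{\mathcal C_1,\mathcal C_2}^\alpha}^2$, whereas for $A=\chi_{\mathcal C_2}P\chi_{\mathcal C_1}$ one has $\left\|A\right\|_{2\alpha}^{2\alpha}=\tr\pa{A^*A}^\alpha=\tr \Pi_{\mathcal C_1,\mathcal C_2}^\alpha$. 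So your identity $\left\|\Pi_{\mathcal C_1,\mathcal C_2}\right\|_\alpha^{2\alpha}=\left\|A\right\|_{2\alpha}^{2\alpha}$ drops a square: the correct statement is $\left\|\Pi_{\mathcal C_1,\mathcal C_2}\right\|_\alpha^{2\alpha}=\pa{\left\|A\right\|_{2\alpha}^{2\alpha}}^2$. This is not cosmetic, because the trace is a random variable: your argument bounds $\E\set{\tr \Pi^\alpha_{\mathcal C_1,\mathcal C_2}}$, while \eqref{cor:Schatnor} and \eqref{lem:Shatnor1} assert bounds on $\E\set{\pa{\tr \Pi^\alpha_{\mathcal C_1,\mathcal C_2}}^2}$, and a first-moment bound never controls a second moment. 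The mean-square form is precisely what the lemma is used for downstream (Lemmas \ref{lem:buffer}--\ref{lem:compinf} and the variance estimates), so the gap must be closed. The repair is the step you skipped: expand the square and apply the Minkowski-type inequality \eqref{CS} to reduce to second moments --- either of the diagonal entries $\pa{\Pi^\alpha_{\mathcal C_1,\mathcal C_2}}(x,x)$, which is the paper's route (part (i) with $f(t)=t^\alpha$ and $m=2$, then the distance inequality \eqref{ddd}, respectively its cylinder analogue $2\dist(x,\partial\mathcal C_2)\ge\dist(x,\partial\mathcal G)+\dist(\mathcal C_1,\mathcal C_2)$ for (iii), then \eqref{eq:sumsbnd1} and the $AA^*$ versus $A^*A$ symmetry for the minimum) --- or of your entrywise quantities $\abs{P(x,y)}^{2\alpha}$, using $\abs{P(x,y)}\le1$ and Jensen. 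Note that your own accounting in (iii) (``$\abs{\mathcal G}$ per factor'') betrays the confusion: the single double sum over $\mathcal C_1\times\mathcal C_2$ yields only one factor of $\abs{\mathcal G}$, and the second factor appears exactly when the square is restored.

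A secondary, repairable issue: the entrywise Schatten bound $\left\|A\right\|_{2\alpha}^{2\alpha}\le\sum_{x,y}\abs{A(x,y)}^{2\alpha}$ that your (ii) relies on follows from the $p$-triangle inequality \eqref{eq:quasi} (writing $A$ as a sum of rank-one matrix units) only when $p=2\alpha\le1$, i.e., $\alpha\le1/2$, while Condition \ref{c:sob} and the lemma permit $\alpha\in(0,1]$. You can always pass to a smaller exponent, since Condition \ref{c:sob} with exponent $\alpha$ implies it with any $\alpha'\in(0,\alpha]$, but this degrades the decay rate in \eqref{cor:Schatnor}--\eqref{lem:Shatnor1} below the stated $\alpha\gamma/2$. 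The paper's route through the diagonal entries of $\Pi^\alpha_{\mathcal C_1,\mathcal C_2}$ avoids this restriction altogether, which is one concrete advantage of deriving (ii) and (iii) directly from part (i) rather than from an off-diagonal block estimate.
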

\begin{proof}
(i) It follows from \eqref{f} and the spectral theorem for hermitian matrices that
\[\abs{f\pa{\Pi_{\mathcal C_1,\mathcal C_2}}(x,x)}\le C\, (\Pi^\alpha_{\mathcal C_1,\mathcal C_2})(x,x).\]
We now apply Lemma \ref{l:pieq} to conclude that
\[\abs{f\pa{\Pi_{\mathcal C_1,\mathcal C_2}}(x,x)}\le C\, \pa{\Pi_{\mathcal C_1,\mathcal C_2}(x,x)}^\alpha.\]
This bound, \eqref{picc} and the inequality $0\le \Pi_{\mathcal C_1,\mathcal C_2}\le 1$ (see Lemma \ref{l:picc} (ii)) imply that for $m \ge 1$
\beq\label{ket}
\abs{f\pa{\Pi_{\mathcal C_1,\mathcal C_2}}(x,x)}^m\le C^m\, \pa{\Pi_{\mathcal C_1,\mathcal C_2}(x,x)}^{m\alpha}\le C^m\, \pa{\Pi_{\mathcal C_1,\mathcal C_2}(x,x)}^{\alpha}.
\eeq
Next, it follows from \eqref{picc} and \eqref{sr} that
\beq\label{fp}
\pa{\Pi_{\mathcal C_1,\mathcal C_2}(x,x)}^{\alpha}\le \pa{\sum_{y\in\mathcal C_2}\abs{P(x,y)}^2}^{\alpha}\le \pa{\sum_{y\in\mathcal C_2}\abs{P(x,y)}}^{\alpha}.
\eeq
Now, by using in the r.h.s. the H\"older inequality for expectations and \eqref{b_P}, we obtain
\begin{align}\label{eq:decf}
\E\set{\abs{f\pa{\Pi_{\mathcal C_1,\mathcal C_2}}(x,x)}^m}\le C^m\,\pa{\sum_{y\in\mathcal C_2}\E\set{\abs{P(x,y)}}}^{\alpha} \\ \notag \le C_{1m} \,\pa{\sum_{y\in\mathcal C_2}\e^{-\gamma\abs{x-y}}}^\alpha \le C_{m}\,\e^{-\tfrac{\alpha\gamma\dist(x,\partial \mathcal C_2)}{2}},
\end{align}
where  in the last step we used \eqref{eq:sumsbnd}.

(ii) Taking into account that $\Pi_{\mathcal C_1,\mathcal C_2}$, hence $\Pi^\alpha_{\mathcal C_1,\mathcal C_2}$, is positive definite, we deduce
\[\left\|\Pi_{\mathcal C_1,\mathcal C_2}\right\|_\alpha^{2\alpha}=\pa{\tr (\Pi_{\mathcal C_1,\mathcal C_2})^\alpha}^2=\pa{\sum_{x\in\mathcal C_1}(\Pi^\alpha_{\mathcal C_1,\mathcal C_2})(x,x)}^2.
\]
We now choose $f(x) = x^{\alpha}$ and  $m=1$ in Lemma \ref{lem:keytech} (i) and use inequalities \eqref{CS} and
\beq\label{ddd}
\dist (x, \partial \mathcal C_2) \ge \dist (x, \partial \mathcal C_1) + \dist (\mathcal C_1,\mathcal C_2),\quad x\in  \mathcal C_1
\eeq
to get
\begin{eqnarray*}
\mathbf{E}\{||\Pi _{\mathcal{C}_{1},\mathcal{C}_{2}}||_{\alpha }^{2\alpha
}\} &\leq &C\left( \sum_{x\in \mathcal{C}_{1}}\e^{-\frac{\alpha \gamma
\dist(x,\partial \mathcal{C}_{2})}{2}}\right) ^{2}\leq C\e^{-\alpha
\gamma  \dist(\mathcal{C}_{1},\mathcal{C}_{2})}\left( \sum_{x\in
\mathcal{C}_{1}}\e^{-\frac{\alpha \gamma \dist(x,\partial \mathcal{C}%
_{1})}{2}}\right) ^{2} \\
&\leq &C\e^{-\alpha \gamma \dist(\mathcal{C}_{1},\mathcal{C}%
_{2})}\left( \sum_{x\in \partial \mathcal{C}_{1}}\sum_{y\in \mathbb{Z}%
^{d}}e^{-\frac{\alpha \gamma |x-y|}{2}}\right) ^{2}\leq C|\partial \mathcal{C%
}_{1}|^{2}\e^{-\alpha \gamma \dist (\mathcal{C}_{1},\mathcal{C}_{2})}.
\end{eqnarray*}

But since in view of \eqref{picc}
\begin{equation}\label{Pires}
\Pi_{\mathcal C_1,\mathcal C_2}=AA^*, \; \;
\Pi_{\mathcal C_2,\mathcal C_1}=A^*A, \; \;
A=\chi_{\mathcal C_1}P\chi_{\mathcal C_2},
\end{equation}
we have $\left\|\Pi_{\mathcal C_1,\mathcal C_2}\right\|_\alpha=\left\|\Pi_{\mathcal C_2,\mathcal C_1}\right\|_\alpha$, and the result follows.

(iii) The assertion follows by the same argument used in Part (ii), if we replace  the bound \eqref{ddd} with
\[2\dist (x, \partial \mathcal C_2) \ge \dist (x, \partial \mathcal G) + \dist (\mathcal C_1,\mathcal C_2) ,\quad x\in  \mathcal C_1,\]
valid in this more restricted context.
\end{proof}
\begin{lemma}\label{corol:rough}
Suppose that Assumption \ref{assump} holds and $f$ satisfies Condition \ref{c:sob}. Then,  for any $d \ge 1$, we have
\[
\E\,\{\|f(\Pi_{\La_M})\|^2_1\}\le CM^{2(d-1)}
\]
and
\[
\E\,\{\|\chi_{\La_M}f(\Pi_{\Z^d_{+}})\chi_{\La_M}\|^2_1 \}\le CM^{2(d-1)},
\]
where $\La_M$, $\Pi_{\mathcal C}$, and $\Z^d_{+}$ are introduced  in \eqref{Lad}, \eqref{pic} and \eqref{zpmd}, respectively.
\end{lemma}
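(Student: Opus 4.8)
The plan is to reduce both inequalities to the diagonal decay estimate of Lemma~\ref{lem:keytech}, after squaring and applying the inequality \eqref{CS}. Throughout I use that $\La_M$ is finite, so every operator in sight is trace class, and that each $\Pi$ has spectrum in $[0,1/4]$ by Lemma~\ref{l:picc}(ii); in particular the scalar bound \eqref{f}, $\abs{f(t)}\le C\abs{t}^\alpha$, is available on the relevant spectra.

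For the first inequality I would work entirely at the level of Schatten quasi-norms. Since $\Pi_{\La_M}=\Pi_{\La_M,\La_M^c}\ge0$ has eigenvalues $\lambda_i\in[0,1/4]$, functional calculus together with \eqref{f} gives, pointwise in $\omega$, $\|f(\Pi_{\La_M})\|_1=\sum_i\abs{f(\lambda_i)}\le C\sum_i\lambda_i^\alpha=C\tr\Pi_{\La_M}^\alpha=C\|\Pi_{\La_M}\|_\alpha^\alpha$. Squaring and taking expectations, I then invoke Lemma~\ref{lem:keytech}(ii) with $\mathcal C_1=\La_M$, $\mathcal C_2=\La_M^c$. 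As $\dist(\La_M,\La_M^c)=1$ and $\abs{\partial\La_M}\le CM^{d-1}$, this yields $\E\set{\|f(\Pi_{\La_M})\|_1^2}\le C\,\E\set{\|\Pi_{\La_M,\La_M^c}\|_\alpha^{2\alpha}}\le CM^{2(d-1)}$.

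The second inequality is the more delicate one, and I expect it to be the main obstacle: the operator $\chi_{\La_M}f(\Pi_{\Z^d_+})\chi_{\La_M}$ is a compression of a half-space operator and is not itself of the form $\Pi_{\mathcal C_1,\mathcal C_2}$, so Lemma~\ref{lem:keytech}(ii) cannot be applied directly. I would circumvent this by passing to a diagonal estimate. Writing $B=f(\Pi_{\Z^d_+})$ and splitting $B=B_+-B_-$ into positive and negative parts, the triangle inequality for $\|\cdot\|_1$ gives $\|\chi_{\La_M}B\chi_{\La_M}\|_1\le\tr\pa{\chi_{\La_M}\abs{B}\chi_{\La_M}}=\sum_{x\in\La_M}\langle\delta_x,\abs{B}\delta_x\rangle$. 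Now $\abs{B}=\abs{f}(\Pi_{\Z^d_+})$, and the scalar inequality $\abs{f(t)}\le C t^\alpha$ on $[0,1/4]$ promotes, by functional calculus, to the operator inequality $\abs{f}(\Pi_{\Z^d_+})\le C\,\Pi_{\Z^d_+}^\alpha$; hence by Lemma~\ref{l:pieq} (Jensen) $\langle\delta_x,\abs{B}\delta_x\rangle\le C\,\Pi_{\Z^d_+}^\alpha(x,x)\le C\pa{\Pi_{\Z^d_+}(x,x)}^\alpha$. Since $B$ is supported on $\ell^2(\Z^d_+)$, only the sites $x$ with $\langle x,e_1\rangle\ge0$ contribute.

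Finally I would square the resulting bound $\|\chi_{\La_M}B\chi_{\La_M}\|_1\le C\sum_{x\in\La_M\cap\Z^d_+}\pa{\Pi_{\Z^d_+}(x,x)}^\alpha$ and apply \eqref{CS}, which reduces matters to estimating $\E\set{\pa{\Pi_{\Z^d_+}(x,x)}^{2\alpha}}$. Because $\Pi_{\Z^d_+}(x,x)\le1$ I may replace $2\alpha$ by $\alpha$, and then the chain of estimates from the proof of Lemma~\ref{lem:keytech}(i) (H\"older for expectations together with \eqref{eq:Psqbn} and the geometric sum \eqref{eq:sumsbnd} applied with $\mathcal C=\Z^d_+$) gives $\E\set{\pa{\Pi_{\Z^d_+}(x,x)}^{2\alpha}}\le C\e^{-\alpha\gamma\langle x,e_1\rangle/2}$, the decay being in the distance of $x$ from the hyperplane $\partial\Z^d_+=\set{x:\langle x,e_1\rangle=0}$. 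Thus $\pa{\E\set{\pa{\Pi_{\Z^d_+}(x,x)}^{2\alpha}}}^{1/2}\le C\e^{-\alpha\gamma\langle x,e_1\rangle/4}$, and summing over $x\in\La_M\cap\Z^d_+$ the $e_1$-direction contributes a convergent geometric series while each of the $d-1$ transverse coordinates ranges over $[-M,M]$, producing a factor $\le CM^{d-1}$. Squaring gives the claimed bound $CM^{2(d-1)}$.
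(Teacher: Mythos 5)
Your proof is correct, and for the first bound it is identical to the paper's: the pointwise estimate $\|f(\Pi_{\La_M})\|_1\le C\|\Pi_{\La_M}\|_\alpha^\alpha$ from \eqref{f} and the spectral theorem, followed by Lemma \ref{lem:keytech}(ii) with $\mathcal C_1=\La_M$, $\mathcal C_2=\La_M^c$. For the second bound your core mechanism is also the paper's -- reduce to diagonal entries, use the exponential decay chain behind Lemma \ref{lem:keytech}(i), apply \eqref{CS}, and sum a convergent geometric series in the $e_1$-direction against $O(M^{d-1})$ transverse sites -- but there is one genuine difference worth recording. The paper's display bounds $\E\set{\pa{\tr \chi_{\La_M}f(\Pi_{\Z^d_{+}})\chi_{\La_M}}^2}$, i.e. the square of the \emph{trace}, which coincides with the square of the \emph{trace norm} demanded by the statement only when the compressed operator is positive semidefinite (e.g. when $f\ge 0$, as holds for $h_0$ in the intended application); for a Hermitian operator $A$ one only has $\|A\|_1\ge\abs{\tr A}$, and the trace norm is not controlled by diagonal entries in general. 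Your route closes exactly this gap: the decomposition $B=B_+-B_-$ yields $\|\chi_{\La_M}B\chi_{\La_M}\|_1\le\tr \chi_{\La_M}\abs{B}\chi_{\La_M}$, and the operator inequality $\abs{f}(\Pi_{\Z^d_+})\le C\,\Pi_{\Z^d_+}^{\alpha}$ combined with Jensen (Lemma \ref{l:pieq}) reduces the trace norm to the diagonal quantities $\pa{\Pi_{\Z^d_+}(x,x)}^\alpha$, after which the argument proceeds as in the paper. So the strategy is the same, but your version is valid for every (possibly sign-changing) $f$ satisfying Condition \ref{c:sob}, i.e. it proves the lemma as literally stated rather than its specialization to nonnegative $f$.
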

\begin{proof}
It follows from \eqref{f}, the spectral theorem and the definition \eqref{tral}  that
\[\|f(\Pi_{\La_M})\|_1\le C\left\|\Pi_{\La_M}\right\|^{\alpha}_\alpha.\]
Hence, Lemma \ref{lem:keytech} (ii) yields
\[
\E\set{\|f\pa{\Pi_{\La_M}}\|_1^2}\le C\abs{\partial\La_M}^2= C_1\,M^{2(d-1)}.
\]
On the other hand, using \eqref{CS} and Lemma \ref{lem:keytech} (i), we can bound
\begin{align*}
\E\,\Big\{\pa{\tr \chi_{\La_M}f(\Pi_{\Z^d_{+}})\chi_{\La_M}}^2\Big\}\le C\,
\Big(\sum_{x\in \La_M}\e^{-\tfrac{\alpha\gamma\dist(x,\partial  \Z^d_{+})}{2}}\Big)^2 \\ \le C\,\Big(\sum_{x\in\Z^d}\sum_{y\in \partial \Z^d_{+}\cap \La_M} \e^{-\tfrac{\alpha\gamma\abs{x-y}}{2}}\Big)^2 \le C_1\,\abs{\partial\La_M}^2= C_2\,M^{2(d-1)}.
\end{align*}
\end{proof}
\begin{lemma}\label{lem:buffer}
Suppose that Assumption \ref{assump} holds and $f$ satisfies Condition \ref{c:sob}. We have a bound
\beq\label{eq:hollow}
\E\,\Big\{\Big|\tr f\pa{\Pi_{\La_M}}-\sum_{n=-d}^{d}\tr \chi_{\mathcal B_M^{(n)}}f\pa{\Pi_{\La_M}}\chi_{\mathcal B_M^{(n)}}\Big|^2 \Big\}\le CR_d(M),
\eeq
where $\mathcal B_M^{(n)}$ and  $R_d(M)$ are defined  in \eqref{eq:bufferset} and  \eqref{RM} respectively.

\end{lemma}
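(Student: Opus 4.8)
The plan is to reduce \eqref{eq:hollow} to a second-moment estimate for the diagonal sum over those sites of $\La_M$ left uncovered by the surface layers. By \eqref{eq:buffer} the layers $\set{\mathcal B_M^{(n)}}_{n=-d}^{d}$ are pairwise disjoint, so if we set $B=\cup_{n=-d}^{d}\mathcal B_M^{(n)}$ then $\sum_{n=-d}^{d}\chi_{\mathcal B_M^{(n)}}=\chi_{B}$ and
\[
\tr f\pa{\Pi_{\La_M}}-\sum_{n=-d}^{d}\tr \chi_{\mathcal B_M^{(n)}}f\pa{\Pi_{\La_M}}\chi_{\mathcal B_M^{(n)}}=\sum_{x\in\La_M\setminus B}f\pa{\Pi_{\La_M}}(x,x)=:X,
\]
since for any $\mathcal C$ one has $\tr \chi_{\mathcal C}f\pa{\Pi_{\La_M}}\chi_{\mathcal C}=\sum_{x\in\mathcal C}f\pa{\Pi_{\La_M}}(x,x)$. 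Hence it suffices to prove $\E\set{\abs{X}^2}\le CR_d(M)$.

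First I would partition $\La_M\setminus B$ into the deep interior $S_{\mathrm{int}}=\set{x\in\La_M:\ \dist(x,\partial\La_M)\ge\ell}$ and the remaining edge region $S_{\mathrm{edge}}=\pa{\La_M\setminus B}\setminus S_{\mathrm{int}}$. The decisive geometric point, read off from \eqref{fac}, \eqref{fach} and \eqref{eq:bufferset}, is that if $x\in\La_M\setminus B$ lies within distance $\ell$ of some face $\mathcal F_M^{(n)}$, then the projection of $x$ onto that face cannot lie in the truncated face $\widehat{\mathcal F}_M^{(n)}$ (otherwise $\dist(x,\widehat{\mathcal F}_M^{(n)})<\ell$ would force $x\in\mathcal B_M^{(n)}$); by \eqref{fach} this projection is then within $3\ell$ of an adjacent face, so that $x$ lies in an $O(\ell)$-neighborhood of one of the edges of $\La_M$. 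Since $\La_M$ has $O(d^2)$ such edges, each being a set of $(d-2)$-dimensional extent $O(M)$ thickened by $O(\ell)$ in the two transverse directions, this yields the cardinality bound $\abs{S_{\mathrm{edge}}}\le C(d-1)\ell^2M^{d-2}$.

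Next I would apply \eqref{CS} to $X=\sum_{x\in\La_M\setminus B}f\pa{\Pi_{\La_M}}(x,x)$, giving
\[
\pa{\E\set{\abs{X}^2}}^{1/2}\le \sum_{x\in\La_M\setminus B}\pa{\E\set{\abs{f\pa{\Pi_{\La_M}}(x,x)}^2}}^{1/2}.
\]
For $x\in S_{\mathrm{int}}$ I would use Lemma \ref{lem:keytech}(i) with $m=2$, $\mathcal C_1=\La_M$ and $\mathcal C_2=\La_M^c$: since $\dist(x,\partial\La_M^c)\ge\dist(x,\partial\La_M)\ge\ell$, each summand is at most $C\e^{-\alpha\gamma\ell/4}$, and summing over the at most $CM^d$ interior sites produces a contribution of the form $CM^d\e^{-c\ell}$, $c>0$, which is the first term $M^d\e^{-\alpha\gamma\ell/2}$ of $\pa{R_d(M)}^{1/2}$ up to the value of the constant in the exponent (immaterial here, as $\ell=[c'\ln M]$ with $c'$ free in \eqref{l12}). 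For $x\in S_{\mathrm{edge}}$ I would use $0\le\Pi_{\La_M}\le1/4$ (Lemma \ref{l:picc}(ii)) together with Condition \ref{c:sob} to bound $\abs{f\pa{\Pi_{\La_M}}(x,x)}\le C$, so that the edge sites contribute at most $C\abs{S_{\mathrm{edge}}}\le C(d-1)\ell^2M^{d-2}$, the second term of $\pa{R_d(M)}^{1/2}$. Adding the two contributions and squaring gives $\E\set{\abs{X}^2}\le CR_d(M)$ with $R_d$ as in \eqref{RM}.

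The main obstacle is the geometric bookkeeping of the second paragraph: one has to check that the $3\ell$-truncation encoded in \eqref{fach} strips off exactly the edge collars, so that the uncovered set $\La_M\setminus B$ splits cleanly into the exponentially suppressed deep interior and an edge region of cardinality $O\pa{(d-1)\ell^2M^{d-2}}$. Once this decomposition and count are in place, the analytic content reduces to a single application each of \eqref{CS} and Lemma \ref{lem:keytech}, and the estimate follows.
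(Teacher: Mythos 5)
Your proposal is correct and takes essentially the same route as the paper's proof: the paper decomposes $\La_M$ into $\La_{M-\ell}$, the layers $\mathcal B_M^{(n)}$, and a leftover edge set $\mathcal S_M$ whose cardinality $\le C_d(d-1)\ell^2M^{d-2}$ it obtains by direct counting (rather than your projection-onto-faces argument), and then, exactly as you do, applies \eqref{CS} together with Lemma \ref{lem:keytech}(i) to the deep interior and the trivial bound to the edge sites. The exponent mismatch you flag ($\mathrm{e}^{-\alpha\gamma\ell/4}$ from Lemma \ref{lem:keytech}(i) with $m=2$ versus $\mathrm{e}^{-\alpha\gamma\ell/2}$ in \eqref{RM}) occurs in the paper's own proof as well and is indeed immaterial, since $R_d(M)$ is only ever used with $\ell=[c\log M]$ and $c$ at our disposal.
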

\begin{proof}
Given $\ell \in \mathbb{N}$, write
\beq\label{eq:S_m}
\La_M=\La_{M-\ell}\cup \mathcal S_M\cup_{n=1}^{2d} \mathcal B_M^{(n)},\quad
\mathcal S_M=\La_M\setminus\pa{\La_{M-\ell}\cup_{n=1}^{2d} \mathcal B_M^{(n)}}
\eeq
to obtain
\begin{multline*}
\tr f\pa{\Pi_{\La_M}}-\sum_{n=1}^{2d}\tr \chi_{\mathcal B_M^{(n)}}f\pa{\Pi_{\La_M}}\chi_{\mathcal B_M^{(n)}} =\tr \chi_{\La_{M-\ell}}f\pa{\Pi_{\La_{M}}}\chi_{\La_{M-\ell}}+\tr \chi_{\mathcal S_M}f\pa{\Pi_{\La_M}}\chi_{\mathcal S_M}.
\end{multline*}
Since
\begin{align*}
&\hspace{-1cm}\abs{\mathcal S_M}=\abs{\La_M}-\abs{\La_{M-\ell}}-2d\abs{\mathcal  B_M^{(n)}}\\& = \pa{2M+1}^d-\pa{2M-2\ell+1}^d-2d\ell\pa{2M-4\ell-1}^{d-1}
\\& \hspace{1cm} \le C_d(d-1)\ell^2 M^{d-2},
\end{align*}
the application of \eqref{CS} and Lemma \ref{lem:keytech} (i) yield
\[\E\,\set{\abs{\tr \chi_{\La_{M-\ell}}f\pa{\Pi_{\La_{M}}}\chi_{\La_{M-\ell}}}^2}\le C\,\pa{M^d\,\e^{-\tfrac{\alpha\gamma\ell}{4}}}^2\]
and
\[\E\,\Big\{\abs{\tr \chi_{\mathcal S_M}f\pa{\Pi_{\La_M}}\chi_{\mathcal S_M}}^2\Big\}\le C\abs{\mathcal S_M}^2\le C\pa{(d-1)\ell^2 M^{d-2}}^2,\]
implying  the result.
\end{proof}
\begin{lemma}\label{lem:indisting}
Suppose that Assumption \ref{assump} holds and $f$ satisfies Condition \ref{c:sob}. Then, for each $n=\pm 1,...,\pm d$, we have
\beq\label{eq:indisting}
\E\set{\abs{\tr \chi_{\mathcal B_M^{(n)}}f\pa{\Pi_{\La_M}}\chi_{\mathcal B_M^{(n)}}-\tr f\pa{\chi_{\mathcal B_M^{(n)}}\Pi_{\Z^d_{+,n}}\chi_{\mathcal B_M^{(n)}}}}^2}  \le C\,R_d(M),
\eeq
where $R_d(M)$ is defined in \eqref{RM} and $\Z^d_{+,n}$ is a rigid lattice motion of $\Z^d_+$ such that $\La_M\subset\Z^d_{+,n}$ and $\mathcal{ F}_M^{(n)}\subset \partial \Z^d_{+,n}$.
\end{lemma}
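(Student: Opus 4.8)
The plan is to fix $n$; since the argument is identical for every face, I will describe $n=1$, where $H_+:=\Z^d_{+,1}=\set{x\in\Z^d:\ x_1\le M}\supset\La_M$, the general case amounting to replacing face $1$ by face $n$ throughout. Abbreviate $\mathcal B=\mathcal B_M^{(n)}$ and $\mathcal B'=\La_M\setminus\mathcal B$. I would interpolate between the two traces in the statement through the intermediate quantity $\tr f\pa{\chi_{\mathcal B}\Pi_{\La_M}\chi_{\mathcal B}}=\tr f\pa{\Pi_{\mathcal B,\La_M^c}}$ (compression first, $f$ afterwards, still with the cube operator). By the triangle inequality and $(a+b)^2\le2a^2+2b^2$ it then suffices to bound the mean squares of
\[
\mathrm{(A)}=\tr\chi_{\mathcal B}f\pa{\Pi_{\La_M}}\chi_{\mathcal B}-\tr f\pa{\chi_{\mathcal B}\Pi_{\La_M}\chi_{\mathcal B}},\qquad
\mathrm{(B)}=\tr f\pa{\chi_{\mathcal B}\Pi_{\La_M}\chi_{\mathcal B}}-\tr f\pa{\chi_{\mathcal B}\Pi_{H_+}\chi_{\mathcal B}}
\]
separately.

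The term $\mathrm{(B)}$ is the easy one. Writing $\chi_{\mathcal B}\Pi_{\La_M}\chi_{\mathcal B}=\Pi_{\mathcal B,\La_M^c}$ and $\chi_{\mathcal B}\Pi_{H_+}\chi_{\mathcal B}=\Pi_{\mathcal B,H_+^c}$ (both in $[0,1/4]$ by Lemma \ref{l:picc}(ii)), Lemma \ref{lem:trineq} gives $\abs{\mathrm{(B)}}\le C\|\Pi_{\mathcal B,\La_M^c}-\Pi_{\mathcal B,H_+^c}\|_\alpha^\alpha$. Because $\La_M\subset H_+$, the central indicators differ by $\chi_{\La_M^c}-\chi_{H_+^c}=\chi_{H_+\setminus\La_M}$, so the difference equals $\Pi_{\mathcal B,H_+\setminus\La_M}$. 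The truncation in \eqref{fach}--\eqref{eq:bufferset} keeps $\mathcal B$ at distance $\dist\pa{\mathcal B,H_+\setminus\La_M}\ge2\ell$ from that exterior region, so Lemma \ref{lem:keytech}(ii) yields $\E\set{\abs{\mathrm{(B)}}^2}\le C\abs{\partial\mathcal B}^2\e^{-\alpha\gamma\ell}\le CM^{2(d-1)}\e^{-\alpha\gamma\ell}$, which is dominated by the first term of $R_d(M)$ in \eqref{RM}.

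The term $\mathrm{(A)}$ is the heart of the matter, measuring the cost of moving the compression inside $f$. Decomposing $\Pi_{\La_M}$ into its block-diagonal part $\Pi_0=\chi_{\mathcal B}\Pi_{\La_M}\chi_{\mathcal B}+\chi_{\mathcal B'}\Pi_{\La_M}\chi_{\mathcal B'}$ and its off-diagonal part $W=\Pi_{\La_M}-\Pi_0$, the relation \eqref{eq:f(A+B)} (with $f(0)=0$) gives $\chi_{\mathcal B}f(\Pi_0)\chi_{\mathcal B}=f\pa{\chi_{\mathcal B}\Pi_{\La_M}\chi_{\mathcal B}}$, so that $\mathrm{(A)}=\tr\chi_{\mathcal B}\pa{f(\Pi_{\La_M})-f(\Pi_0)}\chi_{\mathcal B}$ and hence $\abs{\mathrm{(A)}}\le\|f(\Pi_{\La_M})-f(\Pi_0)\|_1\le C\|W\|_\alpha^\alpha$, the last step being the trace-difference estimate underlying Lemma \ref{lem:trineq} (valid since $0\le\Pi_{\La_M},\Pi_0\le1/4$). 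Using $\|W\|_\alpha^\alpha\le2\|\chi_{\mathcal B}\Pi_{\La_M}\chi_{\mathcal B'}\|_\alpha^\alpha$ via \eqref{eq:quasi}, the task reduces to bounding $\E\set{\|\chi_{\mathcal B}\Pi_{\La_M}\chi_{\mathcal B'}\|_\alpha^{2\alpha}}$.

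To control this coupling I would split $\mathcal B'=\mathcal E\cup\mathcal F$, with $\mathcal E$ the edge-adjacent shell (the points of $\mathcal B'$ within $O(\ell)$ of face $n$, which because of the truncation occur only near the $(d-2)$-dimensional edges, so $\abs{\mathcal E}\le C(d-1)\ell^2M^{d-2}$) and $\mathcal F$ the remainder, on which every $y$ lies at depth $\ge\ell$ from face $n$. On $\mathcal E$ the rank estimate $\|\chi_{\mathcal B}\Pi_{\La_M}\chi_{\mathcal E}\|_\alpha^\alpha\le C\abs{\mathcal E}$ (each singular value $\le\|\Pi_{\La_M}\|\le1/4$, rank $\le\abs{\mathcal E}$) reproduces deterministically the second term $(d-1)^2\ell^4M^{2(d-2)}$ of $R_d(M)$. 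On $\mathcal F$ every pair $(x,y)\in\mathcal B\times\mathcal F$ forces any exterior site $z\in\La_M^c$ in $\Pi_{\La_M}(x,y)=\sum_{z}P(x,z)P(z,y)$ to satisfy $\abs{x-z}+\abs{z-y}\ge\ell$; feeding this ``bridge'' length, with \eqref{eq:Psqbn} and \eqref{eq:sumsbnd}, into the Jensen inequality of Lemma \ref{l:pieq} applied to $(\chi_{\mathcal B}\Pi_{\La_M}\chi_{\mathcal F})^*(\chi_{\mathcal B}\Pi_{\La_M}\chi_{\mathcal F})$ and into \eqref{CS} gives a contribution of order $M^{2(d-1)}\e^{-c\alpha\gamma\ell}$ governed by the first term of $R_d(M)$. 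I expect the genuine obstacle to be precisely this last estimate: the block $\chi_{\mathcal B}\Pi_{\La_M}\chi_{\mathcal F}$ has rank $\sim M^d$, so a rank bound is useless and one must convert entrywise decay into a quasi-norm bound without the diagonal symmetry on which Lemma \ref{lem:keytech} relies, while keeping the decay rate fast enough that, with $\ell=[c\log M]$, the far contribution stays below $R_d(M)$. Pinning down the geometry that makes $\abs{\mathcal E}$ only $O\pa{(d-1)\ell^2M^{d-2}}$ yet guarantees a bridge of length $\ge\ell$ off $\mathcal E$ is the bookkeeping that reproduces exactly the two terms of $R_d(M)$; adding the $\mathcal E$ and $\mathcal F$ contributions bounds $\E\set{\|W\|_\alpha^{2\alpha}}$ by $CR_d(M)$, and combining with $\mathrm{(B)}$ completes the proof.
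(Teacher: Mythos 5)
Your reduction is, step for step, the paper's own: the same intermediate quantity $\tr f\pa{\chi_{\mathcal B}\Pi_{\La_M}\chi_{\mathcal B}}$; the same use of \eqref{eq:f(A+B)} together with the trace-norm estimate inside the proof of Lemma \ref{lem:trineq} to turn $\mathrm{(A)}$ into a bound on $\E\set{\|\chi_{\mathcal B}\Pi_{\La_M}\chi_{\mathcal B'}\|_\alpha^{2\alpha}}$; the same identification $\Pi_{\mathcal B,\La_M^c}-\Pi_{\mathcal B,(\Z^d_{+,n})^c}=\Pi_{\mathcal B,\Z^d_{+,n}\setminus\La_M}$ with separation $2\ell$ and Lemma \ref{lem:keytech}(ii) for $\mathrm{(B)}$; and the same edge/bulk splitting of $\mathcal B'$ (your $\mathcal E$ and $\mathcal F$ are the paper's $\mathcal T_M\setminus\mathcal R_M$ and $\mathcal R_M$), with the identical cardinality bound on the edge shell. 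The problem is that the one estimate you do not carry out --- and candidly label ``the genuine obstacle'' --- is the only step of this lemma that needs an idea, so as written the proposal has a genuine gap exactly there: you never actually bound $\E\set{\|\chi_{\mathcal B}\Pi_{\La_M}\chi_{\mathcal F}\|_\alpha^{2\alpha}}$.

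The paper closes this gap without ever estimating matrix elements of $\Pi_{\La_M}$ on $\mathcal B\times\mathcal F$. It writes the far block in factored form and splits the middle indicator according to which face of $\La_M$ the exterior site lies behind:
\begin{equation*}
\chi_{\mathcal F}\Pi_{\La_M}\chi_{\mathcal B}
=\chi_{\mathcal F}P\,\chi_{(\Z^d_{+,n})^c}\,P\chi_{\mathcal B}
+\chi_{\mathcal F}P\,\chi_{\La_M^c\setminus(\Z^d_{+,n})^c}\,P\chi_{\mathcal B}.
\end{equation*}
In the first summand the supports $\mathcal F$ and $(\Z^d_{+,n})^c$ are $\ell$-separated (your depth condition); in the second, $\La_M^c\setminus(\Z^d_{+,n})^c$ and $\mathcal B$ are $2\ell$-separated (the truncation \eqref{fach}). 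Discarding the harmless factor via \eqref{nora} (using $\|P\chi_{\mathcal B}\|\le1$, resp.\ $\|\chi_{\mathcal F}P\|\le1$) and noting $\|\chi_{\mathcal C_1}P\chi_{\mathcal C_2}\|_\alpha^{2\alpha}=\|\Pi_{\mathcal C_1,\mathcal C_2}\|_{\alpha/2}^{\alpha}$, one is back to self-adjoint operators $\Pi_{\mathcal C_1,\mathcal C_2}$ with well-separated index sets --- precisely the objects Lemma \ref{lem:keytech}(ii) controls, now applied with $\alpha/2$ in place of $\alpha$. This answers the difficulty you name: the ``diagonal symmetry'' is restored not by symmetrizing the rectangular block but by peeling off one factor of $P$, so that what remains is again a $\Pi$. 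For completeness: your sketched alternative (bridge length, \eqref{eq:Psqbn}, Lemma \ref{l:pieq}, \eqref{CS}) can in fact be pushed through --- use $\abs{\Pi_{\La_M}(x,y)}\le1/4$, hence $\abs{\Pi_{\La_M}(x,y)}^2\le\tfrac14\abs{\Pi_{\La_M}(x,y)}$, then Cauchy--Schwarz on $\E\set{\abs{P(x,z)}\abs{P(z,y)}}$ before summing over $z$ --- but the successive halvings of the exponent leave you with a rate like $\e^{-\alpha\gamma\ell/8}$ rather than the rate built into \eqref{RM}; this is harmless wherever the lemma is invoked (always with $\ell=[c\log M]$ and $c$ at one's disposal, see \eqref{l12}), but it is weaker than what the factorization gives, and in any case it is the part of your argument that remains a conjecture rather than a proof.
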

\begin{proof}
 The proof is similar to that of Lemma \ref{lem:keytech}. Set \beq\label{tcm}
\mathcal{T}_M=\La_M\setminus\mathcal  B_M^{(n)},
\eeq
then
\[\Pi_{\La_M}-\chi_{\mathcal B_M^{(n)}}\Pi_{\La_M}\chi_{\mathcal B_M
^{(n)}}-\chi_{\mathcal T_M}\Pi_{\La_M}\chi_{\mathcal T_M}=\chi_{\mathcal T_M}\Pi_{\La_M}\chi_{\mathcal B_M^{(n)}}+h.c.\]
We claim that
\beq\label{eq:croster}
\E\,\Big\{\Big\| \chi_{\mathcal T_M}\Pi_{\La_M}\chi_{\mathcal B_M^{(n)}}\Big\|_\alpha^{2\alpha} \Big\} \le C\, R_d(M).
\eeq
This bound, \eqref{eq:f(A+B)}, Lemma \ref{lem:trineq} and \eqref{eq:quasi} yield
\beq\label{eq:compr1}
\E\set{\abs{\tr  \chi_{\mathcal B_M^{(n)}}f\pa{\Pi_{\La_M}}\chi_{\mathcal B_M^{(n)}}-\tr  f\pa{\chi_{\mathcal B_M^{(n)}}\Pi_{\La_M}\chi_{\mathcal B_M^{(n)}}}}^2}  \le C\,R_d(M).
\eeq
We also claim that
\beq\label{eq:indisting'}
\E\set{\left\| \chi_{\mathcal B_M^{(n)}}\Pi_{\La_M}\chi_{\mathcal B_M^{(n)}}- \chi_{\mathcal B_M^{(n)}}\Pi_{\Z^d_{+,n}}\chi_{\mathcal B_M^{(n)}}\right\|_\alpha^{2\alpha}} \le C\, R_d(M).
\eeq
Combining this bound with \eqref{eq:compr1}, \eqref{eq:f(A+B)}, Lemma \ref{lem:trineq},  and \eqref{eq:quasi},  we obtain the assertion of the lemma.
%
%
%
\vspace{.2cm}

To establish   \eqref{eq:indisting'} we note that by \eqref{picc} and 
Lemma \ref{l:picc} (iii)
\begin{align*}
&\left\| \chi_{\mathcal B_M^{(n)}}\pa{\Pi_{\La_M}-\Pi_{\Z^d_{+,n}}}\chi_{\mathcal B_M^{(n)}}\right\|_\alpha^{2\alpha}\\&=\left\| \Pi _{\mathcal{B}_{M}^{(n)},\Lambda _{M}^{c}}-\Pi _{\mathcal{B}_{M}^{(n)},(%
\mathbb{Z}_{+,n}^{d})^{c}}\right\|_{\alpha}^{2\alpha}=\left\|\Pi _{\mathcal{B}_{M}^{(n)},\Lambda
_{M}^{c}\setminus (\mathbb{Z}_{+,n}^{d})^{c}} \right\|_{\alpha}^{2\alpha},
\end{align*}
observe that  $\dist\pa{\mathcal B_M^{(n)},\La_M^c\setminus\pa{\Z^d_{+,n}}^c}\ge 2\ell$, and use Lemma \ref{lem:keytech} (ii).

\vspace{.2cm}

To get \eqref{eq:croster}, we further split $\mathcal T_M$ of \eqref{tcm}
as $\mathcal T_M=\mathcal R_M\cup \pa{\mathcal T_M\setminus \mathcal R_M}$, with \[\mathcal R_M=\{x\in\mathcal  T_M:\ \dist\pa{x, \pa{\Z^d_{+,n}}^c}\ge\ell\}.\]
Hence, using \eqref{eq:quasi}, we bound
\[\left\|\chi_{\mathcal T_M}\Pi_{\La_M}\chi_{\mathcal B_M^{(n)}}\right\|^{2\alpha}_\alpha\le C\,\pa{\left\|\chi_{\mathcal R_M}\Pi_{\La_M}\chi_{\mathcal B_M^{(n)}}\right\|^{2\alpha}_\alpha+\left\|\chi_{\mathcal T_M\setminus \mathcal R_M}\Pi_{\La_M}\chi_{\mathcal B_M^{(n)}}\right\|^{2\alpha}_\alpha}.\]
The second term on the right is estimated simply as
\begin{align*}
\left\|\chi_{\mathcal T_M \setminus \mathcal R_M }\Pi_{\La_M}\chi_{\mathcal B_M^{(n)}}\right\|^{2\alpha}_\alpha\le \left\|\chi_{\mathcal T_M\setminus \mathcal R_M}\right\|^{2\alpha}_\alpha\le \abs{\mathcal T_M\setminus \mathcal R_M}^{2}\\ \le C\pa{(d-1)\ell^2 M^{d-2}}^{2}\le C\, R_d(M).
\end{align*}
On the other hand,
we have\[ \chi_{\mathcal R_M}\Pi_{\La_M}\chi_{\mathcal B_M^{(n)}}=  \chi_{\mathcal R_M}P \,\chi_{\pa{\Z^d_{+,n}}^c}\,P\chi_{B_M^{(n)}}+\chi_{\mathcal R_M}\,P\, \chi_{\La_M^c\setminus \pa{\Z^d_{+,n}}^c}\,P\chi_{\mathcal B_M^{(n)}}.\]
This and \eqref{eq:quasi} imply
\begin{align*}
\left\|\chi_{\mathcal R_M}\Pi_{\La_M}\chi_{\mathcal B_M^{(n)}}\right\|^{2\alpha}_\alpha\le C\,\pa{\left\|\chi_{\mathcal R_M}P \chi_{\pa{\Z^d_{+,n}}^c}\right\|^{2\alpha}_\alpha+\left\| \chi_{\La_M^c\setminus \pa{\Z^d_{+,n}}^c}P\chi_{\mathcal B_M^{(n)}}\right\|^{2\alpha}_\alpha}\\ = C\,\pa{\left\|\Pi_{\mathcal R_M, \pa{\Z^d_{+,n}}^c}\right\|^{2\alpha}_{\alpha/2}+\left\| {\Pi_{ \La_M^c\setminus \pa{\Z^d_{+,n}}^c,\ \mathcal B_M^{(n)}}}\right\|^{2\alpha}_{\alpha/2}}.
\end{align*}
Recall now that \[\dist\pa{\mathcal R_M,\pa{\Z^d_{+,n}}^c}=\ell;\quad \dist\pa{\La_M^c\setminus \pa{\Z^d_{+,n}}^c,\mathcal B_M^{(n)}}=2\ell\] by construction.  This and Lemma \ref{lem:keytech} (ii) imply
 \[
 \E\,\pa{\left\|\Pi_{\mathcal R_M, \pa{\Z^d_{+,n}}^c}\right\|^{2\alpha}_{\alpha/2}+\left\| {\Pi_{ \La_M^c\setminus \pa{\Z^d_{+,n}}^c,\mathcal B_M^{(n)}}}\right\|^{2\alpha}_{\alpha/2}} \le C\,R_d(M).
 \]
 Putting all bounds together we get  \eqref{eq:croster}.
\end{proof}

\begin{lemma}\label{lem:compinf}
Suppose that Assumption \ref{assump} holds and $f$ satisfies Condition \ref{c:sob}. Then, for every $n=\pm 1,...,\pm d$, we have
\beq\label{eq:compinfd}
\E\set{\abs{\tr \chi_{\mathcal B_M^{(n)}}f\pa{\Pi_{\Z^d_{+,n}}}\chi_{\mathcal B_M^{(n)}}-\tr f\pa{\chi_{\mathcal B_M^{(n)}}\Pi_{\Z^d_{+,n}}\chi_{\mathcal B_M^{(n)}}}}^2}\le C\,R_d(M),
\eeq
where $\Z^d_{+,n}, \ \mathcal B_M^{(n)}$ and $R_d(M)$ are the same as in
the previous lemma.

\end{lemma}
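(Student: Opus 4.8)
The plan is to reproduce the algebraic skeleton of the proof of Lemma~\ref{lem:indisting}, the only genuinely new difficulty being that the complement $\widetilde{\mathcal T}_M:=\Z^d_{+,n}\setminus\mathcal B_M^{(n)}$ is now \emph{infinite}, whereas $\La_M\setminus\mathcal B_M^{(n)}$ there was finite. Viewing $\Pi_{\Z^d_{+,n}}$ as an operator on $\ell^2(\Z^d_{+,n})$, I would first decompose it with respect to the partition $\Z^d_{+,n}=\mathcal B_M^{(n)}\sqcup\widetilde{\mathcal T}_M$ into its block-diagonal part $D:=\chi_{\mathcal B_M^{(n)}}\Pi_{\Z^d_{+,n}}\chi_{\mathcal B_M^{(n)}}+\chi_{\widetilde{\mathcal T}_M}\Pi_{\Z^d_{+,n}}\chi_{\widetilde{\mathcal T}_M}$ and the off-diagonal remainder $X:=\chi_{\widetilde{\mathcal T}_M}\Pi_{\Z^d_{+,n}}\chi_{\mathcal B_M^{(n)}}+\mathrm{h.c.}$ Since the two diagonal blocks have orthogonal ranges, \eqref{eq:f(A+B)} gives $\tr\chi_{\mathcal B_M^{(n)}}f(D)=\tr f\pa{\chi_{\mathcal B_M^{(n)}}\Pi_{\Z^d_{+,n}}\chi_{\mathcal B_M^{(n)}}}$, so the expression inside the lemma equals $\tr\chi_{\mathcal B_M^{(n)}}\pa{f(\Pi_{\Z^d_{+,n}})-f(D)}$. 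The remainder $X$ carries a factor $\chi_{\mathcal B_M^{(n)}}$, hence is finite rank and in particular $\alpha$-Schatten; the trace estimate underlying Lemma~\ref{lem:trineq} (Theorem~2.4 of \cite{So:15}, which requires only the \emph{difference} to be Schatten) therefore applies to the pair $(\Pi_{\Z^d_{+,n}},D)$ and yields $\|f(\Pi_{\Z^d_{+,n}})-f(D)\|_1\le C\|X\|_\alpha^\alpha$, even though neither operator is separately trace class. Combined with \eqref{eq:quasi} this reduces the lemma to the single cross-term bound
\[ \E\set{\big\|\chi_{\widetilde{\mathcal T}_M}\Pi_{\Z^d_{+,n}}\chi_{\mathcal B_M^{(n)}}\big\|_\alpha^{2\alpha}}\le C\,R_d(M). \]

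The main obstacle, and the only place the argument leaves the track of Lemma~\ref{lem:indisting}, is this bound: the device used in \eqref{eq:croster} — reducing to $\Pi_{\mathcal R_M,(\Z^d_{+,n})^c}$ and invoking Lemma~\ref{lem:keytech}(ii) with the finite boundary $\abs{\partial\mathcal R_M}$ — is unavailable, because every slice of $\widetilde{\mathcal T}_M$ has infinite boundary and infinite transverse extent. The way around this is to exploit that the right factor $\chi_{\mathcal B_M^{(n)}}$ confines the operator to the infinite cylinder $\mathcal D$ (extending in the $e_n$ direction) containing $\mathcal B_M^{(n)}$, whose cross-section $\mathcal G=\mathcal D\cap\partial\Z^d_{+,n}$ satisfies $\abs{\mathcal G}\le CM^{d-1}$; this is exactly the hypothesis of the cylinder estimate Lemma~\ref{lem:keytech}(iii), which plays no role in Lemma~\ref{lem:indisting}. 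I would therefore split $\widetilde{\mathcal T}_M=\mathcal R_M\sqcup\mathcal N_M$ according to whether $\dist\pa{x,(\Z^d_{+,n})^c}\ge\ell$ or $<\ell$, apply \eqref{eq:quasi}, and in each piece split the intermediate projection $\chi_{(\Z^d_{+,n})^c}$ into its part inside and outside $\mathcal D$.

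For the far layer $\mathcal R_M$, the in-cylinder contribution $\chi_{\mathcal R_M}P\chi_{(\Z^d_{+,n})^c\cap\mathcal D}P\chi_{\mathcal B_M^{(n)}}$ is bounded via \eqref{nora}, discarding $\|P\chi_{\mathcal B_M^{(n)}}\|\le1$, by $\big\|\chi_{\mathcal R_M}P\chi_{(\Z^d_{+,n})^c\cap\mathcal D}\big\|_\alpha$, whose $2\alpha$-power is a Schatten norm of $\Pi_{(\Z^d_{+,n})^c\cap\mathcal D,\,\mathcal R_M}$; here the base set lies in $\mathcal D$ and is separated from $\mathcal R_M$ by $\ge\ell$, so Lemma~\ref{lem:keytech}(iii) bounds its expectation by $C\abs{\mathcal G}^2\e^{-c'\ell}\le CM^{2(d-1)}\e^{-c'\ell}$ for some $c'>0$ depending on $\alpha,\gamma$. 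The out-of-cylinder part $(\Z^d_{+,n})^c\setminus\mathcal D$ is transversally $\ge\ell$ from $\mathcal B_M^{(n)}$, so the factor $\big\|\chi_{(\Z^d_{+,n})^c\setminus\mathcal D}P\chi_{\mathcal B_M^{(n)}}\big\|_\alpha$ is controlled by Lemma~\ref{lem:keytech}(ii) with $\abs{\partial\mathcal B_M^{(n)}}^2\le CM^{2(d-1)}$ and the same exponential gain. For the near layer, $\mathcal N_M\cap\mathcal D$ is finite — it is only the edge collar of the face $\mathcal F_M^{(n)}$ — with $\abs{\mathcal N_M\cap\mathcal D}\le C(d-1)\ell^2M^{d-2}$, and is disposed of crudely by $\big\|\chi_{\mathcal N_M\cap\mathcal D}\Pi_{\Z^d_{+,n}}\chi_{\mathcal B_M^{(n)}}\big\|_\alpha^{2\alpha}\le\|\chi_{\mathcal N_M\cap\mathcal D}\|_\alpha^{2\alpha}=\abs{\mathcal N_M\cap\mathcal D}^2$, while $\mathcal N_M\setminus\mathcal D$ is again $\ge\ell$ from $\mathcal B_M^{(n)}$ transversally and is treated as the far layer, using the entry decay of $\Pi_{\Z^d_{+,n}}$ inherited from \eqref{eq:Psqbn}. (For $d=1$ there is no transverse direction, $\mathcal N_M=\emptyset$, and $\widetilde{\mathcal T}_M=\mathcal R_M$ has a one-point boundary, so the cylinder step is not needed and one reduces directly to $\Pi_{\widetilde{\mathcal T}_M,(\Z_{+,n})^c}$.) Taking $\ell$ as in \eqref{l12} with $c$ large makes each $\e^{-c'\ell}$ beat the polynomial prefactors, so every contribution is dominated either by $\big(M^{d}\e^{-\alpha\gamma\ell/2}\big)^2$ or by $\big((d-1)\ell^2M^{d-2}\big)^2$, i.e.\ by $R_d(M)$. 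The delicate part is thus purely the geometric bookkeeping that arranges each infinite piece to be either confined to the finite-cross-section cylinder $\mathcal D$ or separated from $\mathcal B_M^{(n)}$ by $\ge\ell$, so that the two terms of $R_d(M)$ reappear.
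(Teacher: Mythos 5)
Your proposal is correct and follows essentially the same route as the paper's proof: the identical reduction via \eqref{eq:f(A+B)}, the trace inequality of \cite{So:15} and \eqref{eq:quasi} to the single off-diagonal cross-term $\chi_{\Z^d_{+,n}\setminus\mathcal B_M^{(n)}}\Pi_{\Z^d_{+,n}}\chi_{\mathcal B_M^{(n)}}$, followed by the same two-pronged geometric estimate in which the cylinder bound of Lemma \ref{lem:keytech}(iii) (with cross-section of size $O(M^{d-1})$) controls the pieces separated by $\ell\sim\log M$, while edge-collar sets of cardinality $O(\ell^2 M^{d-2})$ are disposed of by crude volume bounds, the two contributions reproducing exactly the two terms of $R_d(M)$. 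You are in fact slightly more careful than the paper at one point: you observe that Theorem 2.4 of \cite{So:15} needs only the \emph{difference} of the two operators to lie in the $\alpha$-Schatten class, whereas a literal application of Lemma \ref{lem:trineq} as stated would require $\Pi_{\Z^d_{+,n}}$ and its block-diagonal part to be $\alpha$-Schatten themselves, which they are not.
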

\begin{proof}
We first observe that, thanks to \eqref{eq:f(A+B)},
\begin{align*}
& f\pa{\chi_{\mathcal B_M^{(n)}}\Pi_{\Z^d_{+,n}}\chi_{\mathcal B_M^{(n)}}} = \chi_{\mathcal B_M^{(n)}} f\pa{\chi_{\mathcal B_M^{(n)}}\Pi_{\Z^d_{+,n}}\chi_{\mathcal B_M^{(n)}}}\chi_{\mathcal B_M^{(n)}}\\& \hspace{1cm} = \chi_{\mathcal B_M^{(n)}} f\pa{\chi_{\mathcal B_M^{(n)}}\Pi_{\Z^d_{+,n}}\chi_{\mathcal B_M^{(n)}}+\chi_{\mathcal (B_M^{(n)})^c}\Pi_{\Z^d_{+,n}}\chi_{(\mathcal B_M^{(n)})^c}}\chi_{\mathcal B_M^{(n)}}.
\end{align*}
Hence the result  follows from Lemmas \ref{lem:trineq} and \eqref{eq:quasi} once we establish the bound
\beq\label{eq:interm}
\E\set{\left\|\Pi_{\Z^d_{+}}-\chi_{\mathcal B_M^{(n)}}\Pi_{\Z^d_{+}}\chi_{\mathcal B_M^{(n)}}-\chi_{\mathcal (B_M^{(n)})^c}\Pi_{\Z^d_{+}}{\chi_{\mathcal (B_M^{(n)})^c}}\right\|^{2\alpha}_\alpha}\le C\,R_d(M).
\eeq
To this end, using \eqref{eq:quasi}, we bound
\[
\E\Big\{\left\|\Pi_{\Z^d_{+}}-\chi_{\mathcal B_M^{(n)}}\Pi_{\Z^d_{+}}\chi_{\mathcal B_M^{(n)}}-\chi_{\mathcal (B_M^{(n)})^c}\Pi_{\Z^d_{+}}{\chi_{\mathcal (B_M^{(n)})^c}}\right\|^{2\alpha}_\alpha\Big\} \le C\,\E\Big\{\left\|\chi_{\mathcal B_M^{(n)}}\Pi_{\Z^d_{+}}\chi_{\mathcal (B_M^{(n)})^c}\right\|^{2\alpha}_\alpha\Big\}.
\]
The remainder of the argument strongly resembles
the one used to obtain %
\eqref{eq:indisting'}, with  usage of Lemma \ref{lem:keytech} (iii) instead of Lemma \ref{lem:keytech} (ii). We choose  $\mathcal{C}_{2}$ in the lemma to be the
semi-infinite parallelepiped $\mathcal{D}_{2}$ adjacent to the face $%
\widehat{\mathcal{F}}_{M}^{(n)}$ \eqref{fach} of $\mathcal{B}_{M}^{(n)}$ from the
exterior of $\Lambda _{M}$, i.e., belonging to $(\mathbb{Z}_{+}^{d})^{c}$. We choose  $\mathcal{C}_{1}$ to be the semi-infinite parallelepiped $%
\mathcal{D}_{1}$ adjacent to $\widehat{\mathcal{F}}_{M}^{(n)}$ and extended in the
direction opposite to that of $\mathcal{D}_{2}$ (so it belongs to $\mathbb{Z}%
_{+,n}^{d}$ and contains $\mathcal{B}_{M}^{(n)}$). For $%
\mathcal{G}$ in Lemma \ref{lem:keytech} (iii) we use $\widehat{\mathcal{F}}_{M}^{(n)}$.
Let $\mathcal{D}_{1}^{-}=\mathbb{Z}_{+,n}^{d}\setminus \mathcal{D}%
_{1}$ and let $\mathcal{D}_{2}^{-}=(\mathbb{Z}_{+,n}^{d})^{c}\setminus \mathcal{%
D}_2$. Then
\begin{align*}
& \chi _{\mathcal{B}_{M}^{(n)}}\Pi _{\mathbb{Z}_{+,n}^{d}}\chi
_{(B_{M}^{(n)})^{c}}=\chi _{\mathcal{B}_{M}^{(n)}}P\chi _{(\mathbb{Z}%
_{+,n}^{d})^{c}}P\chi _{\mathbb{Z}_{+}^{d}\setminus \mathcal{B}_{M}^{(n)}} \\
& \hspace{1cm}=\chi _{\mathcal{B}_{M}^{(n)}}P\chi _{\mathcal{D}_{2}}P\chi _{%
\mathbb{Z}_{+}^{d}\setminus \mathcal{B}_{M}^{(n)}}+\chi _{\mathcal{B}%
_{M}^{(n)}}P\chi _{\mathcal{D}_{2}^{-}}P\chi _{\mathbb{Z}_{+}^{d}\setminus
\mathcal{B}_{M}^{(n)}} , \notag
\end{align*}%
 in view of \eqref{picc},  so using \eqref{eq:quasi} and \eqref{nora} we get%
\begin{equation*}\label{eq:snag}
||\chi _{\mathcal{B}_{M}^{(n)}}\Pi _{\mathbb{Z}_{+,n}^{d}}\chi
_{(B_{M}^{(n)})^{c}}||_{\alpha }^{2\alpha }\leq C(||\chi _{\mathcal{D}%
_{2}}P\chi _{\mathbb{Z}_{+}^{d}\setminus \mathcal{B}_{M}^{(n)}}||_{\alpha
}^{2\alpha }+||\chi _{\mathcal{B}_{M}^{(n)}}P\chi _{\mathcal{D}%
_{2}^{-}}||_{\alpha }^{2\alpha }).
\end{equation*}%
To estimate  $\mathbf{E}\{ \|\chi _{\mathcal{D}_{2}}P\chi _{\mathbb{Z}%
_{+}^{d}\setminus \mathcal{B}_{M}^{(n)}}||_{\alpha }^{2\alpha }\}$   we first remove a set of cardinality $C(d-1)\ell ^{2}M^{d-2}$ from the
neighborhood of the "corners" of $\mathcal{D}_{2}$ common with $\mathbb{Z}%
_{+}^{d}\setminus \mathcal{B}_{M}^{(n)}$, to get $\widetilde{\mathcal{D}_{2}}$
such that $\dist(\widetilde{\mathcal{D}_{2}},\mathbb{Z}_{+}^{d}\setminus
\mathcal{B}_{M}^{(n)})\geq \ell $. Now we can bound $\E\,\{||\chi _{\widetilde{\mathcal{D}%
_{2}}}\Pi \chi _{{\mathbb{Z}_{+}^{d}\setminus \mathcal{B}_{M}^{(n)}}%
}||_{\alpha }^{2}\}$ using  Lemma \ref{lem:keytech} (iii), with  $
\mathcal{G}=\widehat{\mathcal{F}}_{M}$ and the estimate  $|\widehat{\mathcal{F}}_{M}|\leq (2M+1)^{d-1}$. This gives
\begin{align}\label{eq:bndB_M^c}
&\mathbf{E}\left\{ ||\chi _{\mathcal{D}_{2}}P\chi _{\mathbb{Z}%
_{+}^{d}\setminus \mathcal{B}_{M}^{(n)}}||_{\alpha }^{2\alpha }\right\} \\&
\hspace{0.5cm}\le\mathbf{E}%
\left\{ ||\chi _{\mathcal{D}_2 \setminus \widetilde{\mathcal{D}_{2}}}P\chi _{\mathbb{Z}%
_{+}^{d}\setminus \mathcal{B}_{M}^{(n)}}||_{\alpha }^{2\alpha }\right\} +\mathbf{E}%
\left\{ ||\chi _{\widetilde{\mathcal{D}_{2}}}P\chi _{\mathbb{Z}%
_{+}^{d}\setminus \mathcal{B}_{M}^{(n)}}||_{\alpha }^{2\alpha }\right\} \notag
\\& \hspace{1cm}\leq
C\left( |\mathcal{D}_{2}\setminus \widetilde{\mathcal{D}_{2}}|^{2}+\mathrm{e}^{-\frac{\alpha
\gamma \ell}{2}}|\partial
\widetilde{\mathcal{D}_2}|^2
\right) \leq CR_{d}(M), \notag
\end{align}%
in view of \eqref{eq:quasi} -- \eqref{nora}.
An analogous argument can be applied to the pair $(\mathcal{B}_{M}^{(n)},\mathcal{D}%
_{2}^{-})$ and yields
\begin{equation}\label{eq:bndB_M}
\mathbf{E}\left\{ ||\chi _{\mathcal{D}_{2}}P\chi _{\mathbb{Z}%
_{+}^{d}\setminus \mathcal{B}_{M}^{(n)}}||_{\alpha }^{2\alpha }\right\} \leq
CR_{d}(M).
\end{equation}%
Combining bounds \eqref{eq:bndB_M^c} and  \eqref{eq:bndB_M}, we arrive at %
\eqref{eq:interm}.
\end{proof}

\begin{lemma}
\label{lem:indistinga} Suppose that $P(\omega)$ is an ergodic projection satisfying
Assumption \ref{assump} and that $f$ satisfies Condition \ref{c:sob}. Let $%
\ell ,N$ be a pair of positive integers, let
\begin{equation*}
\mathcal{Q}=[-\ell ,\ell ]\times \lbrack -N,N]^{d-1}
\end{equation*}%
be a rectangular lattice prism, let
\begin{equation*}
\mathcal{Q}_{a}=\left\{ x\in \mathcal{Q}\cap \mathbb{Z}_{-}^{d}:\ \dist%
\left( x,\partial \mathcal{Q}\right) \geq a,\;a<\ell \right\} ;\;\mathcal{
Q}_{+}=Q\cap (\mathbb{Z}_{-}^{d})^{c}
\end{equation*}%
and let (cf.\ \eqref{pihat})
\begin{equation}
\widehat{\Pi }_{\mathcal{Q}_{a},\mathcal{Q}_{+}}(\omega)=\chi _{\mathcal{Q}_{a}}%
\widehat{P}_{\mathcal{Q}}\,(\omega)\chi _{\mathcal{Q}_{+}}\widehat{P}_{\mathcal{Q}%
}\,(\omega)\chi _{\mathcal{Q}_a},  \label{eq:mathcalP}
\end{equation}%
where $\widehat{P}_{\mathcal{Q}}(\omega)$ is defined in Assumption \ref{assump1}.

Then we have 
\begin{equation}
\mathbf{E}\Big\{ \left\vert \tr f\left( \widehat{\Pi }_{\mathcal{Q}_a,%
\mathcal{Q}_+}\right) -\tr f\left( \Pi _{\mathcal{Q}_{a},\mathcal{Q}%
_{+}}\right) \right\vert ^{2}\Big\} \leq C\,\left\vert \partial \mathcal{Q}%
\right\vert ^{5}\mathrm{e}^{-\alpha \,c\,a},  \label{eq:indistinga}
\end{equation}%
for some $C<\infty ,\ c>0$, and a sufficiently small value of $\ell /N$.
\end{lemma}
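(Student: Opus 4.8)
The plan is to transfer the whole estimate onto Assumption~\ref{assump1}(i) via Lemma~\ref{lem:trineq}. Writing $\widehat P:=\widehat P_{\mathcal Q}$, both $\widehat\Pi_{\mathcal Q_a,\mathcal Q_+}$ and $\Pi_{\mathcal Q_a,\mathcal Q_+}$ (see \eqref{eq:mathcalP} and \eqref{picc}) have the form $\Pi_{\mathcal C_1,\mathcal C_2}$ with $\mathcal C_1=\mathcal Q_a$, $\mathcal C_2=\mathcal Q_+$; since $\mathcal Q_a\subset\mathbb Z_-^d$ and $\mathcal Q_+\subset(\mathbb Z_-^d)^c$ are disjoint, so that $\mathcal Q_a\subset\mathcal Q_+^c$, Lemma~\ref{l:picc}(i)--(ii) gives $0\le\widehat\Pi_{\mathcal Q_a,\mathcal Q_+},\,\Pi_{\mathcal Q_a,\mathcal Q_+}\le1/4$. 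Lemma~\ref{lem:trineq} then yields
\[
\abs{\tr f\pa{\widehat\Pi_{\mathcal Q_a,\mathcal Q_+}}-\tr f\pa{\Pi_{\mathcal Q_a,\mathcal Q_+}}}^2\le C\,\left\|\widehat\Pi_{\mathcal Q_a,\mathcal Q_+}-\Pi_{\mathcal Q_a,\mathcal Q_+}\right\|_\alpha^{2\alpha},
\]
so it suffices to bound the expectation of the right-hand side.

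Next I would expand the operator difference by the telescoping identity
\[
\widehat P\chi_{\mathcal Q_+}\widehat P-P\chi_{\mathcal Q_+}P=\widehat P\chi_{\mathcal Q_+}\pa{\widehat P-P}+\pa{\widehat P-P}\chi_{\mathcal Q_+}P,
\]
sandwich by $\chi_{\mathcal Q_a}$, and apply the quasi-triangle inequality \eqref{eq:quasi} followed by \eqref{nora}. Grouping the first resulting term as $\pa{\chi_{\mathcal Q_a}\widehat P}\pa{\chi_{\mathcal Q_+}\pa{\widehat P-P}\chi_{\mathcal Q_a}}$ and the second as $\pa{\chi_{\mathcal Q_a}\pa{\widehat P-P}\chi_{\mathcal Q_+}}\pa{P\chi_{\mathcal Q_a}}$, and using $\|\chi_{\mathcal Q_a}\widehat P\|\le1$, $\|P\chi_{\mathcal Q_a}\|\le1$ together with the self-adjointness of $\widehat P-P$, both terms are bounded in $\|\cdot\|_\alpha$ by $\|T\|_\alpha$, where $T=\chi_{\mathcal Q_a}\pa{\widehat P-P}\chi_{\mathcal Q_+}$. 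Keeping the middle factor $\chi_{\mathcal Q_+}$ is what guarantees that only matrix entries with both indices in $\mathcal Q$ survive, the regime in which $\widehat P$ is defined and \eqref{eq:rSch} applies. Consequently $\E\set{\|\widehat\Pi_{\mathcal Q_a,\mathcal Q_+}-\Pi_{\mathcal Q_a,\mathcal Q_+}\|_\alpha^{2\alpha}}\le C\,\E\set{\|T\|_\alpha^{2\alpha}}$.

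The core of the proof is the estimate of $\E\set{\|T\|_\alpha^{2\alpha}}$. Decomposing $T$ into rank-one matrix units and summing with \eqref{eq:quasi} gives the entrywise bound $\|T\|_\alpha^\alpha\le\sum_{x\in\mathcal Q_a,\,y\in\mathcal Q_+}\abs{\pa{\widehat P-P}(x,y)}^\alpha$, whence, by \eqref{CS},
\[
\E\set{\|T\|_\alpha^{2\alpha}}\le\Big(\sum_{x\in\mathcal Q_a,\,y\in\mathcal Q_+}\big(\E\set{\abs{\pa{\widehat P-P}(x,y)}^{2\alpha}}\big)^{1/2}\Big)^2.
\]
For each summand I would combine the deterministic bound $\abs{\pa{\widehat P-P}(x,y)}\le2$ with \eqref{eq:rSch} applied with $R=\dist\pa{\{x\},\partial\mathcal Q}+\dist\pa{\{y\},\partial\mathcal Q}$; using Jensen when $2\alpha<1$ and $z^{2\alpha}\le2^{2\alpha-1}z$ when $2\alpha\ge1$, this produces
\[
\E\set{\abs{\pa{\widehat P-P}(x,y)}^{2\alpha}}\le C\,\abs{\partial\mathcal Q}^{\min(2\alpha,1)}\,\e^{-c\pa{\dist(\{x\},\partial\mathcal Q)+\dist(\{y\},\partial\mathcal Q)}}.
\]
Since $x\in\mathcal Q_a$ forces $\dist\pa{\{x\},\partial\mathcal Q}\ge a$, the shell sums $\sum_{x\in\mathcal Q_a}\e^{-\tfrac c2\dist(\{x\},\partial\mathcal Q)}\le C\abs{\partial\mathcal Q}\e^{-\tfrac c2 a}$ and $\sum_{y\in\mathcal Q_+}\e^{-\tfrac c2\dist(\{y\},\partial\mathcal Q)}\le C\abs{\partial\mathcal Q}$ (each distance-$r$ shell in the slab $\mathcal Q$ containing at most $C\abs{\partial\mathcal Q}$ points, where the smallness of $\ell/N$ is used) then assemble the right-hand side into $C\abs{\partial\mathcal Q}^5\e^{-\alpha c a}$, which is \eqref{eq:indistinga}.

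The main obstacle is the bookkeeping of the powers of $\abs{\partial\mathcal Q}$ so as to land exactly on the fifth power claimed in \eqref{eq:indistinga}; this forces the case split $2\alpha\gtrless1$ in the moment bound above. In the worst case $\alpha=1/2$ each factor $\big(\E\set{\cdots}\big)^{1/2}$ carries $\abs{\partial\mathcal Q}^{1/2}$ and the two shell sums carry $\abs{\partial\mathcal Q}$ apiece, giving $\abs{\partial\mathcal Q}^{1/2+1+1}=\abs{\partial\mathcal Q}^{5/2}$ inside the square and hence $\abs{\partial\mathcal Q}^{5}$; for $\alpha\neq1/2$ the exponent is no larger, and since $\abs{\partial\mathcal Q}\ge1$ the power-$5$ bound holds uniformly. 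A secondary point to verify is the shell-counting estimate and the legitimacy of invoking \eqref{eq:rSch} for the thin slab $\mathcal Q$ under the stated smallness of $\ell/N$.
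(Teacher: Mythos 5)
Your proof is correct, and its skeleton is the same as the paper's: reduce \eqref{eq:indistinga} to the estimate $\E\set{\|\widehat{\Pi}_{\mathcal{Q}_a,\mathcal{Q}_+}-\Pi_{\mathcal{Q}_a,\mathcal{Q}_+}\|_\alpha^{2\alpha}}\le C\abs{\partial\mathcal{Q}}^5\e^{-ca}$ via Lemma \ref{lem:trineq} (your check $0\le\widehat{\Pi}_{\mathcal{Q}_a,\mathcal{Q}_+},\Pi_{\mathcal{Q}_a,\mathcal{Q}_+}\le 1/4$ through Lemma \ref{l:picc} is exactly the hypothesis needed there), telescope the difference and use \eqref{eq:quasi}, \eqref{nora} to reduce to $T=\chi_{\mathcal{Q}_a}\pa{\widehat P_{\mathcal{Q}}-P}\chi_{\mathcal{Q}_+}$, then feed Assumption \ref{assump1}(i) into \eqref{CS} and perform the shell sums. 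Where you genuinely diverge is the passage from $\|T\|_\alpha$ to matrix entries, and your version is in fact the more reliable one. The paper gets there by asserting $\abs{T/2}^\alpha=\pa{T^*T/4}^{\alpha/2}\le T^*T/4$ and hence $\|T\|_\alpha^{2\alpha}\le C\|T^*T\|_1^2$; but that operator inequality is stated in the wrong direction — for $0\le B\le 1$ and $s\in(0,1)$ one has $B^s\ge B$, not $B^s\le B$ — and the resulting trace bound $\tr\abs{T}^\alpha\le C\tr\pa{T^*T}$ genuinely fails when $T$ has many small nonzero singular values, which is precisely the regime here. Your substitute — the rank-one decomposition together with the subadditivity \eqref{eq:quasi}, giving $\|T\|_\alpha^\alpha\le\sum_{x,y}\abs{T(x,y)}^\alpha$, followed by the Jensen / $z^{2\alpha}\le 2^{2\alpha-1}z$ case split for the $2\alpha$-th entry moments — is sound, yields the power $\abs{\partial\mathcal{Q}}^{\min(2\alpha,1)+4}\le\abs{\partial\mathcal{Q}}^5$, and moreover naturally produces the factor $\alpha$ in the exponent $\e^{-\alpha c a}$ of \eqref{eq:indistinga} (through the Jensen branch), which the paper's route would not account for; in this sense your argument repairs the paper's proof rather than merely paralleling it. One cosmetic point: the smallness of $\ell/N$ plays no role in your shell estimates, since \eqref{eq:sumsbnd1} applied with $\mathcal{C}=\partial\mathcal{Q}$ already gives $\sum_{x\in\Z^d}\e^{-\gamma\dist(x,\partial\mathcal{Q})}\le C\abs{\partial\mathcal{Q}}$ with no geometric restriction, so you may drop that caveat.
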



\begin{proof}
The assertion follows from the estimate
\begin{equation}  \label{eq:indistinga1}
\mathbf{E}\Big\{ \left\|\widehat{\Pi}_{\mathcal{Q}_a,\mathcal{Q}_+}-\Pi_{%
\mathcal{Q}_a,\mathcal{Q}_+} \right\|_\alpha^{2\alpha} \Big\} \le
C\,\left| \partial \mathcal{Q }\right|^5\mathrm{e}^{-c \, a}.
\end{equation}
Indeed, combining this bound with Lemma \ref{lem:trineq} and \eqref{eq:quasi},
 we get the desired result.

To obtain \eqref{eq:indistinga1} we first decompose
\begin{align*}
& \widehat{\Pi }_{\mathcal{Q}_{a},\mathcal{Q}_{+}}-\Pi _{\mathcal{Q}_{a},%
\mathcal{Q}_{+}} \\
& \hspace{1cm}=\chi _{\mathcal{Q}_{a}}\left( \widehat{P}_{\mathcal{Q}%
}-P\right) \chi _{Q_{+}}P\chi _{\mathcal{Q}_{a}}+\chi _{\mathcal{Q}_{a}}%
\widehat{P}_{\mathcal{Q}}\chi _{Q_{+}}\left( \widehat{P}_{\mathcal{Q}%
}-P\right) \chi _{\mathcal{Q}_{a}},  \notag
\end{align*}%
and use \eqref{eq:quasi} and \eqref{nora} to obtain
\begin{equation*}
\left\Vert \widehat{\Pi }_{\mathcal{Q}_{a},\mathcal{Q}_{+}}-\Pi _{\mathcal{Q}%
_{a},\mathcal{Q}_{+}}\right\Vert _{\alpha }^{2\alpha }\leq C\,\left\Vert
\chi _{\mathcal{Q}_{a}}\left( \widehat{P}_{\mathcal{Q}}-P\right) \chi _{%
\mathcal{Q}_{+}}\right\Vert _{\alpha }^{2\alpha }.
\end{equation*}%
Let $A=\chi _{\mathcal{Q}_{a}}\left( \widehat{P}_{\mathcal{Q}}-P\right) \chi
_{\mathcal{Q}_{+}}$, then $\Vert A\Vert \leq \Vert \widehat{P}_{\mathcal{Q}%
}-P\Vert \leq 2$, hence $0\leq A^{\ast }A/4\leq 1$, and so $\left\vert
A/2\right\vert ^{\alpha }=\left( A^{\ast }A/4\right) ^{\alpha /2}\leq
A^{\ast }A/4$. This leads to the bound
\begin{equation*}
\left\Vert \chi _{\mathcal{Q}_{a}}\left( \widehat{P}_{\mathcal{Q}}-P\right)
\chi _{\mathcal{Q}_{+}}\right\Vert _{\alpha }^{2\alpha }\leq C\,\left\Vert
\chi _{\mathcal{Q}_{+}}\left( \widehat{P}_{\mathcal{Q}}-P\right) \chi _{%
\mathcal{Q}_{a}}\left( \widehat{P}_{\mathcal{Q}}-P\right) \chi _{\mathcal{Q}%
_{+}}\right\Vert _{1}^{2},
\end{equation*}%
so \eqref{CS} yields
\begin{align*}
\mathbf{E}\Big\{ \left\Vert \widehat{\Pi }_{\mathcal{Q}_{a},\mathcal{Q}%
_{+}}-\Pi _{\mathcal{Q}_{a},\mathcal{Q}_{+}}\right\Vert _{\alpha }^{2\alpha
}\Big\} & \leq C\,\mathbf{E}\Big\{\Big(\sum_{x\in \mathcal{Q}%
_{+}}\sum_{y\in \mathcal{Q}_{a}}\left\vert \widehat{P}_{\mathcal{Q}%
}(x,y)-P(x,y)\right\vert ^{2}\Big)^{2}\Big\} \\
& \leq C\,\pa{{\sum_{x\in (\mathbb{Z}_{+}^{d})^{c}}\sum_{y\in \mathcal{Q}%
_{a}}\Big(\mathbf{ E}\Big\{\left\vert \widehat{P}_{\mathcal{Q}%
}(x,y)-P(x,y)\right\vert ^{4}\Big\}}\Big)^{1/2}}^{2}.
\end{align*}%
Since $\Vert \widehat{P}_{\mathcal{Q}}-P\Vert \leq 2$, we have $\left\vert
\widehat{P}_{\mathcal{Q}}(x,y)-P(x,y)\right\vert ^{4}\leq 8\left\vert
\widehat{P}_{\mathcal{Q}}(x,y)-P(x,y)\right\vert $. Using now Assumption \ref%
{assump1} and \eqref{CS}, we deduce
\begin{align*}
\mathbf{E}\Big\{ \left\Vert \widehat{\Pi }_{\mathcal{Q}_{a},\mathcal{Q}%
_{+}}-\Pi _{\mathcal{Q}_{a},\mathcal{Q}_{+}}\right\Vert _{\alpha }^{2\alpha
}\Big\} \leq C\,\pa{\sum_{x\in \mathcal{Q}_{+}}\sum_{y\in \mathcal{Q}_{a}}%
\Big(\mathbf{E}\left\{ \left\vert \widehat{P}_{\mathcal{Q}%
}(x,y)-P(x,y)\right\vert \right\}\Big)^{1/2} }^{2} \\
\leq C_{1}\,\left\vert \partial \mathcal{Q}\right\vert \Big(\sum_{x\in
\mathcal{Q}_{+}}\mathrm{e}^{-\gamma _{1}\dist\left( x,\partial \mathcal{Q}%
\right) }\sum_{y\in \mathcal{Q}_{a}}\mathrm{e}^{-\gamma _{1}\dist\left(
y,\partial \mathcal{Q}\right) }\Big)^{2}\leq C_{2}\,\left\vert \partial
\mathcal{Q}\right\vert ^{5}\mathrm{e}^{-\gamma _{1}a},
\end{align*}%
where in the last step we  used \eqref{eq:sumsbnd1} with $\partial
\mathcal{Q}$ as $\mathcal{C}$ and $\dist\left( \partial \mathcal{Q}%
_{a},\partial \mathcal{Q}\right) \ge a$.
\end{proof}

\medskip
We will now present  an important example of the discrete Schr\"{o}dinger
operator, for which Assumption \ref{assump1} is valid, and so Theorem \ref%
{thm:varia} and Result \ref{t:var} are applicable.

\begin{lemma}\label{l:iidpr}
Let $H(\omega)$ be the discrete Schr\"{o}dinger operator \eqref{DSe} acting in $l^{2}(%
\mathbb{Z}^{d})$ whose ergodic potential is%
\begin{equation}\label{VQ}
V(\omega)=gQ(\omega),\quad Q(\omega)=\left\{ Q(x,\omega)\right\} _{x\in \mathbb{Z}^{d}},
\end{equation}%
where $g>0$ and  $\left\{ Q(x,\omega)\right\} _{x\in \mathbb{Z}^{d}}$ is a
collection of i.i.d. random variables such that their common probability law
$F$ satisfies (\ref{hoc}). Let $H_\La(\omega)=H (\omega)|_{\La}$ be the (Dirichlet)
restriction
of $H$ to a domain $\La \subset \mathbb{Z}$
and let
\beq\label{pula}
P^{\La}(\omega)=\mathcal{E}_{H_{\La}(\omega)}(I)
\eeq
be the spectral projection of $H_{\La}(\omega)$ corresponding to an interval $I$.
Then a choice $\widehat{P}_{\La}(\omega)=P^{\La}(\omega)$ satisfies Assumption \ref{assump1}
for the cases described by items (a) -- (c) of the list below formula (\ref{b_P}).
\end{lemma}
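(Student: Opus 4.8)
The plan is to verify the two clauses of Assumption \ref{assump1} separately, disposing of the statistical independence first and then devoting the bulk of the work to the proximity bound \eqref{eq:rSch}. For clause (ii), observe that the Dirichlet restriction $H_\La(\omega)$ is assembled from the deterministic Laplacian \eqref{dl} and the diagonal potential $\set{gQ(x,\omega)}_{x\in\La}$, so both $H_\La(\omega)$ and its spectral projection $\widehat P_\La(\omega)=\mathcal E_{H_\La(\omega)}(I)$ are measurable functions of the \emph{finite} family $\set{Q(x,\omega)}_{x\in\La}$ alone. Hence the variable $\xi(\omega)$ of \eqref{xih} depends only on $\set{Q(x,\omega)}_{x\in\La}$ and $\xi_g(\omega)$ only on $\set{Q(x,\omega)}_{x\in\La_g}$; since the $Q(x,\omega)$ are i.i.d. and $\La\cap\La_g=\emptyset$, the two families are independent, and so are $\xi$ and $\xi_g$. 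They are identically distributed because the rigid motion $g$ is a lattice isometry that leaves \eqref{dl} invariant and transports the i.i.d.\ field to one with the same law, whence $\chi_{\mathcal K_g}\widehat P_{\La_g}\chi_{\mathcal L_g}$ is the unitary image of $\chi_{\mathcal K}\widehat P_\La\chi_{\mathcal L}$ under the relabeling induced by $g$, under which $\tr f(\cdot)$ is invariant.

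For clause (i) I would compare $P=\mathcal E_H(I)$ with $\widehat P_\La$ through resolvents. Embedding $\ell^2(\La)$ into $\ell^2(\Z^d)$ and setting $\widetilde H_\La=\chi_\La H\chi_\La+\chi_{\La^c}H\chi_{\La^c}$, one has $\widehat P_\La(x,y)=\mathcal E_{\widetilde H_\La}(I)(x,y)$ for $x,y\in\La$, while the perturbation $W_\La:=H-\widetilde H_\La=\chi_\La H\chi_{\La^c}+\chi_{\La^c}H\chi_\La$ is supported on the $O(\abs{\partial\La})$ nearest-neighbour bonds crossing $\partial\La$. Writing $G(z)=(H-z)^{-1}$ and $G_\La(z)=(\widetilde H_\La-z)^{-1}$, the second resolvent identity gives, for $x,y\in\La$,
\[
G(x,y;z)-G_\La(x,y;z)=-\sum_{(u,v)}G_\La(x,u;z)\,W_\La(u,v)\,G(v,y;z),
\]
the sum running over boundary bonds $(u,v)$ with $u\in\La$, $v\in\La^c$ (and their mirror images). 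I would then represent $(P-\widehat P_\La)(x,y)$ as an energy integral of this resolvent difference via the Helffer--Sj\"{o}strand formula applied to a smoothed indicator of $I$, and estimate the resulting matrix elements by the exponentially decaying fractional moments of the Green function available throughout the localized regime of items (a)--(c).

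Concretely, in cases (a)--(c) the H\"{o}lder continuity \eqref{hoc} of $F$ is exactly what yields the a priori Wegner-type bound $\E\set{\abs{G_\La(x,x;E+i0)}^s}\le C$ and, with it, the Aizenman--Molchanov exponential decay $\E\set{\abs{G_\La(x,y;E+i0)}^s}\le C\e^{-\gamma\abs{x-y}}$ for some $s\in(0,1)$, uniformly in the volume $\La$. Splitting the product $G_\La(x,u)\,G(v,y)$ by the H\"{o}lder inequality for expectations and applying these two bounds to the separate factors---so that the statistical dependence of $G$ and $G_\La$ never has to be resolved---produces $\e^{-\gamma'\abs{x-u}}\e^{-\gamma'\abs{v-y}}$. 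Since $u,v\in\partial\La$ force $\abs{x-u}\ge\dist(\{x\},\partial\La)$ and $\abs{v-y}\ge\dist(\{y\},\partial\La)$, summing over the $O(\abs{\partial\La})$ boundary bonds yields exactly $C\abs{\partial\La}\,\e^{-\widetilde\gamma(\dist(\{x\},\partial\La)+\dist(\{y\},\partial\La))}\le C\abs{\partial\La}\,\e^{-\widetilde\gamma R}$, which is \eqref{eq:rSch}.

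The main obstacle is the energy integration in the previous step: because $\mu$ (equivalently the endpoints of $I$) sits inside the dense pure-point part of the spectrum, the contour cannot be deformed off the real axis, and one must control the boundary values of the fractional moments on the spectrum itself. This is precisely the content of the fractional-moment method under \eqref{hoc}, and it is the only place where the H\"{o}lder regularity of $F$ enters; the remaining estimates are the routine boundary summations carried out above, so the argument reduces to an application of the a priori and decay bounds for cases (a)--(c).
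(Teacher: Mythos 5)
Your handling of clause (ii) of Assumption \ref{assump1} is exactly the paper's: $\widehat P_\La=P^{\La}$ is a measurable function of $\{Q(x,\omega)\}_{x\in\La}$ alone, so disjointness of $\La$ and $\La_g$ gives independence, and invariance of the law under lattice isometries gives identical distribution. The skeleton of your clause (i) is also the paper's: the second resolvent identity localizes $G-G^{\La}$ on the $O(\abs{\partial\La})$ boundary bonds, the Cauchy--Schwarz splitting of the expectation of $\abs{G(\zeta;x,u)G^{\La}(\zeta;v,y)}^{s/2}$ sidesteps the statistical dependence of $G$ and $G^{\La}$, the fractional-moment bounds \eqref{dynloc}--\eqref{dynlocf} (valid under \eqref{hoc} in cases (a)--(c)) supply the exponential decay, and the boundary summation yields $C\abs{\partial\La}\e^{-\widetilde\gamma R}$.

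The genuine gap is the energy integration. You propose to ``represent $(P-\widehat P_\La)(x,y)$ as an energy integral of this resolvent difference via the Helffer--Sj\"{o}strand formula applied to a smoothed indicator of $I$.'' That identity is false as stated: Helffer--Sj\"{o}strand computes $\varphi(H)-\varphi(\widetilde H_\La)$ for the \emph{smoothed} function $\varphi\neq\chi_I$, and the leftover terms $(\chi_I-\varphi)(H)$ and $(\chi_I-\varphi)(\widetilde H_\La)$ are (weighted) spectral projections onto windows around the endpoints of $I$. Since those endpoints lie in the almost surely dense pure-point spectrum, these leftovers are generically nonzero, and they are \emph{not} controlled by fractional moments of the Green function; one needs a Wegner-type estimate $\E\{\mathcal E_H(W)(x,x)\}\le C\abs{W}^\tau$ (which does follow from \eqref{hoc}) plus an optimization over the smoothing scale, which degrades the exponent $\widetilde\gamma$ --- repairable, but entirely absent from your argument. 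Your closing paragraph misidentifies the obstacle: uniformity of \eqref{dynloc}--\eqref{dynlocf} in $\Im\zeta$ down to the real axis is precisely what the fractional-moment method delivers and is not the issue; the issue is the discrepancy between the sharp indicator and its smoothing. The paper avoids smoothing altogether: since a fixed real energy is almost surely not an eigenvalue of $H$ or $H_\La$, the Riesz contour representations \eqref{PHC}--\eqref{PHLC} are exact with probability $1$ even though the rectangular contour $K$ crosses the real axis at the endpoints of $I$, and the crossing is rendered harmless by the interpolation
\begin{equation*}
\abs{G-G^{\La}}=\abs{G-G^{\La}}^{s/2}\,\abs{G-G^{\La}}^{1-s/2}\le \abs{G-G^{\La}}^{s/2}\pa{2/\abs{\Im\zeta}}^{1-s/2},
\end{equation*}
which leaves only the integrable singularity $\abs{\Im\zeta}^{s/2-1}$ on $K$. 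To complete your proof you must either adopt this contour argument or supply the missing Wegner estimate and the smoothing-error analysis (and state the resulting, weaker but still sufficient, exponent in \eqref{eq:rSch}).
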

\begin{proof}
It follows from the spectral theorem that if
\[G(\zeta):=(H-\zeta)^{-1}=\{G(\zeta;x,y)
\}_{x,y\in \mathbb{Z}^{d}},\ G^{\La}(\zeta):=(H_{\La}-\zeta)^{-1}
=\{G^{\La}(\zeta;x,y)\}_{x,y\in \La},\] and \[\widehat{ G}_{\La}(\zeta):=(H_{\La}\oplus
H_{\La^{c}}-\zeta)^{-1}=\{\widehat{G}_{\La}
(\zeta;x,y)\}_{x,y\in \mathbb{Z}^d}
\]
are the resolvents of $H,\ H_{\La}$ and $H_{\La}\oplus H_{\La^{c}}$,
then we have:
\begin{equation}
P(x,y)=\frac{1}{2\pi i}\oint_{K}G(\zeta ;x,y)d\zeta,   \; x,y \in \mathbb{Z}^d
\label{PHC}
\end{equation}%
and%
\begin{equation}
P^{\La}(x,y)=\frac{1}{2\pi i}\oint_{K}G^{\La}(\zeta
;x,y)d\zeta,  \; x,y \in \La \label{PHLC},
\end{equation}%
with probability $1$, \cite{Ai-Gr:98}. Here $K$ is a rectangular contour,
which encircles $I$ and crosses
transversally the real axis at the endpoints of $I$ (recall that the
probability that a given point of the real axis is an eigenvalue of $H$ is zero,
cf.\ Theorems 2.10, 2.12 and 4.21 in \cite{Pa-Fi:92}.

We will now use the resolvent identity
\begin{equation*}
G(\zeta)-\widehat{ G}_{\La}(\zeta)=G(\zeta)(H_{\La}\oplus H_{\La^{c}}-H)
\widehat{ G}_{\La}(\zeta),
\end{equation*}%
taking into account that the non-diagonal parts of $H$ and $H_{\La%
}\oplus H_{\La^{c}}$ are $-\Delta $ (see \eqref{DS}) and $-\Delta _{%
\La}\oplus \Delta _{\La^{c}}$, respectively. This gives
for  $x,y \in \Lambda$
\begin{align*}
&\hspace{-0.5cm} G(\zeta ;x,y)-G^{\La}(\zeta ;x,y)=G(\zeta ;x,y)-\widehat{ G}_{\La}(\zeta ;x,y)\\&=-\sum_{(u,v)\in \widetilde{%
\La}}G(\zeta ;x,u)\widehat{ G}_{\La}(\zeta ;v,y)=-\sum_{(u,v)\in \widetilde{%
\La}}G(\zeta ;x,u)G^{\Lambda }(\zeta ;v,y),
\end{align*}%
where $\widetilde{\La}$ is a collection of bonds between  points $%
v\in \partial \La$ and their nearest neighbors in $\La^{c}$.
This and the elementary inequality
\begin{equation*}
\Big(\sum |a_{j}|\Big)^{s}\leq \sum |a_{j}|^{s},\quad s\in \lbrack 0,1]
\end{equation*}%
give
\begin{equation*}
|G(\zeta ;x,y)-G^{\La}(\zeta ;x,y)|^{s/2}\leq \sum_{(u,v)\in
\widetilde{\La}}|(G(\zeta ;x,u)G^{\La}(\zeta
;v,y)|^{s/2},\;\;x,y\in \La.
\end{equation*}%
Taking expectation and using the H\"{o}lder
inequality in the r.h.s., we obtain
\begin{align}
& \mathbf{E}\{|G(\zeta ;x,y)-G^{\La}(\zeta ;x,y)|^{s/2}\}
\label{ress} \\
& \leq \sum_{(u,v)\in \widetilde{\La}}\Big(\mathbf{E}\{|G(\zeta
;x,u)|^{s}\}\,\mathbf{E}\{|G^{\La}(\zeta ;v,y)|^{s}\}\Big)^{1/2},\;\;x,y\in
\La.  \notag
\end{align}%
We will now use one of basic results of spectral theory of random
Schr\"{o}dinger operator, known as the fractional moment decay, see
\cite{Ai-Gr:98,Ai-Co:01,Ai-Wa:15}. It is the bound%
\begin{equation}
\mathbf{E}\left\{ |G(\lambda + i \varepsilon
;x,y)|^{s}\right\} \leq C\,(s)\mathrm{e}^{-\widetilde{\gamma}(s)\left\vert
x-y\right\vert },\; x,y\in \mathbb{Z}^{d},  \label{dynloc}
\end{equation}%
valid under the assumptions of the lemma, for some $s \in (0,\tau), \; C(s)<\infty,
\; \widetilde{\gamma}(s)>0$ and  uniformly in $\varepsilon \in
\mathbb{R}_{+}$ and  $\lambda\in I$. Similarly,  under the same
conditions we have
\begin{equation}
\mathbf{E}\{|G^{\La}(\lambda + i \varepsilon
;x,y)|^{s}\}\leq C\,(s)\mathrm{e}^{-\widetilde{\gamma}(s)\left\vert
x-y\right\vert }<\infty ,\;\lambda \in I,\;x,y\in \La.
\label{dynlocf}
\end{equation}%
These bounds, combined with the estimate
\begin{equation*}
| (A-\zeta )^{-1}(x,y)| \leq \Vert (A-\zeta )^{-1}\Vert \leq |\Im
\zeta |^{-1},
\end{equation*}%
valid for the resolvent of any self-adjoint operators, and  \eqref{PHC} --
\eqref{ress} imply for $x,y\in \La$%
\begin{align*}
& \mathbf{E}\{|P(x,y)-P^{\La}(x,y)|\} \\
& \hspace{1cm}\leq \frac{1}{2\pi }\oint_{K}\mathbf{E}\{|G(\zeta ;x,y)-G^{\La%
}(\zeta .x,y)|^{s/2}\}|\Im \zeta |^{s/2-1}|d\zeta | \\
& \hspace{1cm}\leq C_{0}|\partial \La|\mathrm{e}^{-\widetilde{\gamma }%
R}\oint_{K}|\Im \zeta |^{s/2-1}||d\zeta |\leq C|\partial \La|\mathrm{%
e}^{-\widetilde{\gamma }R}.
\end{align*}
Hence we  verified the condition \eqref{eq:rSch} of Assumption \ref{assump1} for $\widehat{P}_{\La}(\omega)=P^{\La}(\omega)$ in this context. The condition \eqref{xih} for such $\widehat{P}_{\La}(\omega)$ readily follows from the fact that for any $\La \in \mathbb{Z}^d$ the
projection $P^{\La}(\omega)$ depends only on the collection $\{V(x,\omega)\}_{x \in \La}$. Thus, if
$\La \cap \La_g=\emptyset$, then $P^{\La}(\omega)$ and $P^{\La_h}(\omega)$ are independent
and $\xi$ and $\xi_g$ in \eqref{xih} are i.i.d. since $\{V(x)\}_{x \in \mathbb{Z}^d}$
are i.i.d. variables.
   \end{proof}


\begin{remark}\label{r:frmom}
The condition of the lemma on the probability law of the i.i.d. potential
seem rather special, although they are  easy to check. For more general
but more involved conditions as well as for more general random operators
for which the basic bounds \eqref{dynloc} -- \eqref{dynlocf} hold see
\cite{Ai-Wa:15} and references therein. It is worth also mentioning that
\eqref{dynloc} -- \eqref{dynlocf} imply various other properties, which
are commonly associated with Anderson localization: spectral an dynamical
localization, exponential decay \eqref{b_P} of the projection kernel,
local Poisson statistic of eigenvalues, etc. (see  \cite{Ai-Wa:15,Ge-Ta:12,St:11}
for details and references).
\end{remark}

\section{Conclusions and Outlook}\label{concl}

Here, we discuss our results and some interpretations and implications thereof.
In this paper, we have proved  that entanglement entropy of free $d$-dimensional
disordered fermions satisfies the area law in the mean for any   $d \ge 1$  when
the Fermi energy lies in the exponentially localized part of the spectrum of the
one body Hamiltonian.
We  have also shown that for $d\ge 2$ fluctuations of
the entanglement entropy per unit surface area vanish as the block size
tends to infinity, i.e., that the entanglement entropy is selfaveraging for $d\ge 2$.

 The area law fails for translation invariant (clean) systems, which exhibit  so
 called logarithmic corrections to the  area
law (see e.g.\ \eqref{logd} and also
\cite{Am-Co:08,Ca-Co:11,Gi-Kl:06,He-Co:09,Le-Co:13,Le-Co:15,Pe-In:09,Wo:06}).
The difference in these two cases can be attributed to the inhibition of quantum
correlations (quantum coherence) of free fermions in the presence of  short-range
correlated spatial noise.  Mathematically,  the large block
behavior of  the entanglement entropy  for free fermions  is controlled by the
rate of decay at infinity  of the off-diagonal matrix elements of the Fermi
projection \eqref{P} (see also (\ref{PEI}))
of the associated one body Hamiltonian. In the clean case, this decay is slow
(see e.g.\ \eqref{pconv}),  since  the corresponding eigenstates  are
the plane waves. Consequently, in the clean case  free fermions display long
range quantum correlations and logarithmic corrections \eqref{logd} to the
area law. On the other hand, when the Fermi energy lies  in the exponentially
localized part of spectrum  of the one body  Hamiltonian, then the
eigenstates and the  Fermi  projection  decay exponentially\ (see  \eqref{b_P}.
This leads to short range quantum correlations and the area law  discussed
in this paper.
%

Let us remark that for one-dimensional many body systems  exponential decay of
all multipoint correlations (exponential clustering property) implies an upper
bound on area law, \cite{Br-Ho:15}. As was observed first in \cite{Ha:04,Ha:07},
exponential clustering occurs for ground states of systems where the ground state
energy is isolated from the rest of the many-body spectrum.
On the other hand,  in the disordered case (at least for the disordered free fermions),
the exponential decay results not from the existence of the spectral gap but from the
gapless spectrum with exponentially localized eigenfunctions
of one body Hamiltonian.

From this perspective, the absence  of  the logarithmic corrections to the
area law
in disordered fermions systems is   reminiscent of the absence of the d.c.
conductivity  and certain phase transitions (rounding effects) in macroscopic disordered systems, see e.g.\ \cite{Ai-Co:12,Ai-Wa:15,LGP}.

In this paper, we follow a bipartite implementation of the quantum system
(\ref{SBE}) -- (\ref{bip}), in which one first takes the "macroscopic" limit
$N \rightarrow \infty $ for the whole composite $\mathfrak{S}$ of (\ref{SBE})
and then studies the large block asymptotics
of the entanglement entropy \eqref{SB}, for $L\rightarrow \infty $.  It is
worth noting that one can consider another implementation of (\ref{SBE}) -- (\ref{bip}),
in which  $N$ and $L$ tend to infinity
simultaneously, say
\beq\label{meso}
L \sim N^{\alpha },\ \alpha \in (0,1],
\eeq
see e.g.\ \cite{Po-Co:15} for the case $\alpha=1$.

For disordered systems considered in this paper, the finite size effects  associated
with the "meso"-scaling (\ref{meso}) are negligible because of the exponential
localization, see (\ref{dynloc} -- \eqref{dynlocf}) and the   discussion in
\cite{Ai-Co:01,Ai-Wa:15,Ge-Ta:12}. Consequently, the  scaling  (\ref{meso})
of the whole bipartite composite and its block leaves our results essentially
unchanged.

On the other hand, the large block behavior of the entanglement entropy of
translation invariant  (clean) systems may be different in the scaling
(\ref{meso}). This
is already seen from the following simple argument. Set $\mathcal{D}=[-N,N]$
and $\Lambda =[-L_{1},L_{2}]\subset [-N,N]$ and take as one body Hamiltonian
the one dimensional
discrete Laplacian  $\Delta ^{\mathcal{D}}$. Then it follows from Lemma \ref{l:h0} (iv)
 that there exists $C>0$ such that
\[S_{\Lambda }\geq C|L_{2}-L_{1}|,\
L_{1,2}=c_{1,2} N^{\alpha _{1,2}},\ \alpha
_{1,2}\in (0,1].\]
Such behavior  is typical for the so called thermal entanglement
\cite{Am-Co:08,Ei-Co:11,Le-Co:15}, i.e.,\ for $S_{\Lambda }$ given by
\eqref{SB} with the Fermi projection $P$  replaced by a smooth function
of the Hamiltonian, \cite{Ki-Pa:15}, the
Fermi distribution $(1+\mathrm{e}^{\beta(t-\mu)})^{-1}, \; \beta < \infty$
in particular.
Note that $|L_1-L_2|$ is not the length (one dimensional volume) of our block
$[-L_1,L_2], $
 but rather a measure of its asymmetry inside our system $[-N,N]$.

 The results presented in Section 2 concern the von Neumann entanglement
 entropy \eqref{ente}, which for  free fermion systems can be written as
 the one body quantity \eqref{SB},  discussed in \eqref{SBD} -- \eqref{PL} and \eqref{h0} -- \eqref{sh0}.
However, our more general results of Sections 3 and 4 treat the more general
quantity \eqref{eq:FM}. Thus, we can apply the results of Section 3 and 4 to
the quantum analog of the R\'enyi  entropy (R\'enyi entanglement entropy),
\begin{equation}\label{Rege}
R^{(\alpha)}_\Lambda= (\alpha-1)^{-1}\log \mathrm{tr}_\Lambda \;
\rho_\Lambda^\alpha, \; \alpha \in (0, \infty), \; \alpha \neq 1,
\end{equation}
which, for free fermion systems, can be written as
\begin{equation}\label{Refe}
R^{(\alpha)}_\Lambda= \tr r_{\alpha} (P_\La), \; r^{(\alpha)} (t)
=(\alpha-1)^{-1} \log(t^\alpha + (1-t)^\alpha),
\end{equation}
i.e., again as a one body quantity.

We remark that  \eqref{Rege} converges to \eqref{ente} as   $\alpha \to 1$
provided that $\log_2$ in \eqref{ente} is replaced by $\log$, the base $e$ (natural)
logarithm. Similarly,  \eqref{Refe} converges to the expressions given by  \eqref{SB}
and \eqref{h} as $\alpha \to 1$ with the same replacement.
It is known,  \cite{Le-Co:13}, that the large block behavior of the R\'enyi
entanglement entropy of free translation invariant fermions is analogous to
that of the von Neumann entanglement entropy, i.e.,\ it has logarithmic
corrections to the  area law.

Since the function $r^{(\alpha)}_0$, defined by the relation
$
r^{(\alpha)}(t)=r^{(\alpha)}_0(t(1-t))
$
(cf.\ with its analogue $h_0$ in \eqref{h0}), satisfies Condition \ref{c:sob},
the results of Section 3 apply.
Thus, the large block behavior of the R\'enyi entanglement entropy of the disordered
free fermions is similar to that of the von Neumann entropy. In particular, the
expectation of the entanglement R\'enyi entropy satisfies the area law for any $d \ge
1$, the  entropy itself
is selfaveraging for $d \ge 2$, and   is not selfaveraging for $d=1$.


Finally, let us comment on a link of our results with the Szeg\H{o} theorem.
Recall that the theorem considers the large box asymptotics of $%
\mathrm{Tr\;}\varphi (A_{\Lambda })$, where  $A_\Lambda$ is the restriction
to $\Lambda \in \mathbb{Z}^d$ (see (\ref{Lad}) of a selfadjoint convolution
operator $A=\{A(x-y)\}_{x,y\in \mathbb{Z}%
^{d}}$  and $\varphi :%
\mathbb{R}\rightarrow \mathbb{C}$ is a function. It is known that
\cite{Bo-Si:90,Si:12}
\begin{equation}
\mathrm{Tr\,}\varphi (A_{\Lambda })=L^{d}\int_{\mathbb{T}^{d}}\varphi
(a(p))dp+
L^{d-1}\Psi _{d}+o(L^{d-1}),\;L\rightarrow \infty,
\label{szd}
\end{equation}%
where $a:\mathbb{T}^{d}\rightarrow \mathbb{R}$ is the Fourier transform of $%
\{A(x)\}_{x\in \mathbb{Z}^{d}}$ and $\Psi _{d}$ is a functional of $a$ and $%
\varphi $. The functions $a$ and $\varphi $ are known as the symbol of $A$ and
the test function. It is important to bear in mind that \eqref{szd} is valid
only if $a$ and
$\varphi $ are regular enough. However, if $a$ is piece-wise constant, then
the second (sub-leading) term of the formula is $\widetilde{\Psi }_{d}\
L^{d-1}\log L$. Not surprisingly, these asymptotic formulas are directly related
to the entanglement entropy and were in fact  strongly motivated
by quantum information theory \cite{Gi-Kl:06,He-Co:09,Le-Co:13,So:10,So:13,Wo:06}.

Let us confine ourselves to the case $d=1$ and consider the operator of
multiplication by $p$ in $L^{2}(\mathbb{T})$ as the symbol of the
self-adjoint operator $\widehat{p}$ in $\ell ^{2}(\mathbb{Z})$. Then we can
write the r.h.s. of \eqref{szd} as $\mathrm{Tr\,}\varphi ((a(\widehat{p}%
))_{\Lambda })$. This naturally leads us to consider a more general setting,
where one chooses a "standard" convolution operator $B$ and studies the
asymptotic behavior of $\mathrm{Tr\;}\varphi ((a(B))_{\Lambda })$ as $%
|\Lambda |:=L\rightarrow \infty $ and its dependence on the pair $(a,\varphi
)$. Note that convolution operators are  a particular case of ergodic operators
\eqref{eop}, namely, with $\Omega = \{0\}$. Thus, we can extend the above
general setting for the Szeg\H{o} theorem to ergodic operators by just choosing
a certain "standard" ergodic operator, say the discrete Schr\"{o}dinger
operator $H$ with ergodic potential \eqref{DS}, and study the asymptotics of
the random variable $\mathrm{Tr\,}\varphi ((a(H))_{\Lambda })$. A particular
case of this setting, where $a$ and $\varphi $ are smooth enough, was
considered in \cite{Ki-Pa:15}. In this case, whenever the potential is i.i.d.\
random, it was found that for $d=1$ the subleading term in the analog of
\eqref{szd} is not $\Psi _{1}$ but $L^{1/2}$ times a suitable Gaussian
random variable. Likewise, the case in which $\varphi =h$ (with $h$ defined in \eqref{h})
and $a=\chi _{(-\infty,\mu ]}$ corresponds to the entanglement entropy
\eqref{SB}, and Results \ref%
{t:expvan} -- \ref{t:var} establish
several new asymptotic forms of corresponding
traces in this (non-smooth) case.

\bigskip \noindent
\textbf{Acknowledgement} We wish to express our special thanks to J. Fillman for
careful reading of a draft of this manuscript, numerous corrections and suggestions
that markedly improved the final version. We are grateful to A. Sobolev for numerous
discussions and for drawing our attention to his work \cite{So:15}, which allowed
us to make the paper more transparent and the results (especially Result \ref{t:var}
and Theorem \ref{thm:varia}) stronger. L.P. would like to thank the Isaac Newton
Institute for Mathematical Sciences (Cambridge) for its hospitality during the
program "Periodic  and  Ergodic  Spectral Problems", May 2015 supported by EPSRC
Grant Number EP/K032208/1 and the Erwin Schr\"{o}dinger Institute (Vienna) for its
hospitality during the program "Quantum Many Body Systems, Random Matrices
and Disorder", July 2015. Financial support of grant 4/16-M of the National
Academy of Sciences of Ukraine is also acknowledged. 
A.E. is supported in part by NSF under grant DMS-1210982.

\end{document}